\newtheorem{thm}{Theorem}
\newtheorem{lem}[thm]{Lemma}
\newtheorem{obs}[thm]{Observation}
\newtheorem{cor}[thm]{Corollary}
\newtheorem{prop}[thm]{Proposition}
\newtheorem{rrule}[thm]{Reduction Rule}
\theoremstyle{nonumberplain}
\newtheorem{proof}{Proof}
\crefname{obs}{Observation}{Observations}
\crefname{lem}{Lemma}{Lemmas}
\crefname{thm}{Theorem}{Theorems}
\crefname{cor}{Corollary}{Corollaries}
\crefname{prop}{Proposition}{Propositions}
\crefname{figure}{Figure}{Figures}
\crefname{section}{Section}{Sections}
\crefname{equation}{}{}
\crefname{rrule}{Reduction Rule}{Reduction Rules}
\tikzset{
  every label/.style={inner sep=0pt},
  label distance=2pt,
  vertex/.style={circle,draw=white,ultra thick,fill=black!40!white,inner sep=0pt,minimum size=6pt},
  leaf/.style={vertex,fill=red!80!black!60!white},
  subtree/.style={thick,draw=black!40!white,thin,fill=black!20!white},
  edge/.style={very thick,draw=black},
  bold edge/.style={ultra thick,draw=black},
  thin edge/.style={thick,densely dashed,draw=black},
  broken edge/.style={edge,densely dashed},
  red node/.style={fill=red!80!black!60!white},
  blue node/.style={fill=blue!80!black!60!white},
  red edge/.style={draw=red!80!black!60!white},
  blue edge/.style={draw=blue!80!black!60!white},
  in subtree/.append style={draw=black!20!white}
}
\newcommand{\dmp}{d_{\textrm{MP}}}
\newcommand{\dtbr}{d_{\textrm{TBR}}}
\newcommand{\PS}[2]{l_{#1}(#2)}
\newcommand{\ES}[2]{\Delta_{#1}(#2)}
\newcommand{\FS}[2]{\PS{#1}{#2}}
\newcommand{\NN}{\mathbb{N}}
\renewcommand{\L}{\mathcal{L}}
\newcommand{\dprime}{'\!'}
\newcommand{\trange}{\NN_{\ge 2}^\infty}
\newcommand{\poly}{\mathrm{poly}}
\newif\ifnotes
  \newcommand{\nznotes}[2][]{\todo[backgroundcolor=yellow,linecolor=black,bordercolor=black,caption={},#1]{\raggedright NZ: #2}}
  \newcommand{\mjnotes}[2][]{\todo[backgroundcolor=cyan,linecolor=black,bordercolor=black,caption={},#1]{\raggedright MJ: #2}}
  \newcommand{\lvinotes}[2][]{\todo[backgroundcolor=green,linecolor=black,bordercolor=black,caption={},#1]{\raggedright LvI: #2}}
  \newcommand{\leo}[1]{\textcolor{green}{#1}}
    \newcommand{\markj}[1]{\textcolor{blue}{#1}}
    \newcommand{\nz}[1]{\textcolor{magenta}{#1}}
  \newcommand{\nznotes}[2][]{}
  \newcommand{\mjnotes}[2][]{}
  \newcommand{\lvinotes}[2][]{}
  \newcommand{\leo}[1]{#1}
\newcommand{\markj}[1]{#1}
\newcommand{\nz}[1]{#1}
\begin{document}


\title{A Near-Linear Kernel for Two-Parsimony Distance}
\author[1]{Elise Deen}
\author[1]{Leo van Iersel\footnote{Research of Leo van Iersel and Mark Jones was partially funded by Netherlands Organization for Scientific Research (NWO) grant OCENW.KLEIN.125.}}
\author[1]{Remie Janssen}
\author[1]{Mark Jones\footnote{Corresponding author, email: m.e.l.jones@tudelft.nl}}
\author[1]{Yuki Murakami}
\author[2]{Norbert Zeh}
\affil[1]{Delft Institute of Applied Mathematics, Delft University of Technology, The Netherlands}
\affil[2]{Faculty of Computer Science, Dalhousie University, Halifax, Canada}
\maketitle

\begin{abstract}\noindent
  \markj{The maximum parsimony distance $\dmp(T_1,T_2)$ and the \leo{bounded-state} maximum parsimony distance $\dmp^t(T_1,T_2)$ measure the difference between two phylogenetic trees $T_1,T_2$ in terms of the maximum difference between their parsimony scores for any character (with $t$ a \leo{bound} on the number of states in the character, in the case of $\dmp^t(T_1,T_2)$). While computing $\dmp(T_1, T_2)$ was previously shown to be fixed-parameter tractable with a linear kernel, no such result was known for $\dmp^t(T_1,T_2)$. In this paper,}
  we prove that computing 
  $\dmp^t(T_1, T_2)$ 
  is fixed-parameter tractable \leo{for all~$t$}.
  Specifically,
  we prove that this problem has a kernel of size $O(k \lg k)$, where $k =
  \dmp^t(T_1, T_2)$. As the primary analysis tool, we introduce the concept of
  leg-disjoint incompatible quartets, which may be of independent interest.
\end{abstract}

\section{Introduction}

Parsimony \cite{fitchDefiningCourseEvolution1971} is a popular tool in
bioinformatics used to measure how closely a phylogenetic tree $T$ matches some
data associated with its leaves (e.g., DNA sequences of the taxa represented by
the leaves). Abstractly, given a labelling $f : \L(T) \rightarrow S$, called a
\emph{character}, where $\L(T)$ is the set of leaves of $T$, and $S$ is a set of
labels or \emph{states}, the goal is to extend this labelling to the internal
vertices of the tree so that the number of edges whose endpoints have different
labels is minimized.  Intuitively, these edges, called \emph{mutation edges},
reflect the number of mutation events necessary to explain the observed data
under the assumption that the tree reflects the evolution of the taxa
represented by the leaves 
\markj{(with internal vertices representing speciation events).}

Recently, Fischer and Kelk \cite{fischerMaximumParsimonyDistance2016} introduced
the maximum parsimony distance $\dmp$ as a new measure of (dis)similarity of two
phylogenetic trees $T_1$ and $T_2$ with the same leaf set $X = \L(T_1) =
\L(T_2)$. This distance is defined as the maximum difference between the parsimony scores of the two trees, where the maximum is taken over all possible
characters. Fischer and Kelk also introduced the bounded-state variant
$\dmp^t$, where the maximum is taken over all possible characters with at most
$t$ states. They proved that the problems of computing $\dmp$ and $\dmp^t$ for
$t \geq 2$ are both NP-hard \cite{fischerMaximumParsimonyDistance2016}, and that
this holds even when the trees are binary \cite{kelkComplexityComputingMP2017}. 

\leo{Appealing properties of these similarity measures include that they are related to the popular optimization criterion maximum parsimony as well as to rearrangement operations as subtree prune and regraft (SPR) and tree bisection and reconnection (TBR)~\cite{fischerMaximumParsimonyDistance2016,bruen2008parsimony}. In addition, from a computational perspective it is useful that lower bounds can easily be computed by considering a particular character. This contrasts the situation for SPR and TBR distance where upper bounds can be found by providing a sequence of SPR/TBR moves turning~$T_1$ into~$T_2$.}

For $\dmp$, which does not impose a bound on the number of states used by the
optimal character, some algorithmic results are known. Kelk and Stamoulis
\cite{kelkNoteConvexCharacters2017} gave a single-exponential algorithm for
calculating $\dmp(T_1,T_2)$ with running time $O(\phi^n \cdot \poly(n))$, where
$n$ denotes the number of taxa in $T_1$ and $T_2$, and $\phi \approx 1.618$ is
the golden ratio. Kelk et al.\ \cite{kelkReductionRulesMaximum2016} showed that
the well-known cherry and chain reduction rules (with minimum chain length 4)
are safe for $\dmp$. These rules were previously used (with chain length 3) to
give a linear kernel for  the tree bisection and reconnection (TBR) distance
$\dtbr(T_1, T_2)$ \cite{allenSubtreeTransferOperations2001}.  As observed by
Kelk et al.\ \cite{kelkReductionRulesMaximum2016}, the results of
\cite{kelkNoteConvexCharacters2017,kelkReductionRulesMaximum2016,allenSubtreeTransferOperations2001}
together imply that $\dmp$ is fixed-parameter tractable (FPT) with respect to
$\dtbr(T_1, T_2)$.  Finally, Jones, Kelk, and Stougie
\cite{jonesMaximumParsimonyDistance2021} proved that the kernel produced by the
reduction rules of Kelk et al.\ has a size that is linear also in $\dmp(T_1,
T_2)$.  The results of
\cite{kelkReductionRulesMaximum2016,jonesMaximumParsimonyDistance2021}
imply in particular that $\dtbr(T_1, T_2)$ and $\dmp(T_1, T_2)$ differ by at
most a constant factor, for any two trees $T_1$ and $T_2$ over the same set of
taxa~$X$ \markj{\cite[Theorem 5]{jonesMaximumParsimonyDistance2021}.}\lvinotes{What does ``constant'' mean here? Do there exist constants~$c,d\in\mathbb{R}$ such that for all pairs of trees~$T_1,T_2$ we have $c\cdot\dmp(T_1, T_2) \leq \dtbr(T_1, T_2) \leq d\cdot\dmp(T_1, T_2)$? MJ: Yes!}

\leo{In this paper we focus on computing $\dmp^t$, the variant of $\dmp$ where the number of states is bounded. This is arguably the most biologically relevant version since biological data usually has a bounded number of states (eg. $4$ for DNA). However, to the best of our knowledge, for $\dmp^t$ no results beyond NP-hardness were known prior to this paper.}
\markj{Our main result is}
that the maximum $t$-state parsimony distance
$\dmp^t(T_1, T_2)$ between two trees $T_1$ and $T_2$ has a near-linear kernel.
Specifically, we show that there exists a polynomial-time algorithm reducing a
pair of trees $(T_1, T_2)$ to a pair $(T_1', T_2')$ such that $\dmp^t(T_1',
T_2') = \dmp^t(T_1, T_2)$, and $T_1'$ and $T_2'$ have $O(k \lg k)$
leaves,\footnote{Throughout this paper, we use the definition that $\lg x =
\max(1,\log_2 x)$.} where $k = \dmp^t(T_1, T_2)$.  We prove this in two steps:

First, we prove that the reduction rules used by Kelk et
al.~\cite{kelkReductionRulesMaximum2016} are safe also for $\dmp^t$.  This
implies that $\dmp^t$ has a kernel for which the number of leaves $|X|$ is
linear in $\dtbr(T_1, T_2)$.  Proving this follows the same ideas used by Kelk
el al.\ but requires some care to make the arguments work for as few as two
states; the arguments used by Kelk et al.\ relied on constructing a character with
a potentially large number of states.  Moreover, our proof also implies that the
reduction rules used by Kelk et al.\ are safe for $\dmp$, but it is
significantly shorter than the original proof by Kelk et al.

Second, we prove that $\dtbr(T_1, T_2) \in O(\dmp^t(T_1, T_2) \cdot \lg |X|)$.
These two results imply that the kernel has size $O(k \lg k)$, where $k =
\dmp^t(T_1, T_2)$.  The constants in our construction are fairly large, and we
do believe that they can be improved.  This does, however, require additional
insights into how to prove that a large kernel implies that the parsimony
distance between the two trees is high.

Our proof that $\dtbr(T_1, T_2) \in \markj{O(\dmp^t(T_1, T_2) \cdot \lg |X|)}$ \mjnotes{changed from $O(k \lg k)$}
is noteworthy for two reasons.
First, while the previous result on kernelization for $\dmp$ implies that
\markj{$\dtbr(T_1, T_2) \in O(\dmp(T_1, T_2))$,}
it establishes this relationship
indirectly, via the linear kernel.  In contrast, our proof starts with an
agreement forest (AF), which provides an upper bound on the TBR distance
\cite{allenSubtreeTransferOperations2001}, and then uses this AF to construct a
large set of incompatible quartets that lead to a high parsimony distance.

Second, the proof that the kernel produced by the reduction rules by Kelk et
al.\ has size linear in $\dmp(T_1, T_2)$
\cite{jonesMaximumParsimonyDistance2021}
\markj{relied on finding pairwise \emph{disjoint} conflicting quartets between the two trees. Each such quartet contributes 1 to $\dmp(T_1,
T_2)$, so to show that $\dmp(T_1,T_2)\geq k'$ it is enough to find $k'$ pairwise disjoint conflicting quartets.}
Our key insight is that it suffices to construct a set of
incompatible quartets that satisfy a much weaker disjointness condition in one
of the two trees and can interact arbitrarily in the other tree.  Such a set of
quartets does not give a parsimony distance that is at least the number of
quartets, but the parsimony distance is \markj{still} linear in the number of quartets.  We
present a primal-dual algorithm based on an ILP formulation of the maximum
agreement forest problem \cite{werschReflectionsKernelizingComputing2020} that
finds such a set $Q$ of incompatible quartets and an AF of size $O(|Q| \cdot
\lg|X|)$.  This establishes the key claim that $\dtbr(T_1, T_2) \in
O(\dmp^t(T_1, T_2) \cdot \lg |X|)$ mentioned earlier.

The remainder of this paper is organized as follows.  \Cref{sec:preliminaries}
introduces the necessary terminology and notation, and discusses previous
results we will build upon.  \Cref{sec:reduction-rules} provides our proof that
both cherry and chain reduction are safe for $\dmp^t$.  \Cref{sec:kernel-size}
proves our bound of the size of the kernel as a function of $\dmp^t$.
\Cref{sec:conclusions} offers conclusions and a discussion of future work.

\section{Preliminaries}

\label{sec:preliminaries}

\subsection{Definitions}

\paragraph{Phylogenetic trees, induced subtrees,  restrictions, \markj{pendant subtrees,} and parents.}



Throughout this paper, a \emph{tree on $X$} is an unrooted tree with leaf set
$X$ and whose internal vertices have degree at most~$3$. When $X$ is clear from
context, we refer to a tree on $X$ simply as a \emph{tree}.  A \emph{phylogenetic
tree} on $X$ is a tree on $X$ with no vertices of degree~$2$.

Given a tree $T$ on $X$ and a subset $Y \subseteq X$, the \emph{subtree of $T$
induced by $Y$}, $T(Y)$, is the smallest subtree of $T$ that contains all leaves
in~$Y$.  The \emph{restriction of $T$ to $Y$}, $T|_Y$, is obtained from $T(Y)$
by suppressing all degree-$2$ vertices in $T(Y)$.  To \emph{suppress} a
degree-$2$ vertex $v$ with neighbours $u$ and $w$ in a tree $T$ is to remove $v$
and its incident edges from $T$ and add the edge $(u,w)$ to $T$.  The inverse
operation is to \emph{subdivide} an edge $(u,w)$ in $T$ by deleting the edge
$(u,w)$ from $T$ and adding a new vertex $v$ along with two edges $(u,v)$ and
$(v,w)$ to $T$.  \markj{Given a subset $Y\subseteq X$, a  subtree $T'$ of $T$ is
a \emph{pendant subtree of $T(Y)$ in $T$} if $T'$ and $T(Y)$ are
vertex-disjoint, there exists an edge $(u,v)$ in $T$ with $u$ a vertex of $T(Y)$
and $v$ a vertex of $T'$, and no other edge has exactly one vertex in $T'$.
Though $T$ and $T'$ are unrooted, we call $v$ the \emph{root} of the pendant
subtree $T'$.}

Every leaf $v$ of a tree $T$ has a unique neighbour, which we call the
\emph{parent} of $v$ even though $T$ is unrooted.

\paragraph{Cherries and quartets.}

A \emph{cherry} of a tree $T$ on $X$  is a pair of leaves $(a,b)$ of $T$ with the
same parent.

A \emph{quartet} of a tree $T$ on $X$ is a subset $\{a,b,c,d\} \subseteq X$ of
size~$4$.  If the path from $a$ to $b$ in $T$ is disjoint from the path from $c$
to $d$ in~$T$, then the restriction $T|_{\{a,b,c,d\}}$ of $T$ to $\{a,b,c,d\}$
has the two cherries $(a,b)$ and $(c,d)$.  We write $T|_{\{a,b,c,d\}} = ab|cd$
in this case.

A quartet $q$ is \emph{compatible} with a pair of trees $(T_1, T_2)$ on $X$ if
$T_1|_q = T_2|_q$.  Otherwise, $q$ is \emph{incompatible} with $(T_1, T_2)$.

\paragraph{Tree bisection and reconnect distance and agreement forests.}

A {tree bisection and reconnect} (TBR) operation
\cite{allenSubtreeTransferOperations2001} on a phylogenetic tree $T$ deletes an
arbitrary edge $(u,v)$ from $T$, thereby splitting $T$ into two subtrees $T_u$
and $T_v$ that contain $u$ and $v$, respectively.  It then subdivides some edge
in $T_u$ and some edge in~$T_v$, thereby creating two new vertices $u' \in T_u$
and $v' \in T_v$, and reconnects $T_u$ and $T_v$ by adding the edge $(u',v')$.
Finally, it suppresses $u$ and $v$ (which have degree $2$ after deleting the
edge $(u,v)$).  If $u$ is a leaf, then there is no edge to subdivide in~$T_u$.
In this case, we set $u' = u$ and do not suppress $u$ after adding the edge
$(u',v')$.  The case when $v$ is a leaf is handled similarly.  The \emph{TBR
distance} $\dtbr(T_1,T_2)$ between two phylogenetic trees $T_1$ and $T_2$ is the
minimum number of TBR operations necessary to transform $T_1$ into~$T_2$.

An \emph{agreement forest} (AF) of two trees $T_1$ and $T_2$ on $X$ is a partition
$F = \{X_1, \ldots, X_k\}$ of $X$ such that
\begin{itemize}[noitemsep]
  \item $T_1|_{X_i} = T_2|_{X_i}$, for all $1 \le i \le k$,
  \item $T_1(X_i)$ and $T_1(X_j)$ are edge-disjoint for all $1 \le i < j \le k$, and
  \item $T_2(X_i)$ and $T_2(X_j)$ are edge-disjoint for all $1 \le i < j \le k$.
\end{itemize}

A \emph{maximum agreement forest} (MAF) of $T_1$ and $T_2$ is an agreement
forest with the minimum number of components $X_1, \ldots, X_k$.  We refer to
this number of components as the \emph{size} $|F|$ of the forest.  It was shown
by Allen and Steel \cite{allenSubtreeTransferOperations2001} that
$\dtbr(T_1,T_2) = |F| - 1$, for any MAF $F$ of $T_1$ and $T_2$.

\paragraph{Characters, extensions, states, parsimony, and maximum parsimony distance.}

Given a tree $T$ on~$X$, a \emph{character on $X$} is a mapping $f : X \rightarrow
S$, for some non-empty set~$S$.  We call the elements of $S$ \emph{states}.  We
say that $f$ is a \emph{$t$-state character} if $|S| = t$.  Note that there is
no requirement that $f(X) = S$.

An \emph{extension} of a character $f$ on $X$ to $T$ is a labelling $\bar f:
V(T) \rightarrow S$, where $V(T)$ denotes the set of vertices of $T$, such that
$f(v) = \bar f(v)$ for every leaf $v \in X$.

A \emph{mutation edge} of $T$ with respect to some extension $\bar f$ is an
edge $(u,v)$ such that $\bar f(u) \ne \bar f(v)$.  We use $\Delta_{\bar f}(T)$
to denote the number of mutation edges of $T$ with respect to~$\bar f$.

The \emph{parsimony score} $l_f(T)$ of $T$ with respect to some character $f$ is
defined as $l_f(T) = \min_{\bar f} \Delta_{\bar f}(T)$, where the minimum is
taken over all extensions $\bar f$ of~$f$.   We call an extension $\bar f$ of
$f$ to $T$ \emph{optimal} if $\Delta_{\bar f}(T) = l_f(T)$.

The \emph{(unbounded-state) maximum parsimony distance} $\dmp(T_1,T_2)$ between
two trees on $X$ is defined as $\dmp(T_1,T_2) = \max_f |l_f(T_1) - l_f(T_2)|$,
where the maximum is taken over all characters on~$X$.  The \emph{$t$-state
maximum parsimony distance} $\dmp^t(T_1,T_2)$ between $T_1$ and $T_2$ is defined
analogously, but the maximum is taken only over all $t$-state characters on~$X$.
Throughout this paper, we refer to the unbounded-state maximum parsimony
distance $\dmp(T_1, T_2)$ as $\dmp^\infty(T_1, T_2)$ to make it explicit that it
imposes no upper bound on the number of states used by the optimal character.

All results in this paper apply to $\dmp^t$ for any $t \in \NN \cup \{\infty\}$
that satisfies $t \ge 2$.  We refer to this set of valid values of $t$ as
$\trange$.

\paragraph{Parameterized problems, kernelization, and reduction rules.}

A \emph{parameterized problem} is a language $\L \subseteq \Sigma^* \times \NN$,
where $\Sigma$ is a fixed, finite alphabet.  For an instance $(\sigma, k) \in
\Sigma^* \times \NN$, $k$ is called the \emph{parameter} of $(\sigma, k)$.  We
call $(\sigma, k)$ a \emph{yes-instance} if $(\sigma, k) \in \L$.  Otherwise,
$(\sigma, k)$ is a \emph{no-instance}.  In the case of parsimony distance, the
string $\sigma$ encodes the pair of trees $(T_1, T_2)$ and the bound $t$ on the
number of states, so we refer to an instance $(\sigma, k)$ as the instance
$(T_1, T_2, t, k)$.  This instance is a yes-instance if $\dmp^t(T_1, T_2) \le
k$.

A \emph{kernelization algorithm}, or simply \emph{kernel}, for some
parameterized problem $\L$ is a polynomial-time algorithm which given an instance
$(\sigma, k) \in \Sigma^* \times \NN$, computes another instance $(\sigma', k')
\in \Sigma^* \times \NN$ such that
\begin{itemize}[noitemsep]
  \item $(\sigma, k) \in \L$ if and only if $(\sigma', k') \in \L$,
  \item $k' \le k$, and
  \item The size $|\sigma'| + k'$ of $(\sigma', k')$ is bounded by $f(k)$, where
  $f$ is some computable function $f : \NN \rightarrow \NN$.
\end{itemize}
The function $f$ is called the \emph{size} of the kernel.

Kernels are often obtained using repeated application of reduction rules.  A
\emph{safe reduction rule} is a polynomial-time algorithm which given an
instance $(\sigma, k) \in \Sigma^* \times \NN$, computes a strictly smaller
instance $(\sigma', k') \in \Sigma^* \times \NN$ such that $(\sigma, k) \in \L$
if and only if $(\sigma', k') \in \L$.  A reduction rule comes with a condition
or conditions that need to be satisfied for this rule to be applicable.  An
instance $(\sigma, k)$ is \emph{fully reduced} with respect to a set of
reduction rules if none of these rules is applicable to $(\sigma, k)$, that is,
if $(\sigma, k)$ does not satisfy the conditions associated with any of the
reduction rules.  A kernelization algorithm based on a set of reduction rules
repeatedly applies these rules until it obtains a fully reduced instance with
respect to these rules.  This is the kernel the algorithm returns.

\subsection{Fitch's Algorithm}

An optimal extension of a character on a tree $T$ can be computed in polynomial
time using the Fitch-Hartigan
algorithm~\cite{fitchDefiningCourseEvolution1971,hartiganMinimumMutationFits1973}.
The algorithm subdivides an arbitrary edge of $T$ and uses the vertex this
introduces as the root of the tree, thereby defining a parent-child relationship
on the vertices of the tree.  The algorithm now proceeds in two phases:

The \emph{bottom-up phase} assigns a candidate set of states to every vertex of~$T$.
For a leaf $v$ with state $f(v)$, its candidate set of states is $F(v) =
\{f(v)\}$.  For an internal vertex $u$ with children $v$ and $w$, its candidate
set $F(u)$ is defined as
\begin{equation*}
  F(u) = \begin{cases}
    F(v) \cup F(w) & \text{if } F(v) \cap F(w) = \emptyset\\
    F(v) \cap F(w) & \text{if } F(v) \cap F(w) \ne \emptyset
  \end{cases}.
\end{equation*}
In the first case, we call $u$ a \emph{union vertex}.
In the second case, we call it an \emph{intersection vertex}.

The function $F : V(T) \rightarrow 2^S$ is called the \emph{Fitch map} of~$f$,
and we will refer to the set $F(v)$ associated with a vertex $v$ as $v$'s
\emph{Fitch set}.

The second, \emph{top-down phase} uses the Fitch map to compute an optimal
extension $\bar f$ of $f$ to~$T$: For the root $r$ of~$T$, we choose an
arbitrary state $\bar f(r) \in F(r)$.  For any other vertex $v$ with parent~$u$,
we choose $\bar f(v) = \bar f(u)$ if $\bar f(u) \in F(v)$.  Otherwise, we choose
$\bar f(v)$ to be an arbitrary state in $F(v)$.  \markj{Finally, we suppress the
root of $T$ that was introduced at the start of the algorithm, keeping the same
assignment of states to all other vertices. Note that this does not change the
number of mutation edges in $T$, as the root is always assigned a state that is
assigned to at least one of its children.}

We call an extension $\bar f$ computed using the Fitch-Hartigan algorithm a
\emph{Fitch extension} of~$f$.  We will also refer to an extension $\bar f$ of
$f$ to the rooted version of $T$, before suppressing $r$, as a Fitch extension
of $f$.  The meaning will be clear from context.  Note that there are optimal
extensions of $f$ that are not Fitch extensions.

\begin{lem}[Hartigan \cite{hartiganMinimumMutationFits1973}]
  \label{lem:fitch-extension}
  A Fitch extension $\bar f$ of $f$ is an optimal extension of $f$, that is,
  $\ES{\bar f}{T} = \PS{f}{T}$.  Moreover, $\ES{\bar f}{T}$ equals the number of
  union vertices in \markj{the rooted version of}\/ $T$ with respect to $f$'s
  Fitch map~$F$.
\end{lem}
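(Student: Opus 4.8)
The plan is to prove the two claims of \Cref{lem:fitch-extension} together by a single bottom-up induction on the rooted version of $T$, simultaneously establishing that (a) a Fitch extension is optimal and (b) its number of mutation edges equals the number of union vertices. The clean way to organize this is to prove, for every vertex $v$ with Fitch set $F(v)$, a local invariant relating the cost of the subtree rooted at $v$ to whether a given target state at $v$ lies in $F(v)$. Specifically, for a subtree $T_v$ rooted at $v$ and any state $s \in S$, let $c_v(s)$ denote the minimum number of mutation edges in $T_v$ over all extensions that assign state $s$ to $v$. I would show by induction that $c_v(s) = m_v$ if $s \in F(v)$ and $c_v(s) = m_v + 1$ if $s \notin F(v)$, where $m_v := \min_{s} c_v(s)$ is the optimal cost of the subtree with $v$ unconstrained. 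This two-valued structure is exactly the content of the Fitch map, and it is what makes the greedy top-down choice optimal.

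First I would verify the base case at a leaf $v$: here $F(v) = \{f(v)\}$, $c_v(f(v)) = 0 = m_v$, and $c_v(s) = $ (undefined/irrelevant since the leaf is fixed), so the invariant holds trivially. For the inductive step at an internal vertex $u$ with children $v, w$, I would combine the child cost functions. For a chosen state $s$ at $u$, the cost is $c_u(s) = \min_{a}(c_v(a) + [a \ne s]) + \min_{b}(c_w(b) + [b \ne s])$, where $[\cdot]$ is an indicator counting a mutation edge on the child edge. Using the inductive two-valued form of $c_v$ and $c_w$, a short case analysis on whether $s \in F(v)$, $s \in F(w)$ yields the claim: in the intersection case ($F(v) \cap F(w) \ne \emptyset$), setting $F(u) = F(v) \cap F(w)$ gives $m_u = m_v + m_w$ and reproduces the two-valued invariant, with no new union vertex; in the union case ($F(v) \cap F(w) = \emptyset$), setting $F(u) = F(v) \cup F(w)$ forces $m_u = m_v + m_w + 1$, accounting for exactly one extra mutation edge, which is precisely the union vertex being counted.

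From the invariant at the root $r$, optimality of the Fitch extension follows because the top-down phase picks $\bar f(r) \in F(r)$, achieving cost $m_r = \PS{f}{T}$, and propagates the greedy rule $\bar f(v) = \bar f(u)$ whenever $\bar f(u) \in F(v)$, which the invariant guarantees never sacrifices optimality. For the mutation-edge count, I would observe that the induction simultaneously shows that the total cost $m_r$ equals the number of union vertices: each union vertex contributes exactly $+1$ to the running cost over the sum of its children's costs, and each intersection vertex contributes $+0$. A final remark handles the suppression of the artificial root $r$ introduced at the start, as noted in the excerpt: the top-down phase assigns $r$ a state in $F(r)$ that agrees with at least one child, so suppressing $r$ leaves the set of mutation edges unchanged, and the count is taken over the genuine internal vertices.

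The main obstacle I anticipate is the case analysis in the inductive step, and in particular making sure the two-valued invariant is \emph{exactly} preserved rather than merely giving inequalities. The subtlety is that the minima $\min_a(c_v(a) + [a \ne s])$ must be evaluated carefully: when $s \in F(v)$ one can take $a = s$ at cost $m_v$, but when $s \notin F(v)$ one must check that the cheapest option is either paying the mutation edge to reach some $a \in F(v)$ (cost $m_v + 1$, no edge mutation is avoidable) or keeping $a = s$ at cost $m_v + 1$ without the edge — both routes yield $m_v + 1$, so the invariant's ``$+1$ off the support'' form is what guarantees these coincide. Getting this bookkeeping right in all four sign combinations of $(s \in F(v), s \in F(w))$, and confirming that the definition of $F(u)$ via union/intersection is precisely the set on which $c_u$ attains its minimum, is where the real work lies; everything else is routine once that invariant is nailed down.
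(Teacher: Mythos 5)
The paper offers no proof of this lemma at all---it is quoted from Hartigan \cite{hartiganMinimumMutationFits1973}---so the benchmark is the standard dynamic-programming correctness argument for Fitch's algorithm, which is indeed the route you take. However, your proof as written contains a genuine gap: the exact two-valued invariant ``$c_v(s) = m_v + 1$ for every $s \notin F(v)$'' is \emph{false}, and you explicitly stake the argument on its exact preservation rather than on inequalities. A minimal counterexample: let $u$ be the parent of a cherry whose two leaves are both labelled $A$, and let $B \ne A$. Then $F(u) = \{A\}$ and $m_u = 0$, but $c_u(B) = 2 = m_u + 2$, since both pendant edges must be mutation edges. More generally, at any intersection vertex $u$ with children $v,w$, every state $s \notin F(v) \cup F(w)$ satisfies $c_u(s) = m_u + 2$, so your claim that the intersection case ``reproduces the two-valued invariant'' is exactly the step that fails. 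The same error surfaces in your anticipated-obstacle paragraph: when $s \notin F(v)$, the route ``keep $a = s$'' costs $c_v(s)$, which the (correct, weaker) induction only bounds below by $m_v + 1$; asserting that ``both routes yield $m_v+1$'' presupposes the exactness that does not hold.

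The repair is small and salvages your outline. The quantity that genuinely is two-valued is not $c_v$ but the edge-adjusted cost $g_v(s) := \min_a \bigl(c_v(a) + \mathbf{1}[a \ne s]\bigr)$, the contribution of $v$'s subtree plus the edge above $v$ when $v$'s parent receives state $s$: one has $g_v(s) = m_v$ if $s \in F(v)$ and $g_v(s) = m_v + 1$ otherwise. Proving this needs only the weaker invariant that $F(v)$ is \emph{exactly} the set of states attaining $m_v$ (the bound $c_v(s) \ge m_v + 1$ off that set is then automatic because costs are integers), and it also handles your leaf base case cleanly, where $c_v$ is infeasible off $f(v)$ but $g_v(s) = \mathbf{1}[s \ne f(v)]$ still has the right form. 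Since $c_u(s) = g_v(s) + g_w(s)$, the weaker invariant propagates, with argmin set $F(v) \cap F(w)$ and $m_u = m_v + m_w$ in the intersection case, and argmin set $F(v) \cup F(w)$ and $m_u = m_v + m_w + 1$ in the union case. Everything else in your plan then goes through unchanged: the optimum $\PS{f}{T} = m_r$ accrues $+1$ precisely at union vertices, the top-down greedy choice realizes $g_v(\bar f(u))$ across every edge so that $\ES{\bar f}{T} = m_r$, and suppressing the artificial root is harmless for the reason you give.
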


\cref{lem:fitch-extension} justifies a slight overload of notation: We
use $\FS{F}{T}$ to denote the number of union vertices of $T$ with respect
to the Fitch map~$F$.  By \cref{lem:fitch-extension}, $\FS{F}{T} = \FS{f}{T}$.

\subsection{Characters on Induced Subtrees and Restricted Subtrees}

In this section, we prove some simple results on the parsimony scores and
maximum parsimony distance of restrictions of trees on $X$ to subsets of their
leaves.  These results will be used to prove that the reduction rules in
\cref{sec:reduction-rules} are safe.

\begin{lem}
  \label{lem:Induced=RestrictedPars}
  Let $T$ be a tree on $X$, let $Y \subseteq X$, and let $f$ be a character
  on~$Y$. Then $l_f(T(Y)) = l_f(T|_Y)$.
\end{lem}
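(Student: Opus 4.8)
The plan is to exploit the fact that $T|_Y$ is obtained from $T(Y)$ purely by suppressing degree-$2$ vertices, and to show that each such suppression leaves the parsimony score unchanged. First I would record the easy structural fact that the leaf set of $T(Y)$ is exactly $Y$: every element of $Y$ is a leaf of $T$ and hence has degree at most $1$ in any subtree, so it is a leaf of $T(Y)$; conversely, any leaf of $T(Y)$ must lie in $Y$, as otherwise it could be removed, contradicting the minimality of $T(Y)$. Thus $f$ is a legitimate character on the leaves of both $T(Y)$ and $T|_Y$, and suppressing an internal degree-$2$ vertex never disturbs the leaf set $Y$. It therefore suffices to prove the claim for a single suppression and then induct on the number of degree-$2$ vertices removed.

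So I would fix an internal degree-$2$ vertex $v$ of $T(Y)$ with neighbours $u$ and $w$ (distinct, since $T(Y)$ is a tree), and let $T'$ be the tree obtained by suppressing $v$, i.e.\ by replacing the path $u\text{--}v\text{--}w$ with the edge $(u,w)$. I would then prove $l_f(T') = l_f(T(Y))$ by two inequalities. For $l_f(T') \le l_f(T(Y))$, take an optimal extension $\bar f$ of $f$ on $T(Y)$ and restrict it to the vertices of $T'$; this is a valid extension of $f$ since the leaves are untouched. All edges except $(u,v)$ and $(v,w)$ are unchanged, and the two contributions $[\bar f(u)\ne\bar f(v)] + [\bar f(v)\ne\bar f(w)]$ in $T(Y)$ are replaced by the single contribution $[\bar f(u)\ne\bar f(w)]$ in $T'$; by the triangle inequality for the discrete metric on states, the latter is at most the former, so the number of mutation edges does not increase.

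For the reverse inequality $l_f(T(Y)) \le l_f(T')$, take an optimal extension $\bar g$ of $f$ on $T'$ and extend it to $T(Y)$ by setting $\bar g(v) := \bar g(u)$. Then edge $(u,v)$ contributes no mutation, while edge $(v,w)$ contributes $[\bar g(u)\ne\bar g(w)]$, exactly matching the contribution of the edge $(u,w)$ in $T'$; every other edge is unchanged, so the number of mutation edges is preserved. Combining the two inequalities gives $l_f(T') = l_f(T(Y))$ for a single suppression, and inducting over all degree-$2$ vertices of $T(Y)$ yields $l_f(T|_Y) = l_f(T(Y))$, as required.

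I do not expect a genuine obstacle here: both directions reduce to the triangle inequality and a one-line extension argument. The only points requiring a little care are confirming that suppression preserves the leaf set (so that $f$ remains a well-defined character throughout), and respecting that the minimum in the definition of $l_f$ is taken over all extensions---hence the ``$\le$'' direction must start from an optimal extension on the larger tree $T(Y)$ and the ``$\ge$'' direction from an optimal extension on the smaller tree $T'$, so that in each case the constructed extension on the other tree merely witnesses an upper bound on its parsimony score.
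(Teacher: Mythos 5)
Your proof is correct and is in substance the same as the paper's: the paper treats all subdivisions at once (extending an optimal extension of $T|_Y$ to $T(Y)$ by copying one endpoint's state along each subdivision path, and conversely restricting an optimal extension of $T(Y)$ to $T|_Y$ and noting that each mutation edge of $T|_Y$ forces at least one mutation on the corresponding path), while you localize exactly these two constructions to a single suppressed degree-$2$ vertex and induct. The induction is harmless since suppressing one degree-$2$ vertex changes no other vertex's degree, so your argument is a valid, slightly more granular rendering of the same idea.
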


\begin{proof}
  We show first that~$l_f(T(Y)) \le l_f(T|_Y)$. Consider an optimal
  extension~$\bar f$ of~$f$ to~$T|_Y$. We define an extension~$\tilde f$ of~$f$
  to~$T(Y)$ such that $\Delta_{\tilde f}(T(Y)) = \Delta_{\bar f}(T|_Y) =
  l_f(T|_Y)$. Since $l_f(T(Y)) \le \Delta_{\tilde f}(T(Y))$, it follows
  that~$l_f(T(Y)) \le l_f(T|_Y)$.
  
  By definition, $T(Y)$ can be obtained from~$T|_Y$ by subdividing edges, that
  is, by replacing edges with paths whose internal vertices have degree~$2$.
  Every vertex $v \in T|_Y$ is also a vertex of $T(Y)$.  For any such vertex
  $v$, we define $\tilde f(v) = \bar f(v)$. For every edge~$(u,v)$ of~$T|_Y$
  that is replaced by a path~$(u, p_1, \ldots, p_k, v)$ in $T(Y)$, we
  let~$\tilde f(p_i) = \bar f(u)$ for all $1 \le i \le k$. The only mutation
  edge on the path $(u,p_1, \ldots, p_k, v)$, if there is any, is the edge
  $(p_k, v)$ because $\tilde f(u) = \tilde f(p_1) = \cdots = \tilde f(p_k)$. If
  the edge $(p_k, v)$ is a mutation edge, then $\bar f(u) = \tilde f(u) = \tilde
  f(p_k) \ne \tilde f(v) = \bar f(v)$, that is, the edge $(u,v)$ also is a
  mutation edge with respect to $\bar f$.  This shows that $\Delta_{\tilde
  f}(T(Y)) = \Delta_{\bar f}(T|_Y)$.
  
  To show that~$l_f(T|_Y) \le l_f(T(Y))$, let $\tilde f$ be an optimal
  extension of $f$ to $T(Y)$. We obtain an extension $\bar f$ of $f$ to $T|_Y$
  as the restriction of $\tilde f$ to $T|_Y$. Now consider any edge $(u,v)$ of
  $T|_Y$. If $\bar f(u) \ne \bar f(v)$, then $\tilde f(u) \ne \tilde f(v)$.
  Therefore, the path $(u, p_1, \ldots, p_k, v)$ in $T(Y)$ corresponding to $(u,
  v)$ must contain at least one mutation edge. Thus, $l_f(T|_Y) \le \Delta_{\bar
  f}(T|_Y) \le \Delta_{\tilde f}(T(Y)) = l_f(T(Y))$.
\end{proof}

The next corollary follows immediately:

\begin{cor}
  \label{cor:Induced=RestrictedDist}
  Let~$T_1$ and~$T_2$ be~trees on $X$, and let~$Y \subseteq X$.  Then
  $\dmp^t(T_1|_Y, T_2|_Y) = \dmp^t(T_1(Y), T_2(Y))$ for any~$t\in \trange$.
\end{cor}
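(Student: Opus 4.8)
The plan is to derive the corollary directly from \cref{lem:Induced=RestrictedPars} by unwinding the definition of $\dmp^t$ and matching the two maximizations term by term. The first thing I would verify is that all four trees appearing in the statement share the same leaf set. By the definition of the induced subtree, $T(Y)$ is the smallest subtree of $T$ containing every leaf in $Y$, so its leaf set is exactly $Y$; suppressing degree-$2$ vertices to form $T|_Y$ does not change which vertices are leaves, so $T|_Y$ also has leaf set $Y$. Hence $T_1(Y)$, $T_2(Y)$, $T_1|_Y$, and $T_2|_Y$ all have leaf set $Y$, and in each of the two parsimony distances the maximum is taken over precisely the same family of $t$-state characters $f$ on $Y$. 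This common domain is what makes a term-by-term comparison legitimate.

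With that in hand, I would fix an arbitrary $t$-state character $f$ on $Y$ and apply \cref{lem:Induced=RestrictedPars} once to $T_1$ and once to $T_2$, obtaining $\PS{f}{T_1(Y)} = \PS{f}{T_1|_Y}$ and $\PS{f}{T_2(Y)} = \PS{f}{T_2|_Y}$. Subtracting and taking absolute values gives $|\PS{f}{T_1(Y)} - \PS{f}{T_2(Y)}| = |\PS{f}{T_1|_Y} - \PS{f}{T_2|_Y}|$ for every such $f$. Taking the maximum of both sides over all $t$-state characters $f$ on $Y$ then yields $\dmp^t(T_1(Y), T_2(Y)) = \dmp^t(T_1|_Y, T_2|_Y)$, which is the claim. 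The same argument goes through uniformly for every $t \in \trange$, including $t = \infty$, since \cref{lem:Induced=RestrictedPars} places no restriction on the number of states used by $f$.

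There is no real obstacle here: the entire content is carried by \cref{lem:Induced=RestrictedPars}, and the corollary is a one-line consequence once the leaf sets are confirmed to coincide. The only point requiring a moment's care is the bookkeeping that the maximizations defining $\dmp^t(T_1(Y),T_2(Y))$ and $\dmp^t(T_1|_Y,T_2|_Y)$ range over an identical set of characters, so that the per-character equalities can be promoted to an equality of maxima; this is immediate from the observation that all four trees have leaf set $Y$.
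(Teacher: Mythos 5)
Your proposal is correct and is exactly the argument the paper intends: the paper states the corollary ``follows immediately'' from \cref{lem:Induced=RestrictedPars}, and the implicit derivation is precisely your per-character application of the lemma to $T_1$ and $T_2$ followed by taking the maximum over all $t$-state characters on $Y$. Your extra care in checking that all four trees have leaf set $Y$, so that both maximizations range over the same family of characters, is a sound (if routine) justification of the step the paper leaves unspoken.
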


\begin{lem}
  \label{lem:restriction}
  Let $T$ be a tree on $X$, let $Y \subseteq X$, let $f$ be a character on $X$,
  and let $f'$ be the restriction of $f$ to $Y$. Then $l_{f'}(T(Y)) \le l_f(T)$.
\end{lem}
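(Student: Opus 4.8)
The plan is to exploit the fact that, unlike the suppressed restriction $T|_Y$, the induced subtree $T(Y)$ is literally a subgraph of $T$. By definition $T(Y)$ is the smallest subtree of $T$ containing all leaves in $Y$, so both its vertex set and its edge set are subsets of those of $T$; moreover its leaf set is exactly $Y$, since any leaf of $T(Y)$ lying outside $Y$ could be deleted to give a smaller subtree still containing $Y$, and every $y \in Y$ is a leaf of $T$ and hence has degree at most (so exactly) $1$ in the subgraph $T(Y)$. Consequently any extension of $f$ on $T$ restricts to an extension of $f'$ on $T(Y)$, and restriction can only remove mutation edges.

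Concretely, I would fix an optimal extension $\bar f$ of $f$ to $T$, so that $\Delta_{\bar f}(T) = l_f(T)$, and let $\bar f'$ denote the restriction of $\bar f$ to the vertices of $T(Y)$. For each leaf $v \in Y$ of $T(Y)$ we have $\bar f'(v) = \bar f(v) = f(v) = f'(v)$, so $\bar f'$ is indeed an extension of $f'$ to $T(Y)$.

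It then remains to bound the mutation edges. Since $E(T(Y)) \subseteq E(T)$ and $\bar f'$ agrees with $\bar f$ on every endpoint of such an edge, any edge that is a mutation edge of $T(Y)$ with respect to $\bar f'$ is also a mutation edge of $T$ with respect to $\bar f$. Hence $\Delta_{\bar f'}(T(Y)) \le \Delta_{\bar f}(T)$, and combining this with $l_{f'}(T(Y)) \le \Delta_{\bar f'}(T(Y))$ and $\Delta_{\bar f}(T) = l_f(T)$ gives the claim.

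There is essentially no obstacle here: the one point deserving (minor) care is to verify that $T(Y)$ really is a subgraph of $T$ with leaf set exactly $Y$, so that $\bar f'$ is well defined as an extension of $f'$. This is precisely where one must avoid conflating $T(Y)$ with $T|_Y$, the latter being obtained by suppressing degree-$2$ vertices and hence not a subgraph of $T$. Everything else is a direct counting argument over the shared edge set.
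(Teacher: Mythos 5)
Your proof is correct and is essentially identical to the paper's own argument: both restrict an optimal extension $\bar f$ of $f$ to the subgraph $T(Y)$ and observe that every mutation edge of $T(Y)$ with respect to the restriction is also a mutation edge of $T$ with respect to $\bar f$. The extra care you take to verify that $T(Y)$ is a subgraph of $T$ with leaf set exactly $Y$ (so the restriction really is an extension of $f'$) is a detail the paper leaves implicit, but it does not change the approach.
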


\begin{proof}
  Let $\bar f$ be an optimal extension of $f$ to $T$. The restriction of $\bar
  f$ to $T(Y)$ is an extension $\bar f'$ of $f'$ to $T(Y)$. Every mutation edge
  in $T(Y)$ with respect to $\bar f'$ is also a mutation edge in $T$ with
  respect to $\bar f$. Thus, $l_{f'}(T(Y)) \le \Delta_{\bar f'}(T(Y)) \le
  \Delta_{\bar f}(T) = l_f(T)$.
\end{proof}

Given an induced subtree $T(Y)$ and a labelling $\bar f$ of the vertices of
$T(Y)$, we define the \emph{parsimonious extension} of $\bar f$ to $T$ as the
unique labelling $\tilde f$ of the vertices in $T$ such that $\tilde f(v) = \bar
f(v)$ for all $v \in T(Y)$ and $\tilde f(v) = \bar f(w_v)$ for all $v \notin
T(Y)$, where $w_v$ is the vertex in $T(Y)$ closest to $v$.  It follows
immediately that there are no mutation edges of $T$ with respect to $\tilde f$
that do not belong to $T(Y)$, and that an edge of $T(Y)$ is a mutation edge with
respect to $\tilde f$ if and only if it is a mutation edge with respect to
$\bar f$.  Thus, we have the following observation:

\begin{obs}
  \label{obs:parsimonious-extension}
  If $T$ is a tree on $X$, $Y \subseteq X$, $\bar f$ is a labelling of the
  vertices of~$T(Y)$, and $\tilde f$ is the parsimonious extension of $\bar f$
  to~$T$, then $\Delta_{\bar f}(T(Y)) = \Delta_{\tilde f}(T)$.
\end{obs}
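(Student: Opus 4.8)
The plan is to partition the edge set of $T$ and count mutation edges with respect to $\tilde f$ separately on each part. Since $T(Y)$ is a connected subtree of $T$, the edges of $T$ split into two classes: those both of whose endpoints lie in $V(T(Y))$, and those with at least one endpoint outside $V(T(Y))$. Because $T(Y)$ contains the unique path in $T$ between any two of its vertices, an edge of the first class is precisely an edge of $T(Y)$, and conversely every edge of $T(Y)$ is such an edge. First I would handle this class: on $V(T(Y))$ the extension $\tilde f$ agrees with $\bar f$ by the definition of the parsimonious extension, so an edge of $T(Y)$ is a mutation edge of $T$ with respect to $\tilde f$ if and only if it is a mutation edge of $T(Y)$ with respect to $\bar f$. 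Hence the first class contributes exactly $\ES{\bar f}{T(Y)}$ mutation edges.

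The core of the argument is to show that no edge of the second class is a mutation edge with respect to $\tilde f$. I would argue via the closest-vertex assignment $v \mapsto w_v$ used in the definition of $\tilde f$. Consider an edge $(u,v)$ with $v \notin V(T(Y))$. If $u \in V(T(Y))$, then the unique path from $v$ to $T(Y)$ starts with this edge, so $u$ is the vertex of $T(Y)$ closest to $v$, that is, $w_v = u$; thus $\tilde f(v) = \bar f(w_v) = \bar f(u) = \tilde f(u)$, and $(u,v)$ is not a mutation edge. If instead $u \notin V(T(Y))$ as well, then $u$ and $v$ lie in the same connected component of $T$ after deleting $V(T(Y))$; since $T(Y)$ is connected, this component attaches to $T(Y)$ at a single vertex, and the paths from $u$ and from $v$ to $T(Y)$ reach $T(Y)$ at that common vertex, so $w_u = w_v$ and $\tilde f(u) = \bar f(w_u) = \bar f(w_v) = \tilde f(v)$. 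Either way $(u,v)$ is not a mutation edge.

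Combining the two cases, $\ES{\tilde f}{T}$ equals the number of mutation edges in the first class, which is $\ES{\bar f}{T(Y)}$, giving the claim.

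I expect no substantial obstacle here; the statement is essentially a bookkeeping consequence of the definition of the parsimonious extension, which is why it is phrased as an observation. The only point requiring care is the consistency of the closest-vertex assignment along edges outside $T(Y)$, which rests on $T(Y)$ being a connected subtree so that each part hanging off it is a \emph{pendant subtree} rooted at a single vertex; the notion of a pendant subtree introduced earlier makes this precise.
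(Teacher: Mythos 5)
Your proof is correct and matches the paper's reasoning exactly: the paper states the Observation as an immediate consequence of the definition of the parsimonious extension, noting precisely your two facts --- that $\tilde f$ agrees with $\bar f$ on $T(Y)$, so edges of $T(Y)$ are mutation edges with respect to $\tilde f$ if and only if they are with respect to $\bar f$, and that no edge outside $T(Y)$ is a mutation edge. Your write-up simply makes explicit the closest-vertex bookkeeping (each pendant component of $T \setminus V(T(Y))$ attaching to $T(Y)$ at a single vertex) that the paper leaves as ``it follows immediately.''
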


\begin{lem}
  \label{lem:subset-restriction}
  Let $T_1$ and $T_2$ be trees on $X$, and let $Y \subseteq X$.
  Then $\dmp^t(T_1(Y), T_2(Y)) = \dmp^t(T_1|_Y, T_2|_Y) \leq \dmp^t(T_1, T_2)$,
  for any $t\in \trange$.
\end{lem}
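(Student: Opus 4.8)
The plan is to prove the equality and the inequality separately. The equality $\dmp^t(T_1(Y), T_2(Y)) = \dmp^t(T_1|_Y, T_2|_Y)$ is nothing more than \cref{cor:Induced=RestrictedDist} applied to the pair $(T_1, T_2)$, so it requires no further work. It therefore remains only to establish that $\dmp^t(T_1(Y), T_2(Y)) \le \dmp^t(T_1, T_2)$; once this is shown, the chain of (in)equalities in the statement follows. To do this I would take a $t$-state character $f$ on $Y$ that witnesses $\dmp^t(T_1(Y), T_2(Y))$, and, relabelling $T_1$ and $T_2$ if necessary, assume without loss of generality that $l_f(T_1(Y)) \ge l_f(T_2(Y))$, so that $\dmp^t(T_1(Y), T_2(Y)) = l_f(T_1(Y)) - l_f(T_2(Y))$.

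The heart of the argument is to promote $f$ to a $t$-state character $g$ on all of $X$ that performs at least as well on the pair $(T_1, T_2)$. First I would fix a Fitch extension $\bar f$ of $f$ to $T_2(Y)$; by \cref{lem:fitch-extension} this extension is optimal, so $\Delta_{\bar f}(T_2(Y)) = l_f(T_2(Y))$, and crucially it uses only states occurring in the Fitch sets, hence only states in the image of $f$. I would then form the parsimonious extension $\tilde f$ of $\bar f$ to the full tree $T_2$, and define $g$ to be the restriction of $\tilde f$ to the leaf set $X$. By construction $g$ restricts to $f$ on $Y$, and since $\tilde f$ introduces no new states beyond those of $\bar f$, the character $g$ uses at most $t$ states and is thus a legitimate competitor in the maximisation defining $\dmp^t(T_1, T_2)$.

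It then remains to compare parsimony scores. On $T_2$, the extension $\tilde f$ witnesses $g$, so $l_g(T_2) \le \Delta_{\tilde f}(T_2)$; by \cref{obs:parsimonious-extension} this equals $\Delta_{\bar f}(T_2(Y)) = l_f(T_2(Y))$, giving $l_g(T_2) \le l_f(T_2(Y))$. On $T_1$, since $f$ is the restriction of $g$ to $Y$, \cref{lem:restriction} yields $l_g(T_1) \ge l_f(T_1(Y))$. Combining these two bounds gives
\[
  \dmp^t(T_1, T_2) \ge l_g(T_1) - l_g(T_2) \ge l_f(T_1(Y)) - l_f(T_2(Y)) = \dmp^t(T_1(Y), T_2(Y)),
\]
which is exactly what is needed. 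The step I expect to be the main obstacle is the construction of $g$: one must simultaneously avoid inflating the parsimony score on $T_2$ by the leaves of $X \setminus Y$ and keep the total number of states at most $t$. The parsimonious extension handles the former (it adds no mutation edges outside $T_2(Y)$), while insisting that the underlying extension be a Fitch extension, rather than an arbitrary optimal one, handles the latter; verifying that these two requirements can indeed be met together is the crux of the proof.
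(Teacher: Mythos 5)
Your proposal is correct and follows essentially the same route as the paper's proof: take an optimal extension of the witness character on the induced subtree of the tree with the smaller score, form its parsimonious extension to the full tree, restrict to the leaves to get a character on $X$, and compare scores via \cref{obs:parsimonious-extension} and \cref{lem:restriction}. The one difference is your insistence on a \emph{Fitch} extension to control the number of states, which is unnecessary (though harmless): by the paper's definition an extension of $f : X \rightarrow S$ maps $V(T)$ into the same set $S$, so the restricted character automatically uses at most $t = |S|$ states, and the paper accordingly works with an arbitrary optimal extension.
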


\begin{proof}
  \cref{cor:Induced=RestrictedDist} states that $\dmp^t(T_1(Y), T_2(Y)) =
  \dmp^t(T_1|_Y, T_2|_Y)$. Thus, it suffices to prove that $\dmp^t(T_1(Y),
  T_2(Y)) \leq \dmp^t(T_1, T_2)$.
  
  Let $f$ be a $t$-state character on $Y$ such that $\dmp^t(T_1(Y), T_2(Y)) =
  |l_f(T_1(Y)) - l_f(T_2(Y))|$. Moreover, assume that $l_f(T_1(Y)) \le
  l_f(T_2(Y))$, so $\dmp^t(T_1(Y), T_2(Y)) = l_f(T_2(Y)) - l_f(T_1(Y))$. Let
  $\bar f$ be an optimal extension of $f$ to $T_1(Y)$, let $\tilde f$ be the
  parsimonious extension of $\bar f$ to $T_1$, and let $f'$ be the restriction
  of $\tilde f$ to the leaves of $T_1$ (i.e., to $X$).
  \markj{Thus, $f$ is the restriction of $f'$ to $Y$.} Then $l_{f'}(T_1) \le
  \Delta_{\tilde f}(T_1) = \Delta_{\bar f}(T_1(Y)) = l_f(T_1(Y))$, by
  \cref{obs:parsimonious-extension}.  By~\cref{lem:restriction}, we have
  $l_f(T_2(Y)) \le l_{f'}(T_2)$.  Thus, $\dmp^t(T_1, T_2) \ge l_{f'}(T_2) -
  l_{f'}(T_1) \ge l_f(T_2(Y)) - l_f(T_1(Y)) = \dmp^t(T_1(Y), T_2(Y))$.
\end{proof}

\section{Reduction Rules}

\label{sec:reduction-rules}

Previous kernelization results for SPR distance
\cite{bordewichComputationalComplexityRooted2005}, TBR distance
\cite{allenSubtreeTransferOperations2001}, and hybridization number
\cite{bordewichComputingHybridizationNumber2007} employ two simple reduction
rules: cherry reduction and chain reduction (see below).  It was shown that
these rules produce kernels of size linear in the SPR distance
\cite{bordewichComputationalComplexityRooted2005}, TBR distance
\cite{allenSubtreeTransferOperations2001} or hybridization number
\cite{bordewichComputingHybridizationNumber2007} of the two input
trees.\footnote{Technically, for hybridization number, the reduction rules by
Bordewich and Semple \cite{bordewichComputingHybridizationNumber2007} do not
yield a linear kernel but a linear \emph{compression}: the reduction takes an
instance of the maximum acyclic agreement forest (MAAF) problem, which is
equivalent to hybridization number, and produces an instance of linear size of a
\emph{weighted} version of the MAAF problem.}  Kelk et al.\
\cite{kelkReductionRulesMaximum2016} proved that these rules are safe also for
the unbounded-state maximum parsimony distance, as long as chain reduction is
applied only to chains of length greater than~$4$.  Jones, Kelk, and Stougie
\cite{jonesMaximumParsimonyDistance2021} proved that these reduction rules once
again produce a kernel of size linear in $\dmp^\infty(T_1, T_2)$.

In this section, we prove that cherry reduction and chain reduction are safe
also for the $t$-state parsimony distance, for any $t \ge 2$, again as long as
we apply chain reduction only to chains of length greater than~$4$.  This shows
that there exists a linear-size kernel for $\dmp^t(T_1, T_2)$ parameterized by
the TBR distance $\dtbr(T_1, T_2)$, as summarized in the following theorem:

\begin{thm}
  \label{thm:reduction-rules}
  There exists a set of \markj{safe} reduction rules for $\dmp^t$ such that the
  two trees $T_1$ and $T_2$ on $X$ in a fully reduced instance\nznotes{This said
  ``yes-instance'' before, but whether we have a yes- or no-instance is
  irrelevant for the bound on the size of the kernel.} $(T_1, T_2, t, k)$
  satisfy $|X| \le 20 \cdot \dtbr(T_1, T_2)$, \markj{for any $t\in \trange$.}
\end{thm}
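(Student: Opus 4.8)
The plan is to establish two reduction rules—cherry reduction and chain reduction—and show both are safe for $\dmp^t$, then combine their safety with a counting argument on fully reduced instances to bound $|X|$ in terms of $\dtbr(T_1,T_2)$.

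\medskip

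\noindent\textbf{Cherry reduction.} If $(a,b)$ is a common cherry of $T_1$ and $T_2$ (i.e., a cherry in both trees), replace the pair of leaves $a,b$ by a single new leaf. I would prove this is safe by setting $Y = X \setminus \{b\}$ and applying \cref{lem:subset-restriction} to get $\dmp^t(T_1|_Y, T_2|_Y) \le \dmp^t(T_1, T_2)$; the nontrivial direction is the reverse inequality. For this I would argue that any optimal character $f$ on $X$ can be modified, without decreasing $|l_f(T_1) - l_f(T_2)|$, so that $f(a) = f(b)$: since $a,b$ share a parent in both trees, relabelling $b$ to match $a$ changes both parsimony scores by the same amount (or leaves them unchanged), so the difference is preserved. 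Once $f(a)=f(b)$, the leaf $b$ is redundant and can be contracted, giving $\dmp^t(T_1,T_2) = \dmp^t(T_1|_Y, T_2|_Y)$.

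\medskip

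\noindent\textbf{Chain reduction.} A \emph{chain} is a maximal path of leaves hanging off a common path that appears identically in both trees; chain reduction truncates a chain of length $> 4$ down to length $4$. The key point, which I would emphasise as the main obstacle, is that safety for $\dmp^t$ with as few as $t=2$ states is more delicate than for $\dmp^\infty$. The proof strategy of Kelk et al.\ for the unbounded case constructs a worst-case character using many distinct states along the chain, which is unavailable when $t$ is small. I would instead argue directly that on a chain of length $\ge 4$, any optimal extension can be assumed to assign at most two states to the chain leaves in a ``blocky'' pattern, and that lengthening the chain beyond $4$ cannot increase the parsimony-score difference: the contribution of the chain to $l_f(T_i)$ saturates once the chain is long enough that both trees treat it identically along the shared backbone. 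Formally I would again sandwich $\dmp^t$ of the reduced instance between inequalities obtained from \cref{lem:subset-restriction} (for $\le$) and from an explicit character-extension argument (for $\ge$), using \cref{obs:parsimonious-extension} and the union-vertex characterization of \cref{lem:fitch-extension} to track exactly how many mutation edges the chain forces in each tree.

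\medskip

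\noindent\textbf{From safety to the size bound.} Both rules run in polynomial time and strictly decrease $|X|$, so repeated application terminates in a fully reduced instance $(T_1, T_2, t, k)$ with the same $\dmp^t$ value. To bound $|X|$ I would take a MAF $F$ of $T_1,T_2$, so $\dtbr(T_1,T_2) = |F|-1$ by Allen and Steel. In a fully reduced instance there are no common cherries and no chains longer than $4$, which limits how many leaves can be packed into each component and along the interfaces between components of $F$. A standard charging argument—assigning each leaf to a component of $F$ and bounding the number of leaves per component by a constant using the absence of long chains and common cherries—yields $|X| \le c \cdot |F| \le c \cdot (\dtbr(T_1,T_2)+1)$ for some constant $c$, which I would tune to reach the stated constant $20$. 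The routine but careful part is the casework bounding leaves per MAF component; the conceptual heart of the theorem is the chain-reduction safety proof for small $t$.
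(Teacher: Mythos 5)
Your overall architecture (prove cherry and chain reduction safe for $\dmp^t$, then bound fully reduced instances against a MAF) matches the paper's, but both of your safety arguments have genuine gaps, and the first contains a false step. For cherry reduction, the claim that relabelling $b$ to match $a$ ``changes both parsimony scores by the same amount (or leaves them unchanged)'' is simply not true. Take $S=\{1,2\}$, $f(a)=1$, $f(b)=2$, and root both trees at a vertex $r$ subdividing the edge $(p,q)$, where $p$ is the common parent of the cherry and $q$ its third neighbour. Suppose the Fitch set of $q$ (computed in the subtree away from the cherry) is $\{2\}$ in $T_1$ but $\{1\}$ in $T_2$. Before relabelling, $p$ is a union vertex and $r$ an intersection vertex in both trees. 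After setting $f(b):=1$, in $T_1$ the union at $p$ disappears but $r$ becomes a union vertex ($\{1\}\cap\{2\}=\emptyset$), so $l_f(T_1)$ is unchanged, while in $T_2$ the union at $p$ disappears and $r$ stays an intersection vertex, so $l_f(T_2)$ drops by one: if $l_f(T_2)>l_f(T_1)$, the difference strictly decreases. Here the \emph{other} common state ($2$) would have worked, which is exactly why the paper's proof of \cref{lem:cherry-reduction-lower-bound} cannot pick the common state blindly: it chooses the label of the surviving leaf from $F_1(p)\cap F_1(q)$ whenever $r$ is an intersection vertex in the lower-score tree $T_1$, and only then verifies via the union-vertex counts $u_i,u_i'$ that the difference cannot drop. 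Your existential goal is correct, but your justification for it would fail on this example.

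For chain reduction, you correctly identify the obstacle (few states) and even guess the right shape of the final character --- the paper indeed uses a two-block pattern $f'(x_1)=f'(x_2)=a$, $f'(x_{k-1})=f'(x_k)=b$ in \cref{lem:ChainReduction} --- but the ``contribution saturates'' claim is where the entire content lives and your sketch supplies no mechanism for it. The chain's contribution to each tree is the number of union vertices among $r,p_1,\dots,p_k$, and these counts $u_1,u_2$ need \emph{not} be equal; the paper's less-constrained-roots argument (\cref{lemma:lcra}) proves $u_2-u_1\le\delta_{AC}+\delta_{BD}$, where $\delta_{AC}$ records whether the Fitch set $S_A$ of one side tree is contained in $S_C$, by transplanting a Fitch extension of $T_1$ onto $T_2$. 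The block states must then be chosen to recover these terms: when $S_A\not\subseteq S_C$ one takes $a\in S_A\setminus S_C$, forcing $p_1$ to be an intersection vertex in $T_1'$ but a union vertex in $T_2'$, so $\chi_1=\delta_{AC}$. Moreover, the bounded-$t$ subtlety surfaces somewhere you do not anticipate: when a side tree is empty, the paper attaches an auxiliary tree with exactly $t$ leaves in distinct states so that the relevant Fitch set becomes all of $S$, a construction unavailable verbatim for small $t$ in the original argument of Kelk et al. Finally, note that the paper does not re-derive the extremal count at all: it cites Kelk and Linz's bound $|X|\le 15\cdot\dtbr(T_1,T_2)-9$ for instances fully reduced with chains of length $>3$ and pays a factor $\frac{4}{3}$ for reducing only chains of length $>4$, which is where $20=\frac{4}{3}\cdot 15$ comes from; your plan to redo a per-component MAF charging argument is plausible in outline but would itself be a substantial proof, and a generic such argument does not obviously land on the constant $20$.
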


Kelk and Linz proved that a fully reduced instance with respect to cherry
reduction and chain reduction applied to chains of length greater than $3$ has
size at most~$15 \cdot \dtbr(T_1, T_2) - 9$ \cite{kelkTightKernelComputing2019}.
Since we apply chain reduction only to chains of length greater than~$4$, we
obtain a kernel that is up to a factor of $\frac{4}{3}$ bigger, which gives the
bound of $\frac{4}{3} \cdot 15 \cdot \dtbr(T_1, T_2) = 20 \cdot \dtbr(T_1, T_2)$
on the size of a fully reduced instance in \cref{thm:reduction-rules}.

\subsection{Cherry Reduction}

Cherry reduction eliminates common cherries of the two input trees:

\begin{rrule}[Cherry Reduction]
  \label{rule:cherry-reduction}
  If $T_1$ and $T_2$ have a common cherry $(x,y)$, that is, if $x$ and $y$ are
  two leaves that have the same parent in both $T_1$ and $T_2$, then remove $y$
  from both $T_1$ and $T_2$ and suppress the parent of $x$ and~$y$.
\end{rrule}

Another way to state cherry reduction is that we replace $T_1$ and $T_2$ with
their restrictions to $X \setminus \{y\}$.  The following lemma shows that
applying cherry reduction to a pair of trees $(T_1, T_2)$ does not change their
maximum parsimony distance.

\begin{lem}
  \label{lem:cherry-reduction-lower-bound}
  If $T_1' = T_1|_{X \setminus \{y\}}$ and $T_2' = T_2|_{X \setminus \{y\}}$ are
  the two trees obtained from $T_1$ and $T_2$ by applying cherry reduction to a
  common cherry $(x,y)$ of $T_1$ and~$T_2$, then \markj{$\dmp^t(T_1', T_2') =
  \dmp^t(T_1, T_2)$,} for any $t \in \trange$.
\end{lem}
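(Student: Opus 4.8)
The plan is to prove the two inequalities $\dmp^t(T_1', T_2') \le \dmp^t(T_1, T_2)$ and $\dmp^t(T_1', T_2') \ge \dmp^t(T_1, T_2)$ separately. The first is immediate: since $T_i' = T_i|_{X \setminus \{y\}}$, applying \cref{lem:subset-restriction} with $Y = X \setminus \{y\}$ gives $\dmp^t(T_1', T_2') = \dmp^t(T_1|_Y, T_2|_Y) \le \dmp^t(T_1, T_2)$. All the work lies in the reverse inequality.

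For the reverse inequality, I would start from a $t$-state character $f$ on $X$ that is optimal for $(T_1, T_2)$, say with $\PS{f}{T_2} \ge \PS{f}{T_1}$, so that $\dmp^t(T_1, T_2) = \PS{f}{T_2} - \PS{f}{T_1}$. Write $a = f(x)$ and $b = f(y)$, and let $f_a'$ and $f_b'$ be the two $t$-state characters on $X \setminus \{y\}$ that agree with $f$ on $X \setminus \{x, y\}$ and assign $x$ the state $a$, respectively $b$ (both use only states already appearing in $f$, hence are $t$-state characters). The heart of the argument is a per-tree identity: for any tree $T$ on $X$ in which $(x,y)$ is a cherry, writing $T' = T|_{X \setminus \{y\}}$,
\[
  \PS{f}{T} = [a \ne b] + \min\bigl(\PS{f_a'}{T'}, \PS{f_b'}{T'}\bigr),
\]
where $[a \ne b]$ equals $1$ if $a \ne b$ and $0$ otherwise. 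To prove this, let $p$ be the parent of the cherry and $q$ its third neighbour. I would first show that some optimal extension $\bar f$ of $f$ on $T$ has $\bar f(p) \in \{a, b\}$: if $\bar f(p) \notin \{a, b\}$, recolouring $p$ to $a$ saves at least one mutation among the cherry edges $(p,x),(p,y)$ while adding at most one on the edge $(p,q)$, so $\ES{\bar f}{T}$ does not increase. Once $\bar f(p) = c \in \{a, b\}$, the two cherry edges contribute exactly $[a \ne b]$, and deleting $y$ and suppressing $p$ identifies the edge $(p,q)$ with the edge $(x,q)$ of $T'$; hence the remaining mutations are exactly those of an extension of $f_c'$ on $T'$. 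Minimising over $c \in \{a,b\}$ and over extensions yields the identity. This identity is the step I expect to be the main obstacle, since when $a \ne b$ the naive restriction of $f$ to $X \setminus \{y\}$ no longer controls $\PS{f}{T}$, and one is forced to track both recolourings of $x$ simultaneously; the case $a = b$ (where $f_a' = f_b'$ is just the restriction of $f$) falls out as the simple special case.

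Finally I would combine the identity for $T_1$ and for $T_2$, which is legitimate precisely because the cherry is \emph{common}, so the same two characters $f_a', f_b'$ live on the single reduced leaf set. Writing $A_i = \PS{f_a'}{T_i'}$ and $B_i = \PS{f_b'}{T_i'}$, the $[a \ne b]$ terms cancel and
\[
  \dmp^t(T_1, T_2) = \min(A_2, B_2) - \min(A_1, B_1).
\]
Since $f_a'$ and $f_b'$ are $t$-state characters on $X \setminus \{y\}$, we have $|A_2 - A_1| \le \dmp^t(T_1', T_2')$ and $|B_2 - B_1| \le \dmp^t(T_1', T_2')$. A short case distinction then finishes the proof: if $A_1 \le B_1$ then $\min(A_2, B_2) - \min(A_1, B_1) \le A_2 - A_1 \le \dmp^t(T_1', T_2')$, and symmetrically, using $B$, in the remaining case. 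This establishes $\dmp^t(T_1, T_2) \le \dmp^t(T_1', T_2')$ and, together with the first inequality, the claimed equality.
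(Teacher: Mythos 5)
Your proof is correct, but it takes a genuinely different route from the paper's. The easy direction is identical (both invoke \cref{lem:subset-restriction}). For the hard direction, the paper roots both trees at a vertex $r$ subdividing the edge $(p,q)$, runs Fitch's algorithm, and constructs a \emph{single} reduced character $f'$ -- choosing $f'(x)$ from $F_1(p) \cap F_1(q)$ when $r$ is an intersection vertex in $T_1$, and $f'(x) = f(x)$ otherwise -- then compares union-vertex counts at $r$ in the four trees to show the score gap cannot shrink. You instead prove the exact local identity $\PS{f}{T} = [a \ne b] + \min\bigl(\PS{f_a'}{T'}, \PS{f_b'}{T'}\bigr)$ by direct surgery on extensions at the cherry, and close with a two-case comparison of the minima. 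The approaches are cousins: since $F_1(p) \subseteq \{a,b\}$, the paper's $f'$ is always one of your $f_a', f_b'$, so both arguments ultimately reassign $x$ a state from $\{f(x), f(y)\}$; the difference is that the paper uses Fitch-map information from $T_1$ to commit to one choice up front, whereas you keep both candidates and let the $\min$--$\min$ case analysis absorb the choice. Your identity argument is sound in both directions -- the lifting from extensions of $f_c'$ on $T'$ back to extensions of $f$ on $T$ (set $\bar g(p) = c$, $\bar g(x) = a$, $\bar g(y) = b$) is only implicit in your ``minimising \ldots yields the identity,'' but it is the evident reverse of the correspondence you describe, and the preliminary recolouring step ensuring $\bar f(p) \in \{a,b\}$ is verified correctly. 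What each approach buys: yours is self-contained and elementary, avoids Fitch's algorithm entirely, and yields an exact equality $\dmp^t(T_1,T_2) = \min(A_2,B_2) - \min(A_1,B_1)$ rather than just an inequality; the paper's Fitch-map formulation, by contrast, sets up the ``less constrained roots'' machinery that is reused essentially unchanged in the chain-reduction proof (\cref{lemma:lcra}), so the paper gains uniformity across both reduction rules at the cost of heavier notation here.
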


\begin{proof}
  By Lemma~\ref{lem:subset-restriction}, we have that $\dmp^t(T_1', T_2')
  \le \dmp^t(T_1, T_2)$. Therefore, it is sufficient to show that $\dmp^t(T_1',
  T_2') \ge \dmp^t(T_1, T_2)$.

  Since $x$ and $y$ have the same parent in both $T_1$ and $T_2$, we use $p$ to
  refer to this common parent in both $T_1$ and $T_2$, considering it the same
  vertex whether it belongs to $T_1$ or~$T_2$. Similarly, we consider the third
  neighbour of $p$ in $T_1$ and $T_2$ to be the same vertex~$q$. In other words,
  the neighbourhood of $p$ in both $T_1$ and $T_2$ is $\{x, y, q\}$. In $T_1'$
  and $T_2'$, $y$ and $p$ are removed, and $q$ becomes $x$'s parent. We argue
  about rooted versions of $T_1$, $T_2$, $T_1'$, and $T_2'$. In $T_1$ and $T_2$,
  we subdivide the edge $(p,q)$ using a new vertex $r$, and we make $r$ the root
  of $T_1$ and $T_2$. After pruning $y$ and suppressing $p$, this results in $r$
  being $x$'s parent in $T_1'$ and~$T_2'$.  See \cref{fig:cherry-reduction}.

  \begin{figure}[h]
    \centering
    \subcaptionbox{\label{fig:cherry-unreduced}}{\begin{tikzpicture}
      \node [vertex,label=below:$x$] at (0,0)  (x) {};
      \node [vertex,label=below:$y$] at (1,0)  (y) {};
      \node [vertex,label=left:$p$] at (60:1) (p) {};
      \path (p) +(0:2) node [vertex,label=right:$q$] (q) {} ++(60:1) +(0:0.5) node [vertex,label=above:$r$] (r) {};
      \begin{scope}[on background layer]
        \path [subtree] (q.center) -- +(240:2) -- +(300:2) -- cycle;
      \end{scope}
      \draw [edge] (x) -- (p) -- (y) (p) -- (r) -- (q);
    \end{tikzpicture}}%
    \hspace{1in}%
    \subcaptionbox{\label{fig:cherry-reduced}}{\begin{tikzpicture}
      \node [vertex,label=below:$x$] at (0,0)  (x) {};
      \path (x) +(0:2) node [vertex,label=right:$q$] (q) {} ++(60:1) +(0:0.5) node [vertex,label=above:$r$] (r) {};
      \begin{scope}[on background layer]
        \path [subtree] (q.center) -- +(240:2) -- +(300:2) -- cycle;
      \end{scope}
      \draw [edge] (x) -- (r) -- (q);
    \end{tikzpicture}}
    \caption{The rooted version of the tree $T_1$ or $T_2$ in the input to
    cherry reduction (\subref{fig:cherry-unreduced}) and the corresponding reduced
    tree $T_1'$ or $T_2'$ (\subref{fig:cherry-reduced}).}%
    \label{fig:cherry-reduction}
  \end{figure}
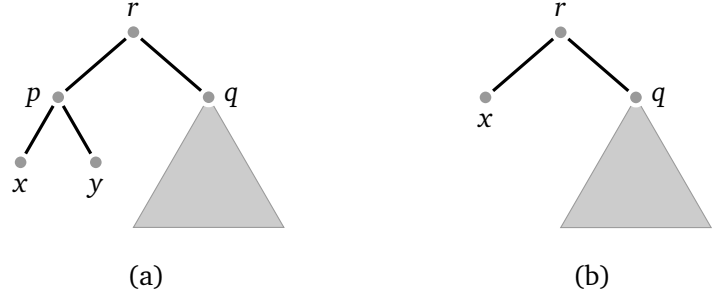

  Let $f$ be a $t$-state character on $X$ with Fitch maps $F_1$
  and $F_2$ on $T_1$ and $T_2$ such that
  \begin{equation*}
    |\FS{F_1}{T_1} - \FS{F_2}{T_2}| = \dmp^t(T_1, T_2).
  \end{equation*}
  Without loss of generality, assume that $\FS{F_1}{T_1} \le \FS{F_2}{T_2}$, so
  $\dmp^t(T_1,T_2) = \PS{F_2}{T_2} - \PS{F_1}{T_1}$.  To prove that
  $\dmp^t(T_1', T_2') \ge \dmp^t(T_1, T_2)$, we construct a $t$-state character
  $f'$ on $X' = X \setminus \{y\}$ whose Fitch maps $F_1'$ and $F_2'$ on $T_1'$
  and $T_2'$ satisfy
  \begin{equation*}
    \FS{F_2'}{T_2'} - \FS{F_1'}{T_1'} \ge \FS{F_2}{T_2} -
    \FS{F_1}{T_1}.
  \end{equation*}
  This implies that
  \begin{equation*}
    \dmp^t(T_1', T_2') \ge \FS{F_2'}{T_2'} - \FS{F_1'}{T_1'} \ge
    \FS{F_2}{T_2} - \FS{F_1}{T_1} = \dmp^t(T_1, T_2).
  \end{equation*}

  We define $f'$ by choosing $f'(z) = f(z)$ for all $z \ne x$. We choose $f'(x)$
  arbitrarily from $F_1(p) \cap F_1(q)$ if $r$ is an intersection vertex
  in~$T_1$. Otherwise, we choose $f'(x) = f(x)$.  For every vertex $z \in T_i'$
  such that $z \notin \{x, r\}$, we have $F_i'(z) = F_i(z)$ because any such
  vertex has the same set of descendant leaves in $T_i$ and $T_i'$ and any such
  descendant leaf $z'$ satisfies $f'(z') = f(z')$ (see
  \cref{fig:cherry-reduction}).  Therefore, every vertex $z \ne r$ is a union
  vertex in $T_i'$ if and only if it is a union vertex in~$T_i$.  Moreover, $p$
  is a union vertex in $T_1$ if and only if it is a union vertex in~$T_2$.
  Thus,
  \begin{equation*}
    \FS{F_2'}{T_2'} - \FS{F_1'}{T_1'}
    = \FS{F_2}{T_2} - \FS{F_1}{T_1} + (u_2' - u_2) - (u_1' - u_1),
  \end{equation*}
  where
  \begin{equation*}
    u_i = \begin{cases}
      1 & \text{if $r$ is a union vertex in $T_i$}\\
      0 & \text{otherwise}
    \end{cases}
  \end{equation*}
  and
  \begin{equation*}
    u_i' = \begin{cases}
      1 & \text{if $r$ is a union vertex in $T_i'$}\\
      0 & \text{otherwise,}
    \end{cases}
  \end{equation*}
  for $i \in \{1, 2\}$.  Therefore,
  \begin{equation*}
    \FS{F_2'}{T_2'} - \FS{F_1'}{T_1'} \ge \FS{F_2}{T_2} - \FS{F_1}{T_1}
  \end{equation*}
  if and only if
  \begin{equation*}
    u_2' - u_2 \ge u_1' - u_1.
  \end{equation*}

  To prove that this inequality holds, note first that the choice of $f'(x)$
  when $r$ is an intersection vertex in $T_1$ ensures that $r$ is an
  intersection vertex also in~$T_1'$.  Thus, $u_1' - u_1 \le 0$.

  Next observe that no matter whether $f'(x)$ is chosen from $F_1(p) \cap
  F_1(q)$ or $f'(x) = f(x)$, we have $f'(x) \in F_1(p) = F_2(p)$.  Thus, if $r$
  is a union vertex in~$T_2$, it is also a union vertex in~$T_2'$.  Therefore,
  $u_2' - u_2 \ge 0$.

  Since $u_2' - u_2 \ge 0$ and $u_1' - u_1 \le 0$, we have $u_2' - u_2 \ge u_1'
  - u_1$, as desired.
\end{proof}

\subsection{Chain Reduction}

A \emph{chain of length $k$} in a tree $T$ is an ordered sequence of leaves
$\langle x_1, \ldots, x_k \rangle$ such that $\langle p_1, \ldots, p_k \rangle$
is a path in~$T$, where $p_i$ is the parent of $x_i$ in $T$, for all $1 \le i
\le k$.  It is possible to have $p_1 = p_2$ and/or $p_{k-1} = p_k$. If this is
the case, the chain is called \emph{pendant} in~$T$. A \emph{common chain} of
$T_1$ and $T_2$ is an ordered sequence of leaves $\langle x_1, \ldots, x_k
\rangle$ that is a chain in both $T_1$ and~$T_2$.  Chain reduction ensures that
the trees in a fully reduced instance do not have long common chains:

\begin{rrule}[Chain Reduction]
  \label{rule:chain-reduction}
  If $T_1$ and $T_2$ have a common chain $\langle x_1,x_2,\ldots, x_k \rangle$
  of length $k \ge 5$, then remove the leaves $x_3, \ldots, x_{k-2}$\, from both
  $T_1$ and~$T_2$, and suppress their parents in both trees.
\end{rrule}

It was shown by Kelk et al.\ \cite{kelkReductionRulesMaximum2016} that chain
reduction preserves $\dmp^\infty(T_1, T_2)$.  The argument by Kelk et al.\ uses
what was called a \emph{less constrained roots argument} in that paper.  Here we
extend this argument to bounded-state characters to prove that chain reduction
also preserves $\dmp^t(T_1, T_2)$, for any finite~$t$.

The tree $T_1$ consists of the chain $\langle x_1, \ldots, x_k \rangle$ plus two
pendant subtrees $T_A$ and $T_B$ whose roots are adjacent to $p_1$ and $p_k$,
respectively.  See \cref{fig:chain-reduction-unrooted}.  If $\langle x_1,
\ldots, x_k \rangle$ is a pendant chain of $T_1$, then $T_A$ or $T_B$ is empty,
possibly both. Similarly, $T_2$ consists of the chain $(x_1, \ldots, x_k)$ plus
two pendant subtrees $T_C$ and $T_D$ whose roots are adjacent to $p_1$ and
$p_k$, respectively. Again, $T_C$ or $T_D$ may be empty, possibly both.  For $P
\in \{A, B, C, D\}$, let $X_P$ be the set of leaves in $T_P$, let $r_P$ be the
root of $T_P$, and let $e_P$ be the edge connecting $r_P$ to $p_1$ or $p_k$.
Obviously, $X_P = \emptyset$, and $r_P$ and $e_P$ do not exist, if $T_P$ is
empty.  Note that $X_A \cup X_B = X_C \cup X_D = X \setminus \{x_1, \ldots,
x_k\}$.

\begin{figure}[t]
  \begin{tikzpicture}
    \node [vertex,label=above:$r_A$] at (-1,0) (rA) {};
    \node [vertex,label=above:$p_1$] at (0,0) (p1) {};
    \node [vertex,label=above:$p_2$] at (0.5,0) (p2) {};
    \node [vertex,label={[xshift=-3pt]above:$p_{k-1}$}] at (2,0) (pk-1) {};
    \node [vertex,label=above:$p_k$] at (2.5,0) (pk) {};
    \node [vertex,label=above:$r_B$] at (3.5,0) (rB) {};
    \node [vertex,label=below:$x_1$] at (0,-1) (x1) {};
    \node [vertex,label=below:$x_2$] at (0.5,-1) (x2) {};
    \node [vertex,label={[xshift=-3pt]below:$x_{k-1}$}] at (2,-1) (xk-1) {};
    \node [vertex,label=below:$x_k$] at (2.5,-1) (xk) {};
    \node at (-1.9,0) {$T_A$};
    \node at (4.4,0) {$T_B$};
    \begin{scope}[on background layer]
      \path [subtree] (rA.center) -- +(210:1.5) -- +(150:1.5) -- cycle;
      \path [subtree] (rB.center) -- +(30:1.5) -- +(330:1.5) -- cycle;
    \end{scope}
    \draw [edge] (x1) -- (p1) -- (p2) -- (x2) (xk-1) -- (pk-1) -- (pk) -- (xk)
    (rA) to node [below] {$e_A$} (p1) 
    (pk) to node [below] {$e_B$} (rB);
    \draw [edge] (p2) -- (barycentric cs:p2=0.685,pk-1=0.315);
    \draw [edge] (pk-1) -- (barycentric cs:p2=0.315,pk-1=0.685);
    \draw [edge,dotted] (barycentric cs:p2=0.315,pk-1=0.685) -- (barycentric cs:p2=0.685,pk-1=0.315);
    \node [anchor=north,yshift=-5mm] at (current bounding box.south) {$T_1$};
  \end{tikzpicture}%
  \hspace{\stretch{1}}%
  \begin{tikzpicture}
    \node [vertex,label=above:$r_C$] at (-1,0) (rA) {};
    \node [vertex,label=above:$p_1$] at (0,0) (p1) {};
    \node [vertex,label=above:$p_2$] at (0.5,0) (p2) {};
    \node [vertex,label={[xshift=-3pt]above:$p_{k-1}$}] at (2,0) (pk-1) {};
    \node [vertex,label=above:$p_k$] at (2.5,0) (pk) {};
    \node [vertex,label=above:$r_D$] at (3.5,0) (rB) {};
    \node [vertex,label=below:$x_1$] at (0,-1) (x1) {};
    \node [vertex,label=below:$x_2$] at (0.5,-1) (x2) {};
    \node [vertex,label={[xshift=-3pt]below:$x_{k-1}$}] at (2,-1) (xk-1) {};
    \node [vertex,label=below:$x_k$] at (2.5,-1) (xk) {};
    \node at (-1.9,0) {$T_C$};
    \node at (4.4,0) {$T_D$};
    \begin{scope}[on background layer]
      \path [subtree] (rA.center) -- +(210:1.5) -- +(150:1.5) -- cycle;
      \path [subtree] (rB.center) -- +(30:1.5) -- +(330:1.5) -- cycle;
    \end{scope}
    \draw [edge] (x1) -- (p1) -- (p2) -- (x2) (xk-1) -- (pk-1) -- (pk) -- (xk)
    (rA) to node [below] {$e_C$} (p1) 
    (pk) to node [below] {$e_D$} (rB);
    \draw [edge] (p2) -- (barycentric cs:p2=0.685,pk-1=0.315);
    \draw [edge] (pk-1) -- (barycentric cs:p2=0.315,pk-1=0.685);
    \draw [edge,dotted] (barycentric cs:p2=0.315,pk-1=0.685) -- (barycentric cs:p2=0.685,pk-1=0.315);
    \node [anchor=north,yshift=-5mm] at (current bounding box.south) {$T_2$};
  \end{tikzpicture}
  \caption{The various definitions of vertices, edges, and subtrees in the discussion of chain reduction.}
  \label{fig:chain-reduction-unrooted}
\end{figure}
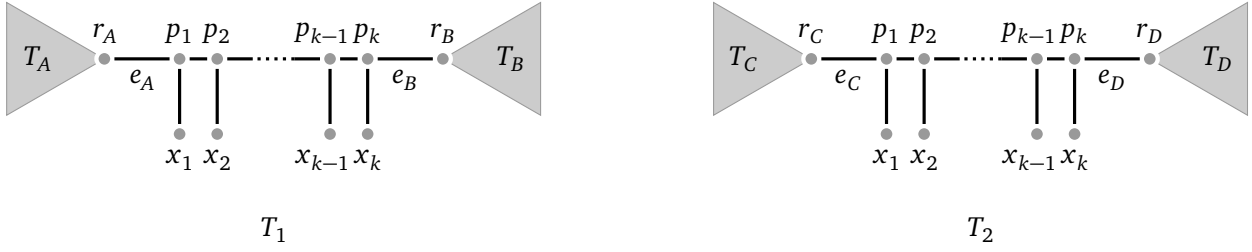
\begin{figure}[t]
  \hspace{\stretch{1}}%
  \begin{tikzpicture}
    \path       node [vertex,label=left:$r_A$]           (rA)   {}
    ++(60:1)    node [vertex,label=left:$p_1$]           (p1)   {}
    +(300:1)    node [vertex,label=below:$x_1$]          (x1)   {}
    ++(60:1)    node [vertex,label=left:$p_2$]           (p2)   {}
    +(300:1)    node [vertex,label=below:$x_2$]          (x2)   {}
    +(60:1)     node [vertex,xshift=5mm,label=above:$r$] (r)    {}
    ++(0:2)     node [vertex,label=right:$p_3$]          (p3)   {}
    +(240:1)    node [vertex,label=below:$x_3$]          (x3)   {}
    ++(300:1.5) node [vertex,label=right:$p_{k-1}$]      (pk-1) {}
    +(240:1)    node [vertex,label=below:$x_{k-1}$]      (xk-1) {}
    ++(300:1)   node [vertex,label=right:$p_k$]          (pk)   {}
    +(240:1)    node [vertex,label=below:$x_x$]          (xk)   {}
    ++(300:1)   node [vertex,label=right:$r_B$]          (rB)   {};
    \path (rA) +(270:0.9) node {$T_A$};
    \path (rB) +(270:0.9) node {$T_B$};
    \begin{scope}[on background layer]
      \path [subtree] (rA.center) -- +(240:1.5) -- +(300:1.5) -- cycle;
      \path [subtree] (rB.center) -- +(240:1.5) -- +(300:1.5) -- cycle;
    \end{scope}
    \draw [edge] (rA) to node [left,xshift=-2pt] {$e_A$} (p1) -- (x1) (p1) -- (p2) -- (x2) (p2) -- (r) -- (p3) -- (x3)
    (xk-1) -- (pk-1) -- (pk) (xk) -- (pk) to node [right,xshift=1pt] {$e_B$} (rB);
    \draw [edge] (p3) -- (barycentric cs:p3=0.685,pk-1=0.315);
    \draw [edge] (pk-1) -- (barycentric cs:p3=0.315,pk-1=0.685);
    \draw [edge,dotted] (barycentric cs:p3=0.315,pk-1=0.685) -- (barycentric cs:p3=0.685,pk-1=0.315);
    \node [anchor=north,yshift=-5mm] at (current bounding box.south) {$T_1$};
  \end{tikzpicture}%
  \hspace{\stretch{2}}%
  \begin{tikzpicture}
    \path       node [vertex,label=left:$r_C$]           (rA)   {}
    ++(60:1)    node [vertex,label=left:$p_1$]           (p1)   {}
    +(300:1)    node [vertex,label=below:$x_1$]          (x1)   {}
    ++(60:1)    node [vertex,label=left:$p_2$]           (p2)   {}
    +(300:1)    node [vertex,label=below:$x_2$]          (x2)   {}
    +(60:1)     node [vertex,xshift=5mm,label=above:$r$] (r)    {}
    ++(0:2)     node [vertex,label=right:$p_3$]          (p3)   {}
    +(240:1)    node [vertex,label=below:$x_3$]          (x3)   {}
    ++(300:1.5) node [vertex,label=right:$p_{k-1}$]      (pk-1) {}
    +(240:1)    node [vertex,label=below:$x_{k-1}$]      (xk-1) {}
    ++(300:1)   node [vertex,label=right:$p_k$]          (pk)   {}
    +(240:1)    node [vertex,label=below:$x_x$]          (xk)   {}
    ++(300:1)   node [vertex,label=right:$r_D$]          (rB)   {};
    \path (rA) +(270:0.9) node {$T_C$};
    \path (rB) +(270:0.9) node {$T_D$};
    \begin{scope}[on background layer]
      \path [subtree] (rA.center) -- +(240:1.5) -- +(300:1.5) -- cycle;
      \path [subtree] (rB.center) -- +(240:1.5) -- +(300:1.5) -- cycle;
    \end{scope}
    \draw [edge] (rA) to node [left,xshift=-2pt] {$e_C$} (p1) -- (x1) (p1) -- (p2) -- (x2) (p2) -- (r) -- (p3) -- (x3)
    (xk-1) -- (pk-1) -- (pk) (xk) -- (pk) to node [right,xshift=1pt] {$e_D$} (rB);
    \draw [edge] (p3) -- (barycentric cs:p3=0.685,pk-1=0.315);
    \draw [edge] (pk-1) -- (barycentric cs:p3=0.315,pk-1=0.685);
    \draw [edge,dotted] (barycentric cs:p3=0.315,pk-1=0.685) -- (barycentric cs:p3=0.685,pk-1=0.315);
    \node [anchor=north,yshift=-5mm] at (current bounding box.south) {$T_2$};
  \end{tikzpicture}%
  \hspace{\stretch{1}}
  \caption{The rooted versions of $T_1$ and $T_2$ considered for the discussion of chain reduction.}
  \label{fig:chain-reduction-rooted}
\end{figure}
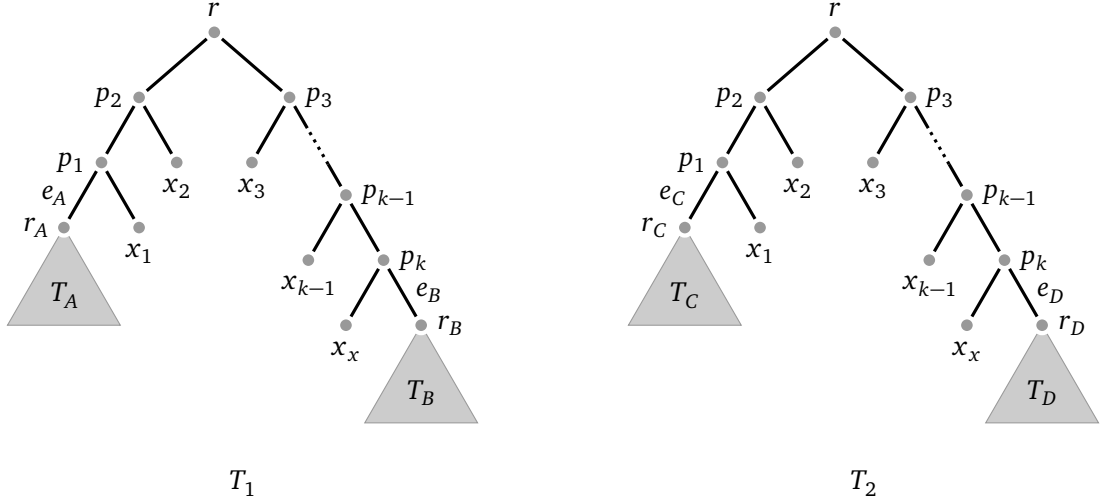

Throughout the remainder of this subsection, we consider rooted versions of
$T_1$ and $T_2$, where the root $r$ is placed on the edge $(p_2, p_3)$ in both
trees.  See \cref{fig:chain-reduction-rooted}.  This makes $r$ the common parent
of $p_2$ and $p_{k-1}$ in the two trees obtained from $T_1$ and $T_2$ by
applying chain reduction.  Now fix an optimal $t$-state character $f : X
\rightarrow S$, that is, a $t$-state character such that $\dmp^t(T_1, T_2) =
|l_f(T_1) - l_f(T_2)|$, and assume w.l.o.g.\ that $l_f(T_1) \le l_f(T_2)$, so
$\dmp^t(T_1, T_2) = l_f(T_2) - l_f(T_1)$.  For $i \in \{1, 2\}$, let $F_i$ be
the Fitch map of $T_i$ defined by~$f$, and let $u_i$ be the the number of union
vertices among $r, p_1, \ldots, p_k$ in~$T_i$.  Finally, let $f_P$ be the
restriction of $f$ to $T_P$ and let $S_P$ be the Fitch set of $r_P$ defined by
the character~$f_P$, for $P \in \{A, B, C, D\}$. If $T_P$ is empty, then let
$S_P = S$. Then
\begin{align*}
    l_f(T_1) &= l_{f_A}(T_A) + l_{f_B}(T_B) + u_1,\\
    l_f(T_2) &= l_{f_C}(T_C) + l_{f_D}(T_D) + u_2.
\end{align*}
In particular,
\begin{equation} \label{eq3.2}
   \dmp^t(T_1, T_2) = l_{f_C}(T_C) + l_{f_D}(T_D) - l_{f_A}(T_A) - l_{f_B}(T_B) + u_2 - u_1.
\end{equation}

The less constrained roots argument now bounds the difference $u_2 - u_1$
depending on whether $S_A \subseteq S_C$ and $S_B \subseteq S_D$. The name
refers to the fact that, for example, $S_A \subseteq S_C$ implies that the
choice of the state $\bar f_2(r_C)$ in a Fitch extension $\bar f_2$ of $f$ to
$T_2$ is less constrained than the choice of $\bar f_1(r_A)$ in a Fitch
extension $\bar f_1$ of $f$ to~$T_1$.

\begin{lem}
  \label{lemma:lcra}
  Let $\delta_{AC} = 0$ if $S_A \subseteq S_C$, and $\delta_{AC} = 1$ otherwise.
  Similarly, let $\delta_{BD} = 0$ if $S_B \subseteq S_D$, and $\delta_{BD} = 1$
  otherwise.  Then $u_2 - u_1 \le \delta_{AC} + \delta_{BD}$.
\end{lem}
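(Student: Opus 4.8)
The plan is to recognize that $u_1$ and $u_2$ are two evaluations of a single function of the Fitch sets fed into the chain from below, and that this function is both monotone and $1$-Lipschitz in each of its two arguments. Let $K$ be the subtree induced by $\{r,p_1,\dots,p_k,x_1,\dots,x_k\}$; since the chain and the root placement are identical in both trees, $K$ is the same caterpillar in $T_1$ and $T_2$, and the only difference between the trees lies in the pendant subtrees ($T_A,T_B$ versus $T_C,T_D$) attached at $p_1$ and $p_k$. Writing $[\,\varphi\,]$ for the indicator that is $1$ when $\varphi$ holds and $0$ otherwise, I define, for subsets $S_L,S_R\subseteq S$,
\[
  \Phi(S_L,S_R)=\min_{\bar g}\Big(\ES{\bar g}{K}+[\bar g(p_1)\notin S_L]+[\bar g(p_k)\notin S_R]\Big),
\]
where $\bar g$ ranges over all labellings of $V(K)$ agreeing with $f$ on $x_1,\dots,x_k$.

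First I would prove the reformulation $u_1=\Phi(S_A,S_B)$ and $u_2=\Phi(S_C,S_D)$. The key ingredient is the standard property of Fitch sets (a consequence of \cref{lem:fitch-extension}) that the Fitch set $\Sigma$ of the root of a pendant subtree is exactly its set of optimal root states, and that forcing that root to a state outside $\Sigma$ costs precisely one extra mutation; hence, for a fixed colour $c$ of the attachment point, the cheapest extension of $f_A$ into $T_A$ costs $\PS{f_A}{T_A}+[c\notin S_A]$, and likewise for $T_B,T_C,T_D$. Splitting an optimal extension of $f$ to $T_1$ along the edges $e_A$ and $e_B$ then gives $\PS{f}{T_1}=\PS{f_A}{T_A}+\PS{f_B}{T_B}+\Phi(S_A,S_B)$, and comparing with the decomposition $\PS{f}{T_1}=\PS{f_A}{T_A}+\PS{f_B}{T_B}+u_1$ stated above yields $u_1=\Phi(S_A,S_B)$; the argument for $u_2=\Phi(S_C,S_D)$ is symmetric.

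With the reformulation in hand, the two properties I need are immediate from the pointwise behaviour of the objective. For \emph{monotonicity}, if $S_L\subseteq S_L'$ then $[\bar g(p_1)\notin S_L']\le[\bar g(p_1)\notin S_L]$ for every $\bar g$, so the objective for $(S_L',S_R)$ is pointwise at most that for $(S_L,S_R)$ and hence $\Phi(S_L',S_R)\le\Phi(S_L,S_R)$, and symmetrically in the second argument. For the \emph{Lipschitz bound}, reusing a labelling $\bar g$ that is optimal for $(S_L,S_R)$ in the instance $(S_L',S_R)$ changes only the term $[\bar g(p_1)\notin S_L]$, and by at most $1$, so $\Phi(S_L',S_R)\le\Phi(S_L,S_R)+1$, again symmetrically. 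I would then change one argument at a time: if $S_A\subseteq S_C$ (so $\delta_{AC}=0$) monotonicity gives $\Phi(S_C,S_D)\le\Phi(S_A,S_D)$, and otherwise ($\delta_{AC}=1$) the Lipschitz bound gives $\Phi(S_C,S_D)\le\Phi(S_A,S_D)+1$; in both cases $\Phi(S_C,S_D)\le\Phi(S_A,S_D)+\delta_{AC}$. The same argument applied to the second argument gives $\Phi(S_A,S_D)\le\Phi(S_A,S_B)+\delta_{BD}$, and chaining yields $u_2=\Phi(S_C,S_D)\le\Phi(S_A,S_B)+\delta_{AC}+\delta_{BD}=u_1+\delta_{AC}+\delta_{BD}$, which is the claim.

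The main obstacle is the reformulation step, namely pinning down that $u_i$ depends on the pendant subtrees \emph{only} through the single-state penalties $[\bar g(p_1)\notin S_A]$ and $[\bar g(p_k)\notin S_B]$. This is exactly where the ``less constrained roots'' intuition is made precise: once the optimal-root-states-with-unit-penalty characterisation of Fitch sets is justified, the fact that a larger set $S_C\supseteq S_A$ can only relax the penalty at $p_1$ becomes the content of monotonicity, and the remaining monotonicity, Lipschitz, and chaining steps are all routine.
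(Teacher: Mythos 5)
Your core argument is correct, and it is a genuinely different route from the paper's, \emph{in the case where all four pendant subtrees are non-empty}: the paper argues by explicitly transferring a Fitch extension $\bar f_1$ of $f$ to $T_1$ into an extension $\bar f_2$ of $f$ to $T_2$ (copying the chain labels and choosing $c=a$ when $S_A\subseteq S_C$, $d=b$ when $S_B\subseteq S_D$), whereas your monotone, $1$-Lipschitz function $\Phi$ packages that exchange into a clean potential-function statement, with the reformulation $u_i=\Phi(\cdot,\cdot)$ obtained by comparing your splitting identity against the decompositions $l_f(T_1)=l_{f_A}(T_A)+l_{f_B}(T_B)+u_1$ and $l_f(T_2)=l_{f_C}(T_C)+l_{f_D}(T_D)+u_2$ from the paper's setup. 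One local slip: it is \emph{not} true that forcing the root of a pendant subtree to a state outside its Fitch set costs precisely one extra mutation (a cherry with both leaves red costs two extra mutations if its root is forced to blue); what is true, and all your ``hence'' actually needs, is that the Fitch set is exactly the set of optimal root states, which gives the attachment-point formula $\PS{f_A}{T_A}+[c\notin S_A]$ for the cost of the subtree plus its pendant edge.

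The genuine gap is the case of empty pendant subtrees, i.e., pendant chains --- exactly the case on which the paper spends the second half of its proof (attaching an auxiliary tree $T'$ with $t$ distinctly-labelled leaves). Your claim that ``$K$ is the same caterpillar in $T_1$ and $T_2$'' fails when the chain is pendant in one tree but not the other: if $T_A=\emptyset$ but $T_C\neq\emptyset$, then $p_1=p_2$ in $T_1$ while $p_1\neq p_2$ in $T_2$, so the subtree induced by $\{r,p_1,\dots,p_k,x_1,\dots,x_k\}$ is a strictly shorter caterpillar in $T_1$ than in $T_2$. Then $u_1$ and $u_2$ are values of two \emph{different} functions $\Phi_{K_1}$ and $\Phi_{K_2}$, and your monotonicity/Lipschitz chaining does not apply as written. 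The fix stays inside your framework but must be stated: since $T_A=\emptyset$ forces $S_A=S$, the penalty at $p_1$ vanishes, and a penalty-free degree-$2$ spine vertex can always be given a neighbour's label without loss, so suppressing it does not change the optimum; hence $\Phi_{K_2}(S,S_R)=\Phi_{K_1}(S,S_R)$, and all emptiness patterns can be lifted to a common (fully subdivided) caterpillar before chaining. With that observation added --- which is arguably simpler than the paper's $T'$-attachment construction and, unlike it, does not depend on $t$ --- your proof is complete.
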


\begin{proof}
  Suppose first that all four subtrees $T_A,T_B,T_C,T_D$ are non-empty, and
  consider a Fitch extension $\bar f_1$ of $f$ to~$T_1$. Then $\bar f_1(r_A) = a
  \in S_A$, and $\bar f_1(r_B) = b \in S_B$.  We construct an extension $\bar
  f_2$ of $f$ to $T_2$ as follows: 
  
  We start by setting $\bar f_2(v) = f(v)$ for
  every leaf $v \in X$.
  To define the labels of all internal vertices of~$T_2$, we pick states $c \in
  S_C$ and $d \in S_D$, and set $\bar f_2(r_C) = c$ and $\bar f_2(r_D) = d$. If
  $S_A \subseteq S_C$, then we choose $c = a$. If $S_B \subseteq S_D$, then we
  choose $d = b$. We label the remaining vertices in $T_C$ and $T_D$ so that the
  restriction of $\bar f_2$ to $T_C$ is a Fitch extension of~$f_C$, and the
  restriction of $\bar f_2$ to $T_D$ is a Fitch extension of~$f_D$. Finally, we
  complete $\bar f_2$ by setting $\bar f_2(p_i) = \bar f_1(p_i)$ for all $1 \le
  i \le k$.\footnote{Note that this is well defined because $T_A,T_B,T_C,T_D \ne
  \emptyset$ implies that no two leaves $x_i$ and $x_j$ have the same parent in
  either $T_1$ or~$T_2$.}

  This ensures that $T_1(\{x_1, \ldots, x_k\})$ and $T_2(\{x_1, \ldots, x_k\})$
  contain the same mutation edges with respect to $\bar f_1$ and $\bar f_2$,
  respectively.  The edge $e_C$ is a mutation edge only if $e_A$ is or $S_A
  \not\subseteq S_C$.  The edge $e_D$ is a mutation edge only if $e_B$ is or
  $S_B \not\subseteq S_D$.  Since the restrictions of $\bar f_1$ and $\bar f_2$
  to $T_A$, $T_B$, $T_C$, and $T_D$ are Fitch extensions of $f_A$, $f_B$, $f_C$,
  and~$f_D$, this shows that
  \begin{equation*}
    \Delta_{\bar f_2}(T_2) - \Delta_{\bar f_1}(T_1) \le
    l_{f_C}(T_C) + l_{f_D}(T_D) - l_{f_A}(T_A) - l_{f_B}(T_B) + \delta_{AC} + \delta_{BD}.
  \end{equation*}
  Since $\bar f_2$ is an extension of $f$ to $T_2$ and $\bar f_1$ is a Fitch
  extension of $f$ to $T_1$, we also have $l_f(T_2) \le \Delta_{\bar f_2}(T_2)$
  and $l_f(T_1) = \Delta_{\bar f_1}(T_1)$. Thus,
  \begin{equation*}
    \dmp^t(T_1, T_2) = l_f(T_2) - l_f(T_1) \le
    l_{f_C}(T_C) + l_{f_D}(T_D) - l_{f_A}(T_A) - l_{f_B}(T_B) + \delta_{AC} + \delta_{BD}.
  \end{equation*}
  Together with \cref{eq3.2}, this shows that
  \begin{equation*}
    u_2 - u_1 \le \delta_{AC} + \delta_{BD}.
  \end{equation*}

  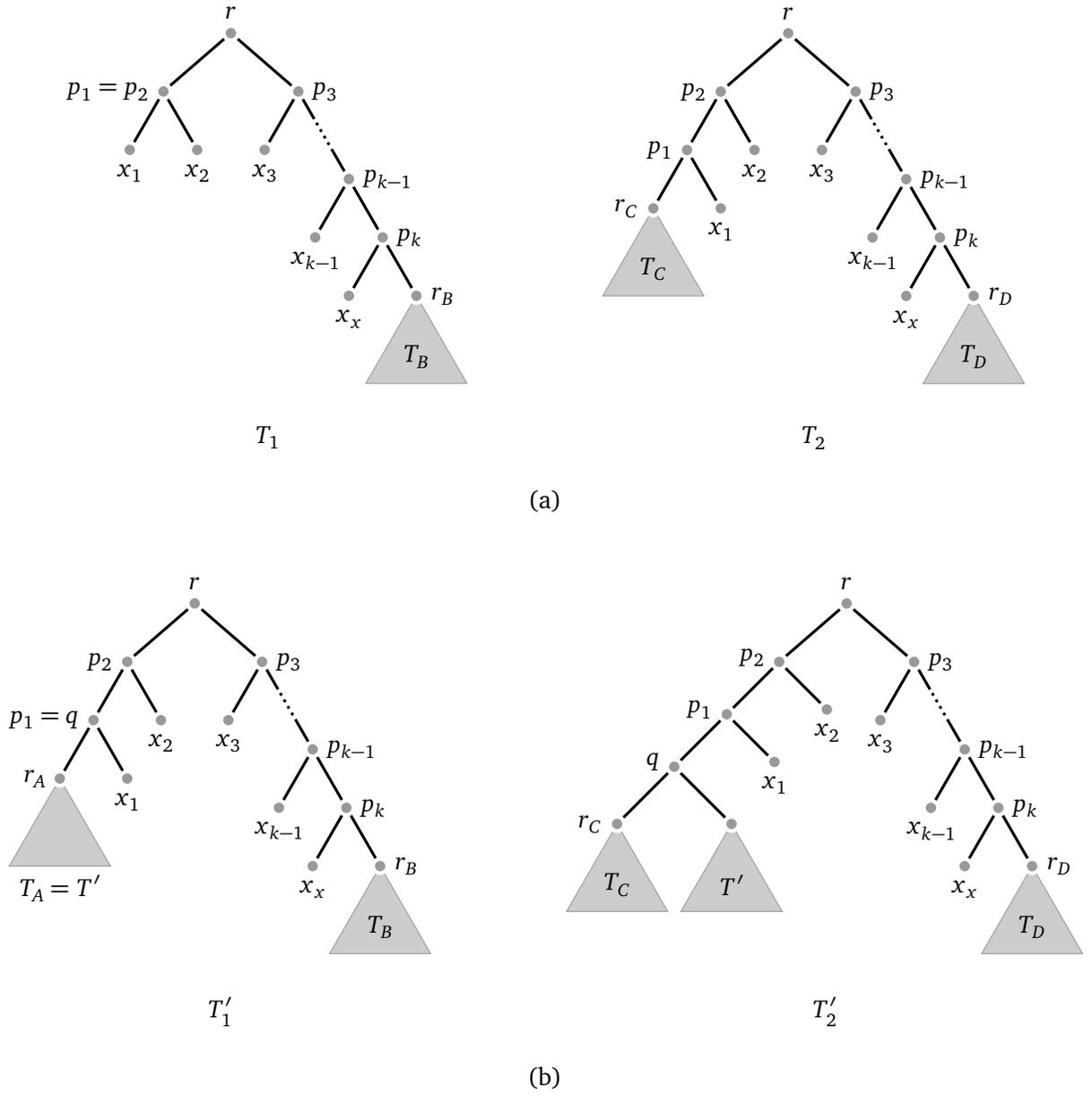
\begin{figure}[p]
    \centering
    \subcaptionbox{}{\begin{tikzpicture}
      \path       node [vertex,label=below:$x_1$]          (x1)   {}
      ++(60:1)    node [vertex,label=left:{$p_1 = p_2$}]   (p2)   {}
      +(300:1)    node [vertex,label=below:$x_2$]          (x2)   {}
      +(60:1)     node [vertex,xshift=5mm,label=above:$r$] (r)    {}
      ++(0:2)     node [vertex,label=right:$p_3$]          (p3)   {}
      +(240:1)    node [vertex,label=below:$x_3$]          (x3)   {}
      ++(300:1.5) node [vertex,label=right:$p_{k-1}$]      (pk-1) {}
      +(240:1)    node [vertex,label=below:$x_{k-1}$]      (xk-1) {}
      ++(300:1)   node [vertex,label=right:$p_k$]          (pk)   {}
      +(240:1)    node [vertex,label=below:$x_x$]          (xk)   {}
      ++(300:1)   node [vertex,label=right:$r_B$]          (rB)   {};
      \path (rB) +(270:0.9) node {$T_B$};
      \begin{scope}[on background layer]
        \path [subtree] (rB.center) -- +(240:1.5) -- +(300:1.5) -- cycle;
      \end{scope}
      \draw [edge] (x1) -- (p2) -- (x2) (p2) -- (r) -- (p3) -- (x3)
      (xk-1) -- (pk-1) -- (pk) (xk) -- (pk) -- (rB);
      \draw [edge] (p3) -- (barycentric cs:p3=0.685,pk-1=0.315);
      \draw [edge] (pk-1) -- (barycentric cs:p3=0.315,pk-1=0.685);
      \draw [edge,dotted] (barycentric cs:p3=0.315,pk-1=0.685) -- (barycentric cs:p3=0.685,pk-1=0.315);
      \node [anchor=north,yshift=-5mm] at (current bounding box.south) {$T_1$};
    \end{tikzpicture}%
    \hspace{2cm}%
    \begin{tikzpicture}
      \path       node [vertex,label=left:$r_C$]           (rA)   {}
      ++(60:1)    node [vertex,label=left:$p_1$]           (p1)   {}
      +(300:1)    node [vertex,label=below:$x_1$]          (x1)   {}
      ++(60:1)    node [vertex,label=left:$p_2$]           (p2)   {}
      +(300:1)    node [vertex,label=below:$x_2$]          (x2)   {}
      +(60:1)     node [vertex,xshift=5mm,label=above:$r$] (r)    {}
      ++(0:2)     node [vertex,label=right:$p_3$]          (p3)   {}
      +(240:1)    node [vertex,label=below:$x_3$]          (x3)   {}
      ++(300:1.5) node [vertex,label=right:$p_{k-1}$]      (pk-1) {}
      +(240:1)    node [vertex,label=below:$x_{k-1}$]      (xk-1) {}
      ++(300:1)   node [vertex,label=right:$p_k$]          (pk)   {}
      +(240:1)    node [vertex,label=below:$x_x$]          (xk)   {}
      ++(300:1)   node [vertex,label=right:$r_D$]          (rB)   {};
      \path (rA) +(270:0.9) node {$T_C$};
      \path (rB) +(270:0.9) node {$T_D$};
      \begin{scope}[on background layer]
        \path [subtree] (rA.center) -- +(240:1.5) -- +(300:1.5) -- cycle;
        \path [subtree] (rB.center) -- +(240:1.5) -- +(300:1.5) -- cycle;
      \end{scope}
      \draw [edge] (rA) -- (p1) -- (x1) (p1) -- (p2) -- (x2) (p2) -- (r) -- (p3) -- (x3)
      (xk-1) -- (pk-1) -- (pk) (xk) -- (pk) -- (rB);
      \draw [edge] (p3) -- (barycentric cs:p3=0.685,pk-1=0.315);
      \draw [edge] (pk-1) -- (barycentric cs:p3=0.315,pk-1=0.685);
      \draw [edge,dotted] (barycentric cs:p3=0.315,pk-1=0.685) -- (barycentric cs:p3=0.685,pk-1=0.315);
      \node [anchor=north,yshift=-5mm] at (current bounding box.south) {$T_2$};
    \end{tikzpicture}}\\[2\bigskipamount]
    \subcaptionbox{}{\begin{tikzpicture}
      \path       node [vertex,label=left:$r_A$]           (rA)   {}
      ++(60:1)    node [vertex,label=left:{$p_1 = q$}]     (p1)   {}
      +(300:1)    node [vertex,label=below:$x_1$]          (x1)   {}
      ++(60:1)    node [vertex,label=left:$p_2$]           (p2)   {}
      +(300:1)    node [vertex,label=below:$x_2$]          (x2)   {}
      +(60:1)     node [vertex,xshift=5mm,label=above:$r$] (r)    {}
      ++(0:2)     node [vertex,label=right:$p_3$]          (p3)   {}
      +(240:1)    node [vertex,label=below:$x_3$]          (x3)   {}
      ++(300:1.5) node [vertex,label=right:$p_{k-1}$]      (pk-1) {}
      +(240:1)    node [vertex,label=below:$x_{k-1}$]      (xk-1) {}
      ++(300:1)   node [vertex,label=right:$p_k$]          (pk)   {}
      +(240:1)    node [vertex,label=below:$x_x$]          (xk)   {}
      ++(300:1)   node [vertex,label=right:$r_B$]          (rB)   {};
      \path (rB) +(270:0.9) node {$T_B$};
      \begin{scope}[on background layer]
        \path [subtree] (rA.center) -- +(240:1.5) coordinate (TAl) -- +(300:1.5) coordinate (TAr) -- cycle;
        \path [subtree] (rB.center) -- +(240:1.5) -- +(300:1.5) -- cycle;
      \end{scope}
      \node [anchor=north] at (barycentric cs:TAl=0.5,TAr=0.5) {$T_A = T'$};
      \draw [edge] (rA) -- (p1) -- (x1) (p1) -- (p2) -- (x2) (p2) -- (r) -- (p3) -- (x3)
      (xk-1) -- (pk-1) -- (pk) (xk) -- (pk) -- (rB);
      \draw [edge] (p3) -- (barycentric cs:p3=0.685,pk-1=0.315);
      \draw [edge] (pk-1) -- (barycentric cs:p3=0.315,pk-1=0.685);
      \draw [edge,dotted] (barycentric cs:p3=0.315,pk-1=0.685) -- (barycentric cs:p3=0.685,pk-1=0.315);
      \node [anchor=north,yshift=-5mm] at (current bounding box.south) {$T_1'$};
    \end{tikzpicture}%
    \hspace{2cm}%
    \begin{tikzpicture}
      \path       node [vertex,label=left:$r_C$]               (rA)   {}
      ++(45:1.2)  node [vertex,label={[yshift=2pt]left:$q$}]   (q)    {}
      +(315:1.2)  node [vertex]                                (rTT)  {}
      ++(45:1.1)  node [vertex,label={[yshift=2pt]left:$p_1$}] (p1)   {}
      +(315:1)    node [vertex,label=below:$x_1$]              (x1)   {}
      ++(45:1.1)  node [vertex,label={[yshift=2pt]left:$p_2$}] (p2)   {}
      +(315:1)    node [vertex,label=below:$x_2$]              (x2)   {}
      +(60:1)     node [vertex,xshift=5mm,label=above:$r$]     (r)    {}
      ++(0:2)     node [vertex,label=right:$p_3$]              (p3)   {}
      +(240:1)    node [vertex,label=below:$x_3$]              (x3)   {}
      ++(300:1.5) node [vertex,label=right:$p_{k-1}$]          (pk-1) {}
      +(240:1)    node [vertex,label=below:$x_{k-1}$]          (xk-1) {}
      ++(300:1)   node [vertex,label=right:$p_k$]              (pk)   {}
      +(240:1)    node [vertex,label=below:$x_x$]              (xk)   {}
      ++(300:1)   node [vertex,label=right:$r_D$]              (rB)   {};
      \path (rA)  +(270:0.9) node {$T_C$};
      \path (rB)  +(270:0.9) node {$T_D$};
      \path (rTT) +(270:0.9) node {$T'$};
      \begin{scope}[on background layer]
        \path [subtree] (rA.center)  -- +(240:1.5) -- +(300:1.5) -- cycle;
        \path [subtree] (rB.center)  -- +(240:1.5) -- +(300:1.5) -- cycle;
        \path [subtree] (rTT.center) -- +(240:1.5) -- +(300:1.5) -- cycle;
      \end{scope}
      \draw [edge] (rA) -- (q) -- (rTT) (q) -- (p1) -- (x1) (p1) -- (p2) -- (x2) (p2) -- (r) -- (p3) -- (x3)
      (xk-1) -- (pk-1) -- (pk) (xk) -- (pk) -- (rB);
      \draw [edge] (p3) -- (barycentric cs:p3=0.685,pk-1=0.315);
      \draw [edge] (pk-1) -- (barycentric cs:p3=0.315,pk-1=0.685);
      \draw [edge,dotted] (barycentric cs:p3=0.315,pk-1=0.685) -- (barycentric cs:p3=0.685,pk-1=0.315);
      \node [anchor=north,yshift=-5mm] at (current bounding box.south) {$T_2'$};
    \end{tikzpicture}}
    \caption{(a) A common chain $\langle x_1, \ldots, x_k \rangle$ of $T_1$ and
    $T_2$ that is pendant in $T_1$: $T_A$ is empty. (b) The two trees $T_1'$ and
    $T_2'$ obtained by attaching a tree $T_A = T'$ to $T_1$ and adding $T'$
    to~$T_C$.}
    \label{fig:attach-subtree}
    \end{figure}

  To complete the proof, we consider the case when at least one of the subtrees
  $T_A, T_B, T_C, T_D$ is empty.  If $T_A = \emptyset$ or $T_C = \emptyset$,
  then we construct a \emph{rooted} tree $T'$ with $t$ leaves.  \nz{(Recall that
  $t = |S|$ is the number of available states.)}  \markj{If $T_A = \emptyset$,
  then we have $p_1 = p_2$.  In this case, we subdivide the edge $(p_2,x_1)$ in
  $T_1$ with a new vertex $q$, and add an edge between $q$ and the root of $T'$.
  This effectively sets $T_A = T'$ and makes $q$ the parent of $x_1$, that is,
  $p_1 = q$ after adding $T'$ to $T_1$.  If $T_A \neq \emptyset$, then we
  subdivide the edge $(p_1,r_A)$ with a new vertex $q$ and again add an edge
  between $q$ and the root of $T'$.  Similarly, in $T_2$ we add an edge between
  the root of $T'$ and a new vertex $q$, where $q$ subdivides the edge
  $(p_2,x_1)$ or $(p_1,r_C)$ depending on whether $S_C = \emptyset$.} This is
  illustrated in \cref{fig:attach-subtree}.  We add a tree $T\dprime$ in a
  similar fashion if $T_B = \emptyset$ or $T_D = \emptyset$.  Let $T_1'$ and
  $T_2'$ be the two trees obtained from $T_1$ and $T_2$ by the addition of~$T'$,
  and possibly~$T\dprime$; let $f'$ be the character on the leaf set of $T_1'$
  and $T_2'$ obtained by setting $f'(v) = f(v)$ for all $v \in X$, giving each
  leaf in $T'$ a different label in $S$, and giving each leaf in $T\dprime$ a
  different label in $S$; and let $F_1'$ and $F_2'$ be the Fitch maps of $f'$ on
  $T_1'$ and $T_2'$, respectively.  \markj{Similar to $p_1, \ldots, p_k$, we use
  $q$ to denote the vertex adjacent to the root of $T'$ in both $T_1'$ and
  $T_2'$.}
  
  Observe that every non-leaf vertex in $T'$ is a union vertex in both $T_1'$
  and~$T_2'$, while $q$ is an intersection vertex in both $T_1'$ and~$T_2'$.
  Moreover, if $T_A = \emptyset$, then $F_1'(q) = \{f(x_1)\} = F_1(x_1)$; if
  $T_A \ne \emptyset$, then $F_1'(q) = F_1(r_A)$.  Similarly, $F_2'(q) =
  F_2(x_1)$ if $T_C = \emptyset$, and $F_2'(q) = F_2(r_C)$ if $T_C \ne
  \emptyset$.  This implies that $F_1(v) = F_1'(v)$ for every vertex $v \in
  T_1$, and $F_2(v) = F_2'(v)$ for every vertex $v \in T_2$.  In particular, the
  addition of $T'$ introduces the non-leaf vertices of $T'$ as union vertices
  into $T_1'$ and $T_2'$ and apart from this, $T_1$ and $T_1'$ have the same
  sets of union vertices, as do $T_2$ and~$T_2'$.  By a similar argument, the
  addition of $T\dprime$ if $T_B = \emptyset$ or $T_D = \emptyset$ introduces
  the same number of union vertices into both $T_1'$ and~$T_2'$.
  
  This implies that
  \begin{equation*}
    l_f(T_2) - l_f(T_1) = l_{f'}(T_2') - l_{f'}(T_1').
  \end{equation*}
  
  Finally observe that the sets $S_A$, $S_B$, $S_C$, and $S_D$, \markj{and thus
  $\delta_{AC}, \delta_{BD}$,} are the same for $T_1$ and $T_2$ as for $T_1'$
  and $T_2'$.  Indeed, if $T_A \ne \emptyset$, then $S_A = F_1(r_A) = F_1'(q)$.
  If $T_A = \emptyset$, then we define $S_A = S$ for $T_1$.  In~$T_1'$, $S_A =
  S$ because $T_A = T'$ in $T_1'$ and all internal vertices in $T'$ are union
  vertices.  In both cases, $q$ plays the role of $r_A$ in~$T_1'$.  Analogous
  arguments show that $S_B$, $S_C$, and $S_D$ are the same for $T_1$ and $T_2$
  as for $T_1'$ and $T_2'$.

  By applying the case when $T_A, T_B, T_C, T_D$
  are all non-empty to $T_1'$ and $T_2'$, we conclude that
  \begin{equation*}
    l_f(T_2) - l_f(T_1) = l_f(T_2') - l_f(T_1') \le
    l_{f_C}(T_C) + l_{f_D}(T_D) - l_{f_A}(T_A) - l_{f_B}(T_B) + \delta_{AC} + \delta_{BD},
  \end{equation*}
  that is, once again,
  \begin{equation*}
    u_2 - u_1 \le \delta_{AC} + \delta_{BD}.
  \end{equation*}
\end{proof}

We are ready to prove that chain reduction is safe now:

\begin{lem}
  \label{lem:ChainReduction}
  Let $T_1$ and $T_2$ be two 
  trees on $X$, let $\langle x_1,
  \ldots, x_k \rangle$ be a common chain of $T_1$ and $T_2$ of length $k \geq
  5$, and let $T_1'$ and $T_2'$ be the two trees obtained by removing the leaves
  $x_3, \ldots, x_{k-2}$ from both $T_1$ and $T_2$ and suppressing their
  parents.  Then $\dmp^t(T_1, T_2) = \dmp^t(T'_1, T'_2)$ for all $t \in \trange$.
\end{lem}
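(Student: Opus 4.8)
The plan is to prove the two inequalities $\dmp^t(T_1',T_2') \le \dmp^t(T_1,T_2)$ and $\dmp^t(T_1',T_2') \ge \dmp^t(T_1,T_2)$ separately. The first is immediate: $T_1'$ and $T_2'$ are exactly the restrictions of $T_1$ and $T_2$ to $X\setminus\{x_3,\ldots,x_{k-2}\}$, so \cref{lem:subset-restriction} yields $\dmp^t(T_1',T_2') \le \dmp^t(T_1,T_2)$. All the work lies in the reverse inequality.

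For that, I would start from the optimal $t$-state character $f$ fixed in the setup above (with $l_f(T_1)\le l_f(T_2)$) and build a $t$-state character $f'$ on the reduced leaf set by keeping $f'(v)=f(v)$ for every $v\in X_A\cup X_B = X_C\cup X_D$ and assigning states only to the four surviving chain leaves $x_1,x_2,x_{k-1},x_k$. Since $f'$ agrees with $f$ on all of $T_A,T_B,T_C,T_D$, the restrictions $f_A,f_B,f_C,f_D$, the root Fitch sets $S_A,S_B,S_C,S_D$, and hence $\delta_{AC},\delta_{BD}$ are unchanged. In the rooted reduced trees, $r$ is the common parent of $p_2$ and $p_{k-1}$, so, analogously to the decomposition preceding \eqref{eq3.2} and using \cref{lem:fitch-extension} to count union vertices inside $T_A/T_B$, inside $T_C/T_D$, and on the surviving spine $\{r,p_1,p_2,p_{k-1},p_k\}$, we obtain $l_{f'}(T_1') = l_{f_A}(T_A)+l_{f_B}(T_B)+u_1'$ and $l_{f'}(T_2') = l_{f_C}(T_C)+l_{f_D}(T_D)+u_2'$, where $u_i'$ counts union vertices on that spine. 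It therefore suffices to choose the four chain labels so that $u_2'-u_1'\ge\delta_{AC}+\delta_{BD}$; combined with \cref{lemma:lcra} ($u_2-u_1\le\delta_{AC}+\delta_{BD}$) and \eqref{eq3.2}, this gives $l_{f'}(T_2')-l_{f'}(T_1')\ge\dmp^t(T_1,T_2)$ and hence $\dmp^t(T_1',T_2')\ge\dmp^t(T_1,T_2)$.

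The key construction is to set $f'(x_1)=f'(x_2)=s_1$ and $f'(x_{k-1})=f'(x_k)=s_2$, for states $s_1,s_2$ chosen below. Equalizing the two leaves on each side is exactly what decouples the analysis: a bottom-up Fitch computation shows that the Fitch set at $p_2$ collapses to $\{s_1\}$ in both $T_1'$ and $T_2'$ and that $p_2$ is never a union vertex, while $p_1$ is a union vertex precisely when $s_1\notin S_A$ in $T_1'$ and when $s_1\notin S_C$ in $T_2'$; symmetrically, the Fitch set at $p_{k-1}$ collapses to $\{s_2\}$, $p_{k-1}$ is never a union vertex, and $p_k$ is a union vertex precisely when $s_2\notin S_B$ (resp.\ $s_2\notin S_D$). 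Because $F(p_2)=\{s_1\}$ and $F(p_{k-1})=\{s_2\}$ are identical in the two trees, $r$ is a union vertex exactly when $s_1\ne s_2$ in both, so its contribution cancels. Writing $\chi(P)=1$ if $P$ holds and $0$ otherwise, this gives $u_2'-u_1' = \bigl(\chi(s_1\in S_A)-\chi(s_1\in S_C)\bigr) + \bigl(\chi(s_2\in S_B)-\chi(s_2\in S_D)\bigr)$. If $\delta_{AC}=1$ (that is, $S_A\not\subseteq S_C$) I pick $s_1\in S_A\setminus S_C$, making the first term $1=\delta_{AC}$; if $\delta_{AC}=0$ I pick $s_1\in S_A$ (nonempty), so $s_1\in S_C$ and the first term is $0=\delta_{AC}$. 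Choosing $s_2$ symmetrically gives $u_2'-u_1'=\delta_{AC}+\delta_{BD}$, exactly as needed.

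The main obstacle I anticipate is checking that this bookkeeping holds uniformly whether or not the pendant subtrees are empty. With the convention $S_P=S$ for an empty $T_P$, an empty $T_A$ merges $p_1$ and $p_2$ into a single intersection vertex and forces $\chi(s_1\in S_A)=1$, so the expression for $u_1'$ degenerates correctly; the same applies to $T_B,T_C,T_D$, and the required choices of $s_1,s_2$ remain available since $\delta_{AC}=1$ is by definition $S_A\setminus S_C\ne\emptyset$ and all Fitch sets are nonempty. Confirming these degenerate cases, together with verifying that equalizing the two leaves on each side genuinely severs the dependence of $r$'s status on $S_A,S_B,S_C,S_D$, is the delicate part; once it is in place, the chain of inequalities above closes the argument for every $t\in\trange$.
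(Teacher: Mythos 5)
Your proposal is correct and follows essentially the same route as the paper: one direction via \cref{lem:subset-restriction}, and the other by defining $f'$ to be constant ($s_1$ on $x_1,x_2$ and $s_2$ on $x_{k-1},x_k$, i.e., the paper's $a$ and $b$), noting that this makes $p_2$ and $p_{k-1}$ intersection vertices with singleton Fitch sets so that $r$'s union status agrees in $T_1'$ and $T_2'$, and then choosing $s_1\in S_A$ (or $S_A\setminus S_C$ when $\delta_{AC}=1$) and $s_2$ symmetrically to beat the bound from \cref{lemma:lcra} and \cref{eq3.2}. Your indicator identity $u_2'-u_1'=\bigl(\chi(s_1\in S_A)-\chi(s_1\in S_C)\bigr)+\bigl(\chi(s_2\in S_B)-\chi(s_2\in S_D)\bigr)$ is just a sharper restatement of the paper's $\chi_1\ge\delta_{AC}$, $\chi_k\ge\delta_{BD}$ bookkeeping, and your handling of the empty-subtree cases (via the convention $S_P=S$ and the merged vertex $p_1=p_2$) matches the paper's treatment.
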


\begin{proof}
  The proof is based on the proof by Kelk et al.\
  \cite{kelkReductionRulesMaximum2016} but presents the argument much more
  succinctly, and obviously makes adjustments to ensure that the proof is
  correct for $t$-state characters.
  
  By \cref{lem:subset-restriction}, we have that $\dmp^t(T_1', T_2') \le
  \dmp^t(T_1, T_2)$. Therefore, it suffices to show that $\dmp^t(T_1', T_2') \ge
  \dmp^t(T_1, T_2)$.  Let $f$ be an optimal character for $(T_1, T_2)$, and
  assume that $\dmp^t(T_1, T_2) = \PS{f}{T_2} - \PS{f}{T_1}$.  We construct a
  $t$-state character $f'$ on the leaf set $X \setminus \{x_3, \ldots,
  x_{k-2}\}$ of $T_1'$ and $T_2'$ such that $l_{f'}(T_2') - l_{f'}(T_1') \ge
  l_f(T_2) - l_f(T_1) = \dmp^t(T_1, T_2)$.  Since $\dmp^t(T_1', T_2') \ge
  l_{f'}(T_2') - l_{f'}(T_1')$, this proves the claim.  We define $f'$ as
  \begin{equation*}
    f'(v) = \begin{cases}
      a    & \text{if } v \in \{x_1, x_2\}\\
      b    & \text{if } v \in \{x_{k-1}, x_k\}\\
      f(v) & \text{otherwise,}
    \end{cases}
  \end{equation*}
  where $a, b \in S$ are appropriate states chosen as discussed below.

  For this character, we use $F_1'$ and $F_2'$ to denote the Fitch maps it
  defines on $T_1'$ and~$T_2'$.  Note that the choice of $f'$ ensures that $p_2$
  and $p_{k-1}$ are intersection vertices in both $T_1'$ and $T_2'$ and that
  $F_1'(p_2) = F_2'(p_2) = \{a\}$ and $F_1'(p_{k-1}) = F_2'(p_{k-1}) = \{b\}$.
  In turn, the latter implies that $r$ is a union vertex in $T_1'$ if and only
  if it is a union vertex in~$T_2'$.  Thus,
  \begin{equation*}
    l_{f'}(T_2') - f_{f'}(T_1') = l_{f_C}(T_C) + l_{f_D}(T_D) - l_{f_A}(T_A) - l_{f_B}(T_B) + \chi_1 + \chi_k,
  \end{equation*}
  where
  \begin{equation*}
    \chi_i = \begin{cases}
      -1 & \text{if $p_i$ is a union vertex in $T_1'$ but not in $T_2'$}\\
      1  & \text{if $p_i$ is a union vertex in $T_2'$ but not in $T_1'$}\\
      0  & \text{otherwise,}
    \end{cases}
  \end{equation*}
  for $i \in \{1, k\}$.
  
  By \cref{eq3.2} and \cref{lemma:lcra}, we have that
  \begin{equation*}
    \markj{l_f(T_2) - l_f(T_1) = \dmp^t(T_1, T_2) \le l_{f_C}(T_C) + l_{f_D}(T_D) - l_{f_A}(T_A) - l_{f_B}(T_B) + \delta_{AC} + \delta_{BD}.}
  \end{equation*}
  Thus, to prove that $l_{f'}(T_2') - l_{f'}(T_1') \ge l_f(T_2) - l_f(T_1)$, it
  suffices to prove that we can choose $f'$ so that $\chi_1 \ge
  \delta_{AC}$ and $\chi_k \ge \delta_{BD}$.

  We prove that we can choose $a$ so that $\chi_1 \ge \delta_{AC}$.  An
  analogous argument shows that we can choose $b$ so that $\chi_k \ge
  \delta_{BD}$.

  We choose $a \in S_A$.  If $T_A \ne \emptyset$, this ensures that $p_1$ is an
  intersection vertex.  If $T_A = \emptyset$, then $p_1 = p_2$ and $f'(x_1) =
  f'(x_2) = a$, so \emph{any} choice of $a$ ensures that $p_1$ is an
  intersection vertex.  Thus, $\chi_1 \ge 0$.  In particular, $\chi_1 \ge
  \delta_{AC}$ if $\delta_{AC} = 0$.

  If $\delta_{AC} = 1$, then $S_A \not\subseteq S_C$.  Thus, we can choose $a
  \in S_A \setminus S_C$.  This ensures not only that $p_1$ is an intersection
  vertex in $T_1'$ but also that it is a union vertex in~$T_2'$.  (In
  particular, $S_A \not\subseteq S_C$ implies that $S_C \ne S$, so $T_C \ne
  \emptyset$.)  Thus, $\chi_1 = 1 = \delta_{AC}$ in this case.
\end{proof}

\section{\boldmath A Lower Bound on $\dmp^t$}

\label{sec:kernel-size}

By \cref{thm:reduction-rules}, parsimony distance has a kernel of size linear in
the TBR distance between the two trees.  In this section, we bound the size of
this kernel as a function of the parsimony distance itself.  Specifically, we
prove the following result:

\begin{thm}
  \label{thm:lower-bound}
  Any two trees $T_1$ and $T_2$ on $X$ satisfy $\dtbr(T_1, T_2) \le 54k(\lg |X| +
  1)$, where $k = \dmp^t(T_1, T_2)$, for any $t \in \trange$.
\end{thm}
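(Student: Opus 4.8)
The plan is to prove the theorem by exhibiting a single agreement forest $F$ of $T_1$ and $T_2$ together with a set $Q$ of incompatible quartets, and by relating the size of each to the other and to $\dmp^t$. Since $\dtbr(T_1,T_2) = |F^*| - 1$ for a maximum agreement forest $F^*$ \cite{allenSubtreeTransferOperations2001}, and $F^*$ is a smallest agreement forest, \emph{any} agreement forest $F$ satisfies $\dtbr(T_1,T_2) \le |F| - 1$. Thus it suffices to produce an $F$ and a $Q$ for which $|F| \le O(|Q|\lg|X|)$ and $|Q| \le O(\dmp^t(T_1,T_2))$; chaining these gives $\dtbr(T_1,T_2) \le O(\dmp^t(T_1,T_2)\cdot\lg|X|)$, and tracking the constants through both steps yields the explicit bound $54k(\lg|X|+1)$.

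The object driving both bounds is a family of \emph{leg-disjoint incompatible quartets}. For an incompatible quartet $q=\{a,b,c,d\}$ with $T_1|_q = ab|cd$, its four \emph{legs} in $T_1$ are the paths of $T_1(q)$ joining the degree-$3$ vertices of $T_1|_q$ to the leaves $a,b,c,d$; a set $Q$ is \emph{leg-disjoint in $T_1$} if these legs are pairwise edge-disjoint across all $q \in Q$, while the quartets of $Q$ may overlap arbitrarily in $T_2$. The first main step is the parsimony lower bound: if $Q$ is leg-disjoint in $T_1$, then $\dmp^t(T_1,T_2) \ge \Omega(|Q|)$ for every $t \in \trange$. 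I would prove this by constructing a single \emph{$2$-state} character $f$ on $X$ whose restriction to each $q \in Q$ witnesses the incompatibility of $q$, so that each quartet forces (up to a constant accounting) at least one more mutation edge in $T_2$ than in $T_1$. The point of leg-disjointness in $T_1$ is that the edges where distinct quartets ``save'' a mutation are disjoint, so the savings do not cancel and accumulate to $\Omega(|Q|)$; because the quartets may overlap in $T_2$, one obtains only a linear rather than an exact count, which is all that is required. Since a $2$-state character is also a valid $t$-state character for every $t \in \trange$ (there is no requirement that $f$ be surjective) and $\dmp^2 \le \dmp^t$, a single such character covers all $t$ at once; as in \cref{sec:reduction-rules}, working with only two states is exactly the delicate point.

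The set $Q$ and the forest $F$ are produced together by a primal--dual algorithm built on the ILP formulation of the maximum agreement forest problem of \cite{werschReflectionsKernelizingComputing2020}. In this formulation, integral feasible solutions correspond to agreement forests, while the dual is a packing problem whose variables are (roughly) incompatible quartets; leg-disjointness in $T_1$ is precisely the combinatorial condition that makes a set $Q$ of such quartets a feasible integral dual, since then each edge of $T_1$ lies on the legs of at most one quartet. The algorithm repeatedly detects an incompatibility, adds a corresponding quartet to $Q$ while maintaining leg-disjointness in $T_1$, and cuts edges to extend the agreement forest $F$; complementary slackness together with a charging argument then bounds $|F|$ by $O(|Q|\lg|X|)$. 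The logarithmic factor is the integrality gap of this covering/packing pair: a single dual quartet can be charged up to $\Theta(\lg|X|)$ components of $F$, and I would control this via a balanced (potential-function) decomposition guaranteeing that each newly added quartet is charged at most logarithmically many components while preserving leg-disjointness.

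The main obstacle I anticipate is the parsimony lower bound under the \emph{weak} disjointness condition. With fully vertex-disjoint quartets one can argue that each contributes $1$ to $\dmp^t$ independently, as in \cite{jonesMaximumParsimonyDistance2021}; here the quartets share structure in $T_2$, so one must show that a single global character makes the per-quartet savings in $T_1$ survive despite arbitrary interaction in $T_2$, and that this already works with two states rather than a large state set. A secondary difficulty is tightening the constant in the $O(|Q|\lg|X|)$ bound from the primal--dual analysis enough to reach the explicit factor $54$, which demands a careful amortized accounting of how many forest components each quartet can be charged rather than a purely asymptotic argument.
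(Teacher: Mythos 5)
Your outline reproduces the paper's architecture exactly: define incompatible quartets whose legs are pairwise disjoint in $T_1$ only, prove $\dmp^t(T_1,T_2)\ge\Omega(|Q|)$ with a single $2$-state character (the paper's \cref{prop:leg-disjoint-quartets}, with constant $\frac{1}{27}$), and extract $Q$ together with an edge set hitting all incompatible quartets via a greedy algorithm motivated by the van Wersch et al.\ ILP and its dual, at cost $2(\lg n+1)$ edges per quartet (the paper's \cref{prop:finding-leg-disjoint-quartets}); the product of the two constants gives $54$. So the route is the same. The problem is that in the first and hardest step, the mechanism you describe is inverted and, taken literally, fails. You want $T_1$ to be the cheap tree, with each quartet ``saving'' a mutation on disjoint edges of $T_1$ and forcing one extra mutation in $T_2$, so that the gain $l_f(T_2)-l_f(T_1)$ accumulates as a sum of per-quartet contributions to $l_f(T_2)$. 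But the quartets may overlap arbitrarily in $T_2$: a single mutation edge of $T_2$ can serve unboundedly many quartets, so per-quartet lower bounds on $l_f(T_2)$ do not add up, and there is no disjointness available on that side to make them add up. The paper uses leg-disjointness in precisely the opposite direction: the character is engineered so that many legs in $T_1$ have differently coloured endpoints, each such leg forces a mutation edge, and leg-disjointness makes these forced mutations \emph{distinct}, yielding the additive lower bound $l_f(T_1(X'))\ge\beta_{\bar f}(Q')$; the score of $T_2$ is then bounded \emph{above} by exhibiting an explicit colouring $\bar f$ of $T_2(X')$ with few mutation edges. Disjointness must sit on the high-score side.

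A second, related gap: you assert that one global character realizes a gain for \emph{every} $q\in Q$ (``up to a constant accounting''), but this is too strong and is not what can be proved. The paper's proof of \cref{prop:leg-disjoint-quartets} must pass to a subset $Q'\subseteq Q$: it first takes a maximal fully $T_2$-disjoint packing (where the gain per quartet is clean), then greedily augments $Q'$ through a five-case analysis (Cases 1, 2, 3, 4.1--4.3), recolouring parts of $T_2(X')$ at each step so that $\beta_{\bar f}(Q')-\Delta_{\bar f}(T_2(X'))$ grows by at least $1$ per one-to-three added quartets, and finally a charging argument over the sides of $T_2(X')$ shows the unselected quartets number at most $8|Q'|$. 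The outcome is an amortized gain of $\frac{1}{3}$ per quartet of a subset of size at least $\frac{|Q|}{9}$ --- whence $\frac{1}{27}$ --- not a gain of $1$ per quartet of $Q$; configurations such as several quartets attached along a common side of $T_2(X')$ (exactly what Cases 4.1--4.3 handle) show why no single colouring services all of $Q$ simultaneously. On the algorithmic side your sketch is closer to the mark, but ``complementary slackness plus a balanced potential-function decomposition'' needs to be replaced by the concrete content of \cref{lem:few-edges}: among unhit incompatible quartets one can always choose $q'=ab|c'd'$ whose middle path in the current forest $F_1$ (which has maximum degree $3$) has length at most $2\lg n$, by replacing $c\dprime,d\dprime$ with the closest leaves $c',d'$ of the relevant components and verifying that $q'$ is still incompatible and still unhit. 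Without these two lemmas your proposal is a correct table of contents for the paper's proof rather than a proof.
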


Together with \cref{thm:reduction-rules}, this implies the following corollary:

\begin{cor}
  \label{cor:kernel-size}
  There exists a set of reduction rules for $\dmp^t$ such that a fully reduced
  yes-instance $(T_1, T_2, t, k)$ with $t \in \trange$ and $k \ge 1$ consists of
  a pair of trees on $T_1$ and $T_2$ on $X$ with $|X| \le 1{,}484k (\lg k + 11) \in
  O(k \lg k)$.
\end{cor}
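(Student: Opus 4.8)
The plan is to combine the two main results of the paper, \cref{thm:reduction-rules} and \cref{thm:lower-bound}, and then solve the resulting recurrence-like inequality for $|X|$. Let $(T_1, T_2, t, k)$ be a fully reduced yes-instance with respect to the reduction rules of \cref{thm:reduction-rules}, and write $m = |X|$ for the number of leaves. Since the instance is a yes-instance, we have $\dmp^t(T_1, T_2) \le k$, but more importantly the reduction rules guarantee by \cref{thm:reduction-rules} that $m \le 20 \cdot \dtbr(T_1, T_2)$. The crucial point is that \cref{thm:lower-bound} bounds $\dtbr(T_1, T_2)$ from above in terms of the \emph{actual} parsimony distance $k^\ast := \dmp^t(T_1, T_2)$ and $\lg m$, namely $\dtbr(T_1, T_2) \le 54 k^\ast (\lg m + 1)$. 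Since this is a yes-instance, $k^\ast \le k$, so chaining the two bounds gives $m \le 20 \cdot 54 \cdot k (\lg m + 1) = 1{,}080\, k(\lg m + 1)$.

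The remaining work is purely to eliminate the $\lg m$ on the right-hand side, turning the implicit inequality $m \le c\,k(\lg m + 1)$ with $c = 1{,}080$ into an explicit bound of the form $m \le c' k(\lg k + d)$. First I would observe that we may assume $m \ge 2$ (otherwise there is nothing to prove), so $\lg m = \log_2 m$ by the paper's convention that $\lg x = \max(1, \log_2 x)$. The standard trick is to guess a candidate bound $m \le C\,k\lg k$ for a suitable constant $C$ and verify it is self-consistent: if $m \le C k \lg k$, then $\lg m \le \lg C + \lg k + \lg\lg k = O(\lg k)$, and substituting this back into $m \le c\,k(\lg m + 1)$ reproduces a bound of the same shape, provided $C$ is chosen large enough. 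Concretely, I would plug the target $\lg m \le \lg(1{,}484\,k(\lg k + 11))$ into the inequality $m \le 1{,}080\,k(\lg m + 1)$ and check that the claimed constant $1{,}484$ and additive term $11$ satisfy the resulting numerical inequality for all $k \ge 1$; this reduces to verifying $1{,}080(\lg(1{,}484\,k(\lg k + 11)) + 1) \le 1{,}484(\lg k + 11)$, which holds because for $k \ge 1$ the term $\lg\lg k + \text{const}$ on the left is dominated by the slack $(1{,}484 - 1{,}080)(\lg k + 11)$ on the right.

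The main obstacle, and the only genuinely delicate part, is the bootstrapping argument that removes the self-reference $\lg m$: one must be careful that the constants actually close, since a careless substitution could leave a residual $\lg\lg m$ term. The clean way to handle this is to establish the bound by contradiction: suppose $m > 1{,}484\,k(\lg k + 11)$ and derive a violation of $m \le 1{,}080\,k(\lg m + 1)$. Because $\log$ is monotone and grows far more slowly than the linear slack built into the gap between $1{,}080$ and $1{,}484$, the inequality fails for the assumed large $m$, giving the contradiction. I would also separately dispose of the edge case $k = 1$ (where $\lg k = 1$ under the paper's convention, so the bound reads $m \le 1{,}484 \cdot 12$), confirming the constants were chosen to absorb it. Finally, the membership $1{,}484\,k(\lg k + 11) \in O(k\lg k)$ is immediate since the additive constant $11$ and the factor $1{,}484$ are absorbed into the asymptotic notation.
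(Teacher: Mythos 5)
Your proposal is correct and begins exactly as the paper does: it chains \cref{thm:reduction-rules} ($|X| \le 20\,\dtbr(T_1,T_2)$) with \cref{thm:lower-bound} ($\dtbr(T_1,T_2) \le 54\,\dmp^t(T_1,T_2)(\lg|X|+1)$), using $\dmp^t(T_1,T_2)\le k$ for a yes-instance---a step you make explicit and the paper's proof glosses over---to arrive at the same implicit inequality $n \le 1{,}080\,k(\lg n + 1)$. Where you diverge is in how you untangle it. The paper works forward: with $c = 11\cdot 1{,}484$, it assumes $n > c$, absorbs the $+1$ via $\lg n + 1 < (1+\tfrac{1}{\lg c})\lg n$ to get $n \le 1{,}158\,k\lg n$, then uses $\lg n \le n^{\lg\lg c/\lg c}$ (valid since $\tfrac{\lg\lg n}{\lg n}$ is decreasing for $n \ge 8$) to solve explicitly, $n \le (1{,}158k)^{\lg c/(\lg c - \lg\lg c)}$, extracts from this the linear bound $\lg n \le \tfrac{\lg c}{\lg c - \lg\lg c}(\lg k + \lg 1{,}158)$, and substitutes it back once into $n \le 1{,}080k(\lg n+1)$. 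You instead guess-and-verify by contradiction; both routes are routine and reach the same constants, with the paper's version buying an explicit a priori bound on $\lg n$ in terms of $\lg k$ and yours trading that for a single numerical check. The one detail you must spell out for your contradiction to close: from $n > n_0 := 1{,}484\,k(\lg k+11)$ alone, checking the inequality only \emph{at} $n_0$ does not yet contradict $n \le 1{,}080\,k(\lg n+1)$; you also need that $n \mapsto n - 1{,}080\,k(\log_2 n + 1)$ is increasing on $[n_0,\infty)$, which holds because $n_0 \ge 1{,}484\cdot 12\,k = 17{,}808\,k > 1{,}080\,k/\ln 2$. Your phrase about the logarithm growing more slowly than the linear slack gestures at exactly this, so it is an omission of detail rather than a genuine gap. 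With it in place, your numerical inequality $1{,}080\bigl(\lg(1{,}484\,k(\lg k+11))+1\bigr) \le 1{,}484(\lg k+11)$ does hold for all $k\ge 1$: at $k=1$ the left side is roughly $16{,}330$ against $17{,}808$ on the right, and for $k \ge 2$ the margin $404\log_2 k + 3{,}866 - 1{,}080\log_2(\log_2 k + 11)$ is positive and increasing in $k$.
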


\begin{proof}
  Let $(T_1, T_2, t, k)$ be a fully reduced yes-instance, let $n = |X|$, and
  let $k' = \dtbr(T_1, T_2)$. By \cref{thm:reduction-rules},
  we have
  \begin{align*}
    n  &\le 20k'\\
  \intertext{and, \markj{by \cref{thm:lower-bound},}}
    k' &\le 54k(\lg n + 1).
  \end{align*}
  This gives
  \begin{equation}\label{eq:nbound}
    n \le 1{,}080k(\lg n + 1).
  \end{equation}
  For notational convenience, let $c = 11 \cdot 1{,}484 = 16{,}324$.
  If $n \le c$, then the bound on $n$ claimed in the corollary holds.
  So assume that $n > c$.
  Then $\lg n > \lg c$, so $\lg n + 1 < \bigl(1 + \frac{1}{\lg c}\bigr)\lg
  n$ and
  \begin{equation*}
    n < 1{,}080\left(1 + \frac{1}{\lg c}\right)k\lg n \le 1{,}158 k \lg n.
  \end{equation*}
  Since $n > c \ge 8$, we also have $\lg n \le n^{\frac{\lg\lg c}{\lg c}}$
  \markj{(indeed, for $n \ge 8$, we have $\lg n = n^{\frac{\lg\lg n}{\lg n}}$, and $\frac{\lg\lg n}{\lg n}$ is a decreasing function. Therefore,}
  \begin{align*}
    n &\le 1{,}158kn^{\frac{\lg\lg c}{\lg c}},\\
    n^{\frac{\lg c - \lg\lg c}{\lg c}} &\le 1{,}158k,\\
    n &\le (1{,}158k)^{\frac{\lg c}{\lg c - \lg\lg c}}.
  \end{align*}
  This implies that
  \begin{equation*}
    \lg n \le \frac{\lg c}{\lg c - \lg\lg c} \cdot (\lg k + \lg 1{,}158),
  \end{equation*}
  so \markj{by \cref{eq:nbound},}
  \begin{equation*}
    n \le 1{,}080k\left(\frac{\lg c}{\lg c - \lg\lg c} \cdot (\lg k + \lg 1{,}158) + 1\right)
    \le 1{,}484k(\lg k + 11).
  \end{equation*}
\end{proof}

\markj{It remains to prove \cref{thm:lower-bound}. As stated in the
introduction, the key}
is to show that $\dmp^t(T_1, T_2)$ is large if $T_1$ and $T_2$ have a large
number of ``leg-disjoint'' incompatible quartets.  This is similar to the lower
bound on $\dmp^\infty(T_1, T_2)$ by Jones, Kelk, and Stougie
\cite{jonesMaximumParsimonyDistance2021}, where it was shown that
$\dmp^\infty(T_1, T_2)$ is large if $T_1$ and $T_2$ have a large number of
\emph{disjoint} incompatible quartets.  Leg-disjointness is a much weaker
condition.  We define the concept of leg-disjoint incompatible quartets in
\cref{sec:leg-disjoint-quartets}, and show that they provide a lower bound on
$\dmp^t(T_1, T_2)$, for any $t \in \trange$.  To prove \cref{thm:lower-bound},
we then show, in \cref{sec:finding-quartets}, how to find a set of at least
$\frac{\dtbr(T_1, T_2)}{2(\lg |X| + 1)}$ leg-disjoint incompatible quartets.

\subsection{Leg-Disjoint Quartets}

\label{sec:leg-disjoint-quartets}

Given a tree $T$ on $X$, we call two quartets $q_1, q_2 \subseteq X$ \emph{fully
$T$-disjoint} if $T(q_1)$ and $T(q_2)$ are disjoint.  Given a quartet $q =
\{a,b,c,d\}$ such that $T|_q = ab|cd$, we call the paths from $a$ to $b$ and
from $c$ to $d$ in $T$ the \emph{legs} of~$q$.  The path composed of all edges
in $T(q)$ not included in the legs of $q$ is the \emph{backbone} of~$q$.  The
endpoints of the backbone are the \emph{joints} of~$q$.  We call two quartets
\emph{$T$-leg-disjoint} if their legs in $T$ are disjoint.
Note that this implies that the quartets are themselves disjoint \markj{subsets of $X$.}

The main result in this section proves that $\dmp^t(T_1, T_2)$ is large if there
exists a large set $Q$ of pairwise $T_1$-leg-disjoint incompatible quartets of
$T_1$ and~$T_2$.  In the remainder of this section, we refer to the quartets
in $Q$ simply as leg-disjoint, omitting the explicit reference to the tree $T_1$
in which their legs are disjoint.

\begin{prop}
  \label{prop:leg-disjoint-quartets}
  Let $Q$ be a set of pairwise leg-disjoint incompatible quartets of two trees
  $T_1$ and $T_2$ on $X$.  Then $\dmp^t(T_1, T_2) \ge \frac{|Q|}{27}$, for all
  $t \in \trange$.
\end{prop}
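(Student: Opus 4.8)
The plan is to construct a single well-chosen character witnessing a parsimony-score gap of at least $|Q|/27$, after two reductions. First, any two-state character is in particular a $t$-state character (the definition imposes no requirement that $f(X)=S$), so $\dmp^2(T_1,T_2)\le\dmp^t(T_1,T_2)$ for every $t\in\trange$; it therefore suffices to build a good \emph{two-state} character and establish the bound for $\dmp^2$. Second, setting $Y=\bigcup_{q\in Q}q$, \cref{lem:subset-restriction} gives $\dmp^t(T_1,T_2)\ge\dmp^t(T_1(Y),T_2(Y))$, and the quartets of $Q$ remain pairwise $T_1$-leg-disjoint and incompatible in $(T_1|_Y,T_2|_Y)$ (the restrictions do not change any $T_i|_q$, and disjoint legs stay disjoint after suppressing degree-two vertices). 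Hence I may assume every leaf of $T_1$ and $T_2$ lies in some quartet of $Q$, so both trees have exactly $4|Q|$ leaves and $O(|Q|)$ edges.

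The central tool I would use is the clean description of two-state parsimony: for a character with colour classes $X_0,X_1$, the score $l_f(T)$ equals the minimum number of edges of $T$ whose removal separates $X_0$ from $X_1$, which on a tree equals (by Menger, integrally) the maximum number of edge-disjoint paths each joining an $X_0$-leaf to an $X_1$-leaf. This turns the proposition into a purely combinatorial comparison: find a colouring for which $T_2$ admits many edge-disjoint bichromatic paths while $T_1$ admits a small separating edge set, so that $l_f(T_2)-l_f(T_1)\ge|Q|/27$.

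I would read the colouring off the quartets. For each $q=\{a,b,c,d\}\in Q$ with $T_1|_q=ab|cd$, I colour the leg-pair $\{a,b\}$ with one state and $\{c,d\}$ with the other. The incompatibility is what helps on the $T_2$ side: since $T_2|_q\in\{ac|bd,\,ad|bc\}$, both cherries of $T_2|_q$ are bichromatic, so each quartet supplies two natural bichromatic paths in $T_2$ routed through its own two joints, and $l_f(T_2)$ ``wants'' to be about $2|Q|$. The leg-disjointness is what helps on the $T_1$ side: each quartet's backbone is a single $j_1$–$j_2$ path whose removal locally separates its monochromatic $\{a,b\}$-leg from its $\{c,d\}$-leg, and because the legs (hence the monochromatic leaf blocks) are pairwise disjoint in $T_1$, I expect a separator built from a bounded number of edges per quartet to cut all of $X_0$ from $X_1$, pinning $l_f(T_1)$ near $|Q|$. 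The difference between the $\sim 2|Q|$ flow forced in $T_2$ and the $\sim|Q|$ cut afforded in $T_1$ is the engine of the bound.

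The main obstacle — and the origin of the constant $27$ — is that the quartets may interact arbitrarily in $T_2$ (and their backbones overlap, and distinct legs may be adjacent, in $T_1$): the $2|Q|$ candidate bichromatic paths in $T_2$ need not be edge-disjoint, and the sign ambiguity (which leg-pair is $0$, and how different quartets' colours meet along shared edges) can cancel contributions on both sides. I would resolve this by a packing/charging step that loses only a constant factor: bound, for each quartet, the number of other quartets whose $T_2$-structure its two paths can meet, extract a sub-collection $Q'\subseteq Q$ with $|Q'|=\Omega(|Q|)$ whose paths in $T_2$ are pairwise edge-disjoint, and simultaneously fix the per-quartet signs (greedily or by a probabilistic rule) so that a constant fraction of quartets each net-contribute $+1$ to $l_f(T_2)-l_f(T_1)$. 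Carefully tracking the worst-case overlap multiplicities through this selection — simultaneously keeping $T_1$'s min-cut small and $T_2$'s max-flow large under one global colouring — is the delicate heart of the argument, and is exactly where the factor $27$ is absorbed.
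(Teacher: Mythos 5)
Your two opening reductions (passing to $\dmp^2$ via $\dmp^2\le\dmp^t$, and restricting to $Y=\bigcup_{q\in Q}q$ via \cref{lem:subset-restriction}) match the paper and are fine. The genuine gap is in the core of the argument, and it stems from inverting the roles of the two trees relative to what the hypothesis can support. Leg-disjointness in $T_1$ has exactly one source of leverage: if a leg has endpoints of \emph{different} colours, it must contain a mutation edge, and because the legs are pairwise edge-disjoint these contributions add up, giving a \emph{lower} bound on $l_f(T_1)$. This is how the paper uses the hypothesis: it chooses the character so that many $T_1$-legs are bichromatic, lower-bounds $l_f(T_1)$ by the number of such legs, and \emph{upper}-bounds $l_f(T_2)$ by exhibiting an explicit vertex colouring (extension) of $T_2(X')$ and counting its mutation edges. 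You do the opposite: you make each $T_1$-leg monochromatic, hoping $l_f(T_1)$ is small and $l_f(T_2)$ is large. But then the hypothesis does no work for you on either side. On the $T_1$ side, an upper bound of roughly $|Q|$ is not automatic: the backbones may overlap arbitrarily, and legs of different quartets carrying opposite colours can be interleaved along $T_1$, so cutting one backbone edge per quartet need not separate the colour classes. On the $T_2$ side, a lower bound on $l_f(T_2)$ is needed, yet nothing constrains how the quartets interact in $T_2$.

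The specific mechanism you propose for the $T_2$ side fails. You want to extract $Q'\subseteq Q$ with $|Q'|=\Omega(|Q|)$ such that the within-quartet bichromatic paths are pairwise edge-disjoint in $T_2$. Take $T_2$ to be a caterpillar on $4m$ leaves and quartets $q_i$ whose leg-pairs $\{a_i,b_i\}$ all lie among the leftmost $2m$ leaves and whose pairs $\{c_i,d_i\}$ all lie among the rightmost $2m$ leaves (one can arrange $T_1$ so these quartets are leg-disjoint and incompatible). Then \emph{every} bichromatic within-quartet pair of \emph{every} quartet crosses the central spine edge of $T_2$, regardless of how you fix the per-quartet signs (a sign flip swaps red and blue within a quartet but does not change which pairs are bichromatic). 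Hence any family of pairwise edge-disjoint within-quartet bichromatic paths has size at most $1$, not $\Omega(|Q|)$. A flow lower bound on $l_f(T_2)$ would have to route paths between leaves of \emph{different} quartets, which your sketch does not address, and which depends delicately on the global sign assignment — precisely the part you defer to an unspecified ``packing/charging step.'' The paper sidesteps all of this by never lower-bounding the score of the tree in which the quartets interact arbitrarily: its two-phase greedy construction (a maximal fully $T_2$-disjoint subfamily, then case analysis on pendant subtrees and sides of $T_2(X')$ with an explicit recolouring in each case) maintains the invariant $\beta_{\bar f}(Q')-\Delta_{\bar f}(T_2(X'))\ge |Q'|/3$ with $\Delta$ an explicitly counted upper bound, and a final charging argument over sides gives $|Q'|\ge|Q|/9$, whence the constant $27$. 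As written, your proof cannot be completed along the stated lines without replacing its central step by something essentially like the paper's construction.
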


Note that \cref{prop:leg-disjoint-quartets} does not impose \emph{any}
constraints on the manner in which the quartets in $Q$ interact in~$T_2$, nor
does it require their backbones in $T_1$ to be disjoint from each other or from
the legs of other quartets in~$Q$.  Contrast this with the definition of
disjoint quartets used by Jones, Kelk, and Stougie
\cite{jonesMaximumParsimonyDistance2021}, which considers two quartets $q_1$ and
$q_2$ to be disjoint if they are both fully $T_1$-disjoint and fully
$T_2$-disjoint.

\begin{proof}
  To simplify the proof, we may assume w.l.o.g.\ that every leaf in $X$ is
  part of a quartet in $Q$. Indeed, if this is not the case, then let $Y \subset
  X$ be the set of leaves that belong to quartets in~$Q$.  By
  \cref{lem:subset-restriction}, we have $\dmp^2(T_1, T_2) \ge \dmp^2(T_1|_Y,
  T_2|_Y)$, and $Q$ is also a set of pairwise leg-disjoint incompatible quartets
  of $T_1|_Y$ and~$T_2|_Y$. \nz{Therefore,} we may replace $T_1$ and $T_2$ with $T_1|_Y$
  and $T_2|_Y$ in what follows.
  
  To prove the proposition, we construct a subset $Q' \subseteq Q$ such that
  $|Q'| \ge \frac{|Q|}{9}$ and $\dmp^t(T_1, T_2) \ge \frac{|Q'|}{3}$.  Thus,
  $\dmp^t(T_1, T_2) \ge \frac{|Q|}{27}$, as claimed.

  To describe the construction of this subset $Q' \subseteq Q$, we need some
  notation.  We use $X'$ to refer to the set of leaves of the quartets in $Q'$:
  $X' = \bigcup_{q \in Q'} q$.  Let $q = \{a, b, c, d\}$ be an incompatible
  quartet and assume that $T_1|_q = ab|cd$.  For a labelling $\bar f :
  V(T_2(X')) \rightarrow S$ of some subtree $T_2(X')$ of $T_2$ with $q \subseteq
  X'$, we define
  \begin{equation*}
    \beta_{\bar f}(q) = |\{(x,y) \in \{(a,b), (c,d)\} \mid \bar f(x) \ne \bar f(y)\}|.
  \end{equation*}
  In words, $\beta_{\bar f}(q)$ is the number of legs of $q$ in $T_1$ whose
  endpoints are assigned different states by~$\bar f$.  Furthermore, for any
  subset $Q\dprime \subseteq Q$ such that every quartet $q \in Q\dprime$
  satisfies $q \subseteq X'$, we define
  \begin{equation*}
    \beta_{\bar f}(Q\dprime) = \sum_{q \in Q\dprime} \beta_{\bar f}(q).
  \end{equation*}

  Since the quartets in $Q' \subseteq Q$ are pairwise leg-disjoint in $T_1$, any
  character $f$ on $X'$ satisfies $l_f(T_1(X')) \ge \beta_{\bar f}(Q')$, where
  $\bar f$ is an arbitrary extension of $f$ to $T_2(X')$.  Indeed, every leg of
  a quartet $q \in Q'$ that contributes to $\beta_{\bar f}(q)$ must include a
  mutation edge and thus increases $l_f(T_1(X'))$ by 1 because the quartets in
  $Q'$ are pairwise leg-disjoint.  We also have $\Delta_{\bar f}(T_2(X')) \ge
  l_{f}(T_2(X'))$.  Thus, it suffices to construct a subset $Q' \subseteq Q$ and
  an extension $\bar f$ of a 2-state character $f : X' \rightarrow S$ to the
  vertices of $T_2(X')$ such that $|Q'| \ge \frac{|Q|}{9}$ and $\beta_{\bar
  f}(Q') - \Delta_{\bar f}(T_2(X')) \ge \frac{|Q'|}{3}$.  Indeed, this implies
  that $\dmp^2(T_1(X'), T_2(X')) \ge l_{f}(T_1(X')) - l_{f}(T_2(X')) \ge
  \beta_{\bar f}(Q') - \Delta_{\bar f}(T_2(X')) \ge \markj{\frac{|Q'|}{3}
  \ge}\frac{|Q|}{27}$.  By \cref{lem:subset-restriction}, we have $\dmp^t(T_1,
  T_2) \ge \dmp^2(T_1, T_2) \ge \dmp^2(T_1(X'), T_2(X'))$, so $\dmp^t(T_1, T_2)
  \ge \frac{|Q|}{27}$.

  We assume that $S = \{\text{red}, \text{blue}\}$ from here on.  Accordingly,
  we call the states in $S$ \emph{colours} and refer to $\bar f: V(T_2(X'))
  \rightarrow S$ as a \emph{colouring} of $T_2(X')$.

  We construct the desired subset $Q' \subseteq Q$ and colouring $\bar f$ of
  $T_2(X')$ in two phases, maintaining the invariant that $\beta_{\bar f}(Q') -
  \Delta_{\bar f}(T_2(X')) \ge \frac{|Q'|}{3}$.  We prove that once we are
  unable to find more quartets to add to~$Q'$ in the second phase, we have $|Q'|
  \ge \frac{|Q|}{9}$.  Thus, the set $Q'$ and the colouring $\bar f$ obtained
  after the second phase have the desired properties.

  \paragraph{\boldmath Phase 1: Select a maximal subset of pairwise
  fully $T_2$-disjoint quartets.}

  We select a maximal subset $Q' \subseteq Q$ of quartets that are pairwise
  fully $T_2$-disjoint.  The vertices of $T_2(X')$ can easily be coloured so
  that $\beta_{\bar f}(Q') = 2|Q'|$ and $\Delta_{\bar f}(T_2(X')) = |Q'|$:
  
  Consider the forest $F$ obtained from $T_2(X')$ by deleting one edge $e_q$
  from the backbone in $T_2$ of each quartet $q \in Q'$.  Let $T'$ be the tree
  obtained from $T_2(X')$ by contracting every connected component of $F$ into a
  single vertex.  Since $T'$ is a tree, it is bipartite and thus can be
  $2$-coloured.  Let $f' : V(T') \rightarrow \{\text{red}, \text{blue}\}$
  be such a $2$-colouring of~$T'$.  Then we choose $\bar f$ so that it colours
  every vertex in the connected component of $F$ represented by $v$ with the
  colour $f'(v)$, for every vertex $v \in V(T')$.

  Since $F$ has $|Q'| + 1$ connected components, we have $\Delta_{\bar
  f}(T_2(X')) = |Q'|$.
  
  Next consider any quartet $q \in Q'$ and assume w.l.o.g.\ that $T_1|_q =
  ab|cd$ and $T_2|_q = ac|bd$.  Then $a$ and $c$ belong to the same connected
  component $C_1$ of $F$, $b$ and $d$ belong to the same connected component
  $C_2$ of $F$, $C_1 \ne C_2$, and the two vertices $v_1$ and $v_2$ in $T'$
  representing $C_1$ and $C_2$ are adjacent.  Indeed, if $a$ and $c$ belonged to
  different connected components of~$F$, $b$ and $d$ belonged to different
  connected components of $F$ or $v_1$ and $v_2$ were not adjacent in $T'$, then
  $T_2(q)$ would contain an edge $e_{q'}$ in the backbone of another quartet $q'
  \in Q'$, a contradiction because the quartets in $Q'$ are fully
  $T_2$-disjoint.  The fact that $C_1$ and $C_2$ are different connected
  components follows because deleting the edge $e_q$ separates $a$ and $c$ from
  $b$ and $d$ in~$T_2$.

  Since $a, c \in C_1$, $b, d \in C_2$, and $v_1$ and $v_2$ are adjacent, we
  have $\bar f(a) \ne \bar f(b)$ and $\bar f(c) \ne \bar f(d)$.  Thus,
  $\beta_{\bar f}(q) = 2$.  Since this is true for every quartet $q \in Q'$, we
  have $\beta_{\bar f}(Q') = 2|Q'|$.  This shows that $\beta_{\bar f}(Q') -
  \Delta_{\bar f}(T_2(X')) = |Q'| \ge \frac{|Q'|}{3}$.

  \paragraph{\boldmath Phase 2: Greedily add quartets to $Q'$.}

  Let $U = Q \setminus Q'$ be the set of uncoloured quartets.  We add quartets
  from $U$ to $Q'$ one, two or three quartets at a time.  For each added group
  of quartets, we extend the colouring $\bar f$ to $T_2(X')$ and modify it in a
  manner that ensures that $\beta_{\bar f}(Q')  - \Delta_{\bar f}(T_2(X'))$
  increases by at least~1.  Thus, the inequality $\beta_{\bar f}(Q')  -
  \Delta_{\bar f}(T_2(X')) \ge \frac{|Q'|}{3}$ is maintained by each addition.
  
  To move quartets from $U$ to $Q'$, we consider several cases, choosing the
  first case that applies:

  \paragraph{Case 1: Quartets with a ``good'' parsimonious extension.}

  If there exists a quartet $q \in U$ such that the parsimonious extension $\bar
  f'$ of $\bar f$ to $T_2(X' \cup q)$ satisfies $\beta_{\bar f'}(q) > 0$, then
  we add $q$ to $Q'$ and set $\bar f = \bar f'$.  See \cref{fig:ldiq-case-1}.
  This increases $\beta_{\bar f}(Q')$ by at least 1 and leaves $\Delta_{\bar
  f}(T_2(X'))$ unchanged \markj{(note that since we add $q$ to $Q'$, $T_2(X')$ now includes the leaves of $q$)}.  Thus, $\beta_{\bar f}(Q') - \Delta_{\bar f}(T_2(X'))$
  increases by at least~1.

  \bigskip

  The remaining cases assume that Case~1 is not applicable.  Thus, the
  parsimonious extension $\bar f'$ of $\bar f$ to $T_2(X' \cup q)$ satisfies
  $\beta_{\bar f'}(q) = 0$ for every quartet $q \in U$.  Consider the pendant
  subtrees of $T_2(X')$ in~$T_2$. Since we initialized $Q'$ to be a maximal
  subset of quartets that are pairwise fully $T_2$-disjoint, there is no quartet
  in $U$ that has all its leaves in one of these pendant subtrees.

  \paragraph{Case 2: Quartets with at least two leaves in the same pendant
    subtree.}

  Suppose that there exists a quartet $q \in U$ that has at least two of its
  leaves in the same pendant subtree $T'$ of $T_2(X')$.  The parsimonious
  extension $\bar f'$ of $\bar f$ to $T_2(X' \cup q)$ colours all leaves of $q$
  in $T'$ the same colour.  Assume that $T_1|_q = ab|cd$ and $T_2|_q = ac|bd$.
  Since $T'$ contains at least two leaves of $q$, we can assume w.l.o.g.\ that
  $a, c \in T'$, and that $\bar f'$ colours $a$ and $c$ red.  Since $\beta_{\bar
  f'}(q) = 0$, this implies that $\bar f'$ also colours $b$ and $d$ red.  We add
  $q$ to $Q'$, set $\bar f = \bar f'$, and then change the colour of $a$ and $c$
  to blue and colour all vertices on the path from $a$ to $c$ in $T_2$ blue.
  See \cref{fig:ldiq-case-2}.  This ensures that $\beta_{\bar f}(q) = 2$, so
  $\beta_{\bar f}(Q')$ increases by~2.  $\Delta_{\bar f}(T_2(X'))$ is easily
  verified to increase by~1 \markj{(since $(a,c)$ must be a cherry in
  $T_2|_{X'}$)}.  Thus, $\beta_{\bar f}(Q') - \Delta_{\bar f}(T_2(X'))$
  increases by~1.

  \bigskip

  If neither Case~1 nor Case~2 applies, then every pendant subtree of
  $T_2(X')$ contains at most one leaf from each quartet in $U$.

  \paragraph{Case 3: A pendant subtree with leaves from more than one quartet.}

  If there exists a pendant subtree $T'$ of $T_2(X')$ that contains leaves from
  at least two quartets in $U$, then pick two such quartets $q_1$ and $q_2$ and
  let $\bar f'$ be the parsimonious extension of $\bar f$ to $T_2(X' \cup q_1
  \cup q_2)$.  Assume that $\bar f'$ colours the leaves of $q_1 \cup q_2$ in
  $T'$ red.  We add $q_1$ and $q_2$ to $Q'$ and set $\bar f = \bar f'$.  Then we
  change the colour of every vertex in $T_2(X')$ that belongs to $T'$ to blue.
  (Since we added $q_1$ and $q_2$ to $Q'$, $T_2(X')$ now includes vertices
  in~$T'$.)  See \cref{fig:ldiq-case-3}.  Since both $q_1$ and $q_2$ have a
  single leaf in $T'$, this ensures that $\beta_{\bar f}(Q') $ increases by~2,
  whereas $\Delta_{\bar f}(T_2(X'))$ increases by~1.  Thus, $\beta_{\bar f}(Q')
  - \Delta_{\bar f}(T_2(X'))$ increases by~1.

  \begin{figure}[p]
    \centering
    \subcaptionbox{Case~1\label{fig:ldiq-case-1}}{\begin{tikzpicture}
      \path         node [vertex,red node]                (e) {}
      +(150:1)      node [vertex,red node]                (f) {}
      +(210:1)      node [vertex,red node]                (g) {}
      ++(0:1)       node [vertex,red node]                (h) {}
      ++(0:1)       node [vertex,blue node]               (i) {}
      ++(60:1)      node [vertex,blue node]               (j) {}
      +(90:1)       node [vertex,blue node]               (k) {}
      ++(30:1)      node [vertex,blue node]               (l) {}
      +(30:1)       node [vertex,blue node]               (m) {}
      +(300:1)      node [vertex,blue node,label=300:$b$] (b) {}
      (i) ++(330:1) node [vertex,blue node]               (n) {}
      ++(0:1)       node [vertex,blue node]               (o) {}
      +(30:1)       node [vertex,blue node]               (p) {}
      +(330:1)      node [vertex,blue node]               (q) {}
      (n) ++(300:1) node [vertex,red node]                (r) {}
      +(300:1)      node [vertex,red node]                (s) {}
      +(210:1)      node [vertex,red node,label=190:$d$]  (d) {}
      (h) ++(270:1) node [vertex,red node]                (t) {}
      +(240:1)      node [vertex,red node,label=260:$a$]  (a) {}
      ++(300:1)     node [vertex,red node,label=280:$c$]  (c) {};
      \path [bold edge] (h) -- (i) (n) -- (r);
      \path [bold edge,red edge] (g) -- (e) -- (f) (e) -- (h) (r) -- (s);
      \path [bold edge,blue edge] (i) -- (j) -- (k)
      (j) -- (l) -- (m) (i) -- (n) -- (o) -- (p) (o) -- (q);
      \path [thin edge,blue edge] (l) -- (b);
      \path [thin edge,red edge] (r) -- (d) (a) -- (t) -- (c) (t) -- (h);
    \end{tikzpicture}}%
    \hspace{2cm}%
    \subcaptionbox{Case~2\label{fig:ldiq-case-2}}{\begin{tikzpicture}
      \path         node [vertex,red node]                (e) {}
      +(150:1)      node [vertex,red node]                (f) {}
      +(210:1)      node [vertex,red node]                (g) {}
      ++(0:1)       node [vertex,red node]                (h) {}
      ++(0:1)       node [vertex,red node]                (i) {}
      ++(45:1)      node [vertex,red node]                (j) {}
      +(75:1)       node [vertex,red node]                (k) {}
      ++(15:1)      node [vertex,red node]                (l) {}
      +(15:1)       node [vertex,red node]                (m) {}
      +(285:1)      node [vertex,red node,label=275:$b$]  (b) {}
      (i) ++(315:1) node [vertex,blue node]               (n) {}
      ++(345:1)     node [vertex,blue node]               (o) {}
      +(15:1)       node [vertex,blue node]               (p) {}
      +(315:1)      node [vertex,blue node]               (q) {}
      (n) ++(285:1) node [vertex,blue node]               (r) {}
      (h) ++(270:1) node [vertex,red node]                (s) {}
      +(300:1)      node [vertex,red node,label=290:$d$]  (d) {}
      ++(240:1)     node [vertex,blue node]               (t) {}
      +(240:1)      node [vertex,blue node,label=260:$a$] (a) {}
      +(300:1)      node [vertex,blue node,label=280:$c$] (c) {};
      \path [bold edge] (n) -- (i);
      \path [bold edge,red edge] (g) -- (e) -- (f) (e) -- (h) -- (i) -- (j) -- (k)
      (j) -- (l) -- (m);
      \path [bold edge,blue edge] (n) -- (r)
      (n) -- (o) -- (p) (o) -- (q);
      \path [thin edge,red edge] (l) -- (b) (h) -- (s) -- (d);
      \path [thin edge,blue edge] (a) -- (t) -- (c);
      \path [thin edge] (s) -- (t);
    \end{tikzpicture}}\\[2\bigskipamount]
    \subcaptionbox{Case~3\label{fig:ldiq-case-3}}{\begin{tikzpicture}
      \path         node [vertex,red node]                                (e)  {}
      +(90:1)       node [vertex,red node,label=90:$c_1$]                 (c1) {}
      +(180:1)      node [vertex,red node]                                (g)  {}
      ++(0:1)       node [vertex,red node]                                (h)  {}
      ++(0:1)       node [vertex,red node]                                (i)  {}
      ++(60:1)      node [vertex,red node]                                (j)  {}
      +(150:1)      node [vertex,red node,label=170:$d_2$]                (d2) {}
      ++(60:1)      node [vertex,red node]                                (l)  {}
      +(60:1)       node [vertex,red node]                                (m)  {}
      +(330:1)      node [vertex,red node,label=350:$b_2$]                (b2) {}
      (i) ++(330:1) node [vertex,red node]                                (n)  {}
      ++(10:1)      node [vertex,blue node]                               (o)  {}
      +(40:1)       node [vertex,blue node]                               (p)  {}
      +(340:1)      node [vertex,blue node]                               (q)  {}
      (n) ++(315:1) node [vertex,red node]                                (r)  {}
      +(310:1)      node [vertex,red node]                                (s)  {}
      +(220:1)      node [vertex,red node,label=220:$c_2$]                (c2) {}
      (h) ++(270:1) node [vertex,blue node]                               (t)  {}
      +(240:1)      node [vertex,blue node,label={[xshift=1pt]267:$a_1$}] (a1) {}
      +(300:1)      node [vertex,blue node,label=290:$a_2$]               (a2) {}
      (g) ++(135:1) node [vertex,red node]                                (f)  {}
      +(135:1)      node [vertex,red node]                                (u)  {}
      +(225:1)      node [vertex,red node,label=225:$b_1$]                (b1) {}
      (g) ++(225:1) node [vertex,red node]                                (v)  {}
      +(225:1)      node [vertex,red node]                                (w)  {}
      +(315:1)      node [vertex,red node,label=315:$d_1$]                (d1) {};
      \path [bold edge] (n) -- (o);
      \path [bold edge,blue edge] (p) -- (o) -- (q);
      \path [bold edge,red edge] (m) -- (l) -- (j) -- (i) -- (n) -- (r) -- (s)
      (u) -- (f) -- (g) -- (v) -- (w) (g) -- (e) -- (h) -- (i);
      \path [thin edge] (h) -- (t);
      \path [thin edge,blue edge] (a1) -- (t) -- (a2);
      \path [thin edge,red edge] (b1) -- (f) (d1) -- (v) (c2) -- (r)
      (b2) -- (l) (d2) -- (j) (c1) -- (e);
    \end{tikzpicture}}
    \caption{The updated colouring in Cases~1--3 of the proof of
    \cref{prop:leg-disjoint-quartets}.  Only $T_2(X')$ and the leaves of $q$ are
    shown.  Bold solid edges are in $T_2(X')$.  Thin dashed edges are the new
    edges added to $T_2(X' \cup q)$ or, in Case~3, to $T_2(X' \cup q_1 \cup
    q_2)$.}
  \end{figure}
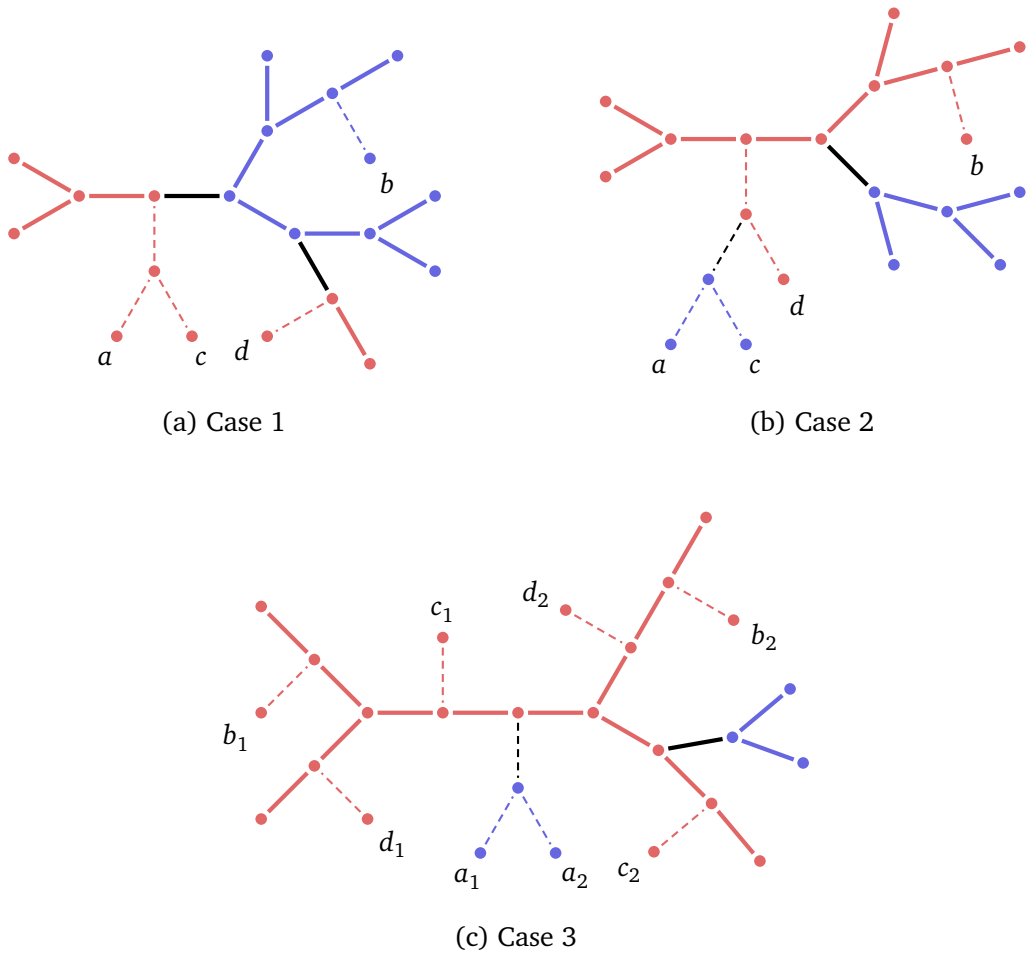
  \begin{figure}[p]
    \centering
    \begin{tikzpicture}
      \path        node [vertex]                                          (e) {}
      ++(0:1)      node [vertex,red node]                                 (f) {}
      +(270:1)     node [vertex,red node,label=below:{\vphantom{$b$}$a$}] (a) {}
      ++(0:1)      node [vertex,red node]                                 (g) {}
      +(270:1)     node [vertex,red node,label=below:$b$]                 (b) {}
      ++(0:1)      node [vertex,red node]                                 (h) {}
      ++(300:1)    node [vertex]                                          (i) {}
      +(240:1)     node [vertex]                                          (j) {}
      +(0:1)       node [vertex]                                          (k) {}
      (h) ++(60:1) node [vertex,red node]                                 (l) {}
      +(120:1)     node [vertex]                                          (m) {}
      ++(0:1)      node [vertex,red node]                                 (n) {}
      +(90:1)      node [vertex,red node,label=above:$c$]                 (c) {}
      ++(0:1)      node [vertex,red node]                                 (o) {}
      +(90:1)      node [vertex,red node,label=above:$d$]                 (d) {}
      +(0:1)       node [vertex]                                          (p) {};
      \begin{scope}[on background layer]
        \path [subtree] (e.center) -- +(150:1) -- +(210:1) -- cycle;
        \path [subtree] (p.center) -- +(30:1) -- +(330:1) -- cycle;
        \path [subtree] (m.center) -- +(90:1) -- +(150:1) -- cycle;
        \path [subtree] (k.center) -- +(30:1) -- +(330:1) -- cycle;
        \path [subtree] (j.center) -- +(210:1) -- +(270:1) -- cycle;
      \end{scope}
      \path [thin edge,solid] (e) -- (f) (j) -- (i) -- (k) (h) -- (i)
      (l) -- (m) (o) -- (p);
      \path [bold edge,red edge] (a) -- (f) -- (g) -- (b)
      (c) -- (n) -- (o) -- (d) (g) -- (h) -- (l) -- (n);
    \end{tikzpicture}
    \caption{A quartet $q$ with $T_1(q) = ab|cd$ and with $a$ and $b$ \markj{adjacent} to
    one side of $T_2$ and $c$ and $d$ \markj{adjacent} to another side of $T_2$ cannot be
    incompatible.}
    \label{fig:ldiq-case-4}
  \end{figure}
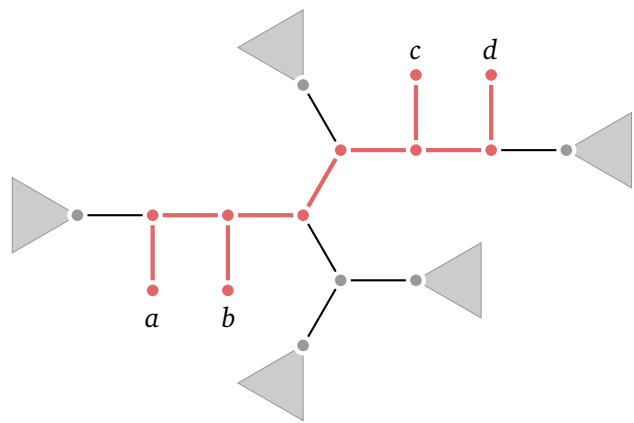

  \paragraph{Case 4: All pendant subtrees are singletons.}

  If we reach this case, then Cases 1--3 do not apply.  Thus, every pendant
  subtree of $T_2(X')$ contains at most one leaf that belongs to a quartet
  in~$U$.  Since we assumed that every leaf in $X$ belongs to some quartet in
  $Q$, this implies that every pendant subtree of $T_2(X')$ consists of a single
  leaf (and this leaf belongs to some quartet in $U$).

  Let a \emph{side} of $T_2(X')$ be a maximal path in $T_2(X')$ whose internal
  vertices have degree $2$ in $T_2(X')$.  Then every leaf \markj{$l$ of a
  quartet in $U$ is adjacent to some side of $T_2(X')$ (that is, $l$ is adjacent
  to an internal vertex of that side).}  Moreover, for any quartet $q \in U$
  with $T_1|_q = ab|cd$, either all leaves of $q$ are adjacent to the same side,
  or at least one of the pairs $\{a,b\}, \{c,d\}$ has its elements
  \markj{adjacent} to two different sides.  If this were not the case, then
  $a,b$ would be \markj{adjacent} to one side and $c,d$ would be
  \markj{adjacent} to another.  This would imply that $T_2|q = ab|cd$,
  contradicting that $q$ is an incompatible quartet.  See
  \cref{fig:ldiq-case-4}.  We can assume from here on that all internal vertices
  of a side of $T_2(X')$ have the same colour.  If not, we can change $\bar f$
  so that this is true, without increasing $\Delta_{\bar f}(T_2(X'))$.

  \paragraph{Case 4.1: A side with \markj{adjacent} leaves from at least three quartets.}

  Suppose that there exists a side $P$ that has \markj{adjacent} leaves from at least
  three quartets $q_1, q_2, q_3 \in U$ and for each $i \in \{1,2,3\}$, $P$~has
  $a_i$ but not $b_i$ as an \markj{adjacent} vertex, where $T_1|{q_i} = a_ib_i|c_id_i$.
  Let $\bar f'$ be the parsimonious extension of $\bar f$ to $T_2(X' \cup q_1
  \cup q_2 \cup q_3)$.  We add $q_1$, $q_2$, and $q_3$ to $Q'$ and set $\bar f =
  \bar f'$.  Assume that the colour of all internal vertices of $P$ is red.
  Then $a_1,a_2,a_3$ are also red.  This in turn implies that $b_1,b_2,b_3$ are
  also red, as otherwise $\beta_{\bar f'}(q_i) > 0$ for some~$i$, and Case~1
  would apply.  We change the colour of the internal vertices of $P$ to blue and
  change the colour of all \markj{adjacent} leaves of $P$ that are leaves of $q_1$, $q_2$
  or $q_3$ to blue.  See \cref{fig:ldiq-case-4.1}.  Since $a_i$ is now coloured
  blue, and $b_i$ red, for each $i \in \{1,2,3\}$, this increases $\beta_{\bar
  f}(Q')$ by at least~$3$.  The only mutation edges introduced into $T_2(X')$
  are the first and last edge of~$P$.  Thus, $\Delta_{\bar f}(T_2(X'))$
  increases by at most~2.  Overall, $\beta_{\bar f}(Q')  - \Delta_{\bar
  f}(T_2(X'))$ increases by at least~1.

  \begin{figure}[t]
    \centering
    \subcaptionbox{Case~4.1\label{fig:ldiq-case-4.1}}{\begin{tikzpicture}
      \path          node [vertex,red node]                               (e)  {}
      ++(0:1)        node [vertex,blue node]                              (f)  {}
      +(270:1)       node [vertex,blue node,label=270:$a_1$]              (a1) {}
      ++(0:1)        node [vertex,blue node]                              (g)  {}
      +(270:1)       node [vertex,blue node,label=270:$a_2$]              (a2) {}
      ++(0:1)        node [vertex,blue node]                              (h)  {}
      +(270:1)       node [vertex,blue node,label=270:$a_3$]              (a3) {}
      ++(0:1)        node [vertex,blue node]                              (i)  {}
      +(270:1)       node [vertex,blue node,label=270:$c_1$]              (c1) {}
      ++(0:1)        node [vertex,red node]                               (j)  {}
      ++(75:1)       node [vertex,red node]                               (k)  {}
      +(135:1)       node [vertex,red node]                               (l)  {}
      ++(15:1)       node [vertex,red node]                               (m)  {}
      +(15:1)        node [vertex,red node]                               (n)  {}
      +(285:1)       node [vertex,red node,label={[xshift=4pt]270:$c_2$}] (c2) {}
      (j) ++(315:1)  node [vertex,red node]                               (o)  {}
      +(225:1)       node [vertex,red node,label=250:$c_3$]               (c3) {}
      ++(315:1)      node [vertex,red node]                               (p)  {}
      +(255:1)       node [vertex,red node]                               (q)  {}
      ++(15:1)       node [vertex,red node]                               (r)  {}
      +(285:1)       node [vertex,red node,label=275:$b_3$]               (b3) {}
      ++(15:1)       node [vertex,red node]                               (s)  {}
      +(90:1)        node [vertex,red node]                               (t)  {}
      ++(345:1)      node [vertex,red node]                               (u)  {}
      +(255:1)       node [vertex,red node,label=270:$d_3$]               (d3) {}
      +(345:1)       node [vertex,red node]                               (v)  {}
      (e) ++(120:1)  node [vertex,red node]                               (aa) {}
      +(210:1)       node [vertex,red node,label=210:$b_1$]               (b1) {}
      ++(120:1)      node [vertex,red node]                               (bb) {}
      +(45:1)        node [vertex,red node]                               (cc) {}
      ++(150:1)      node [vertex,red node]                               (dd) {}
      +(240:1)       node [vertex,red node,label=265:$d_2$]               (d2) {}
      +(150:1)       node [vertex,red node]                               (ee) {}
      (e) ++(240:1)  node [vertex,red node]                               (ff) {}
      ++(180:1)      node [vertex,red node]                               (gg) {}
      +(270:1)       node [vertex,red node,label=270:$b_2$]               (b2) {}
      ++(180:1)      node [vertex,red node]                               (hh) {}
      +(270:1)       node [vertex,red node,label=270:$d_1$]               (d1) {}
      +(180:1)       node [vertex,red node]                               (ii) {}
      (ff) ++(285:1) node [vertex,blue node]                              (jj) {}
      +(315:1)       node [vertex,blue node]                              (kk) {}
      +(255:1)       node [vertex,blue node]                              (ll) {};
      \path [bold edge,red edge] (ee) -- (dd) -- (bb) -- (cc)
      (bb) -- (aa) -- (e) -- (ff) -- (gg) -- (hh) -- (ii)
      (j) -- (k) -- (l)
      (k) -- (m) -- (n) (j) -- (o) -- (p) -- (r) -- (s) -- (t)
      (s) -- (u) -- (v) (p) -- (q);
      \path [bold edge,blue edge] (kk) -- (jj) -- (ll)
      (f) -- (g) -- (h) -- (i);
      \path [bold edge] (ff) -- (jj) (e) -- (f) (i) -- (j);
      \path [thin edge,red edge] (d1) -- (hh) (b2) -- (gg) (b1) -- (aa) (d2) -- (dd)
      (c2) -- (m) (b3) -- (r) (d3) -- (u) (c3) -- (o);
      \path [thin edge,blue edge] (a1) -- (f) (a2) -- (g) (a3) -- (h) (c1) -- (i);
    \end{tikzpicture}}\\[2\bigskipamount]
    \subcaptionbox{Case~4.2\label{fig:ldiq-case-4.2}}{\begin{tikzpicture}
      \path         node [vertex,red node]                                  (e)  {}
      ++(0:1)       node [vertex,red node]                                  (f)  {}
      +(270:1)      node [vertex,red node,label=270:{\vphantom{$d$}$a_1$}]  (a1) {}
      ++(0:1)       node [vertex,red node]                                  (g)  {}
      +(270:1)      node [vertex,red node,label=270:{\vphantom{$d$}$c_1$}]  (c1) {}
      ++(0:1)       node [vertex,blue node]                                 (h)  {}
      +(270:1)      node [vertex,blue node,label=270:$b_1$]                 (b1) {}
      ++(0:1)       node [vertex,blue node]                                 (i)  {}
      +(270:1)      node [vertex,blue node,label=270:{\vphantom{$d$}$a_2$}] (a2) {}
      ++(0:1)       node [vertex,blue node]                                 (j)  {}
      +(270:1)      node [vertex,blue node,label=270:$d_1$]                 (d1) {}
      ++(0:1)       node [vertex,red node]                                  (k)  {}
      ++(75:1)      node [vertex,red node]                                  (l)  {}
      +(345:1)      node [vertex,red node,label={[yshift=2pt]355:$b_2$}]    (b2) {}
      ++(75:1)      node [vertex,red node]                                  (m)  {}
      +(345:1)      node [vertex,red node,label={[yshift=2pt]355:$d_2$}]    (d2) {}
      +(75:1)       node [vertex,red node]                                  (n)  {}
      (k) ++(315:1) node [vertex,blue node]                                 (o)  {}
      +(240:1)      node [vertex,blue node]                                 (p)  {}
      ++(345:1)     node [vertex,blue node]                                 (q)  {}
      +(15:1)       node [vertex,blue node]                                 (r)  {}
      +(315:1)      node [vertex,blue node]                                 (s)  {}
      (e) ++(120:1) node [vertex,red node]                                  (t)  {}
      +(60:1)       node [vertex,red node]                                  (u)  {}
      ++(180:1)     node [vertex,red node]                                  (v)  {}
      +(180:1)      node [vertex,red node]                                  (w)  {}
      +(270:1)      node [vertex,red node,label=270:$c_2$]                  (c2) {}
      (e) ++(255:1) node [vertex,blue node]                                 (aa) {}
      +(225:1)      node [vertex,blue node]                                 (bb) {}
      +(285:1)      node [vertex,blue node]                                 (cc) {};
      \path [bold edge] (g) -- (h) (j) -- (k) -- (o) (e) -- (aa);
      \path [bold edge,blue edge] (p) -- (o) -- (q) (r) -- (q) -- (s)
      (bb) -- (aa) -- (cc) (h) -- (i) -- (j);
      \path [bold edge,red edge] (g) -- (f) -- (e) -- (t) -- (v) -- (w) (t) -- (u)
      (k) -- (l) -- (m) -- (n);
      \path [thin edge,red edge] (c2) -- (v) (a1) -- (f) (c1) --  (g)
      (b2) -- (l) (d2) -- (m);
      \path [thin edge,blue edge] (b1) -- (h) (a2) -- (i) (d1) -- (j);
    \end{tikzpicture}}
    \caption{The updated colouring in Cases~4.1 and~4.2 of the proof of
    \cref{prop:leg-disjoint-quartets}.  Only $T_2(X')$ and the leaves of $q$
    are shown.  Bold solid edges are in $T_2(X')$.  Thin dashed edges are the
    new edges added to $T_2(X' \cup q_1 \cup q_2 \cup q_3)$ (in Case~4.1) or
    to $T_2(X' \cup q_1 \cup q_2)$ (in Case~4.2).}
  \end{figure}

  \paragraph{Case 4.2: A side with an \markj{adjacent} quartet and an \markj{adjacent} leaf.}

  Next suppose that there exists a side $P$ of $T_2(X')$ and two quartets $q_1,
  q_2 \in U$ with $T_1|_{q_1} = a_1b_1|c_1d_1$, $T_1|_{q_2} = a_2b_2|c_2d_2$,
  and such that all leaves of $q_1$ are \markj{adjacent} to $P$ and $a_2$ but
  not $b_2$ is \markj{adjacent} to~$P$.  Assume w.l.o.g.\ that $T_2|_{q_1} =
  a_1c_1|b_1d_1$, and suppose we walk along $P$ such that $a_1$ and $c_1$ appear
  before $b_1$ and~$d_1$.  Assume further that $a_2$ occurs after $a_1$ and
  $c_1$ along~$P$.  (The other case is symmetric using $b_1$ and $d_1$ in place
  of $a_1$ and~$c_1$.) Then let $\bar f'$ be the parsimonious extension of $\bar
  f$ to $T_2(X' \cup q_1 \cup q_2)$.  We add $q_1$ and $q_2$ to $Q'$ and set
  $\bar f = \bar f'$.  Assume that the colour of all internal vertices of $P$ is
  red.  Then we change the colour of all vertices of $q_1$ and $q_2$ that occur
  after $a_1$ and $c_1$ along $P$ to blue, and we colour all internal vertices
  of $P$ that belong to paths between these leaves blue.  See
  \cref{fig:ldiq-case-4.2}.  This increases $\Delta_{\bar f}(T_2(X'))$ by at
  most~2.  At the same time, we obtain $\beta_{\bar f}(q_1) = 2$ and
  $\beta_{\bar f}(q_2) \ge 1$ (as $a_2$ changes colour but $b_2$ does not).
  Thus, $\beta_{\bar f}(Q') $ increases by at least~$3$.  Overall, $\beta_{\bar
  f}(Q') - \Delta_{\bar f}(T_2(X'))$ increases by at least~1.

  \begin{figure}[b]
    \centering
    \subcaptionbox{$c_1, c_2$ before $b_1, b_2$\label{fig:ldiq-case-4.3.1}}{\begin{tikzpicture}
      \path    node [vertex]                                           (e)  {}
      ++(0:1)  node [vertex,blue node]                                 (f)  {}
      +(270:1) node [vertex,blue node,label=270:{\vphantom{$d$}$a_1$}] (a1) {}
      ++(0:1)  node [vertex,blue node]                                 (g)  {}
      +(270:1) node [vertex,blue node,label=270:{\vphantom{$d$}$c_1$}] (c1) {}
      ++(0:1)  node [vertex,blue node]                                 (h)  {}
      +(270:1) node [vertex,blue node,label=270:{\vphantom{$d$}$a_2$}] (a2) {}
      ++(0:1)  node [vertex,blue node]                                 (i)  {}
      +(270:1) node [vertex,blue node,label=270:{\vphantom{$d$}$c_2$}] (c2) {}
      ++(0:1)  node [vertex,red node]                                  (j)  {}
      +(270:1) node [vertex,red node,label=270:{\vphantom{$d$}$b_2$}]  (b2) {}
      ++(0:1)  node [vertex,red node]                                  (k)  {}
      +(270:1) node [vertex,red node,label=270:{\vphantom{$d$}$b_1$}]  (b1) {}
      ++(0:1)  node [vertex,red node]                                  (l)  {}
      +(270:1) node [vertex,red node,label=270:{\vphantom{$d$}$d_1$}]  (d1) {}
      ++(0:1)  node [vertex,red node]                                  (m)  {}
      +(270:1) node [vertex,red node,label=270:{\vphantom{$d$}$d_2$}]  (d2) {}
      ++(0:1)  node [vertex]                                           (n)  {};
      \begin{scope}[on background layer]
        \path [subtree] (e.center) -- +(150:1) -- +(210:1) -- cycle;
        \path [subtree] (n.center) -- +(30:1) -- +(330:1) -- cycle;
      \end{scope}
      \path [bold edge] (e) -- (f) (i) -- (j) (m) -- (n);
      \path [bold edge,red edge] (j) -- (k) -- (l) -- (m);
      \path [bold edge,blue edge] (f) -- (g) -- (h) -- (i);
      \path [thin edge,red edge] (b2) -- (j) (b1) -- (k) (d1) -- (l) (d2) -- (m);
      \path [thin edge,blue edge] (a1) -- (f) (c1) -- (g) (a2) -- (h) (c2) -- (i);
    \end{tikzpicture}}\\[2\bigskipamount]
    \subcaptionbox{$a_2$ before $b_1$\label{fig:ldiq-case-4.3.2}}{\begin{tikzpicture}
      \path    node [vertex]                                           (e)  {}
      ++(0:1)  node [vertex,red node]                                  (f)  {}
      +(270:1) node [vertex,red node,label=270:{\vphantom{$d$}$a_1$}]  (a1) {}
      ++(0:1)  node [vertex,red node]                                  (g)  {}
      +(270:1) node [vertex,red node,label=270:{\vphantom{$d$}$c_1$}]  (c1) {}
      ++(0:1)  node [vertex,red node]                                  (h)  {}
      +(270:1) node [vertex,red node,label=270:{\vphantom{$d$}$a_2$}]  (a2) {}
      ++(0:1)  node [vertex,blue node]                                 (i)  {}
      +(270:1) node [vertex,blue node,label=270:{\vphantom{$d$}$b_1$}] (b1) {}
      ++(0:1)  node [vertex,blue node]                                 (j)  {}
      +(270:1) node [vertex,blue node,label=270:{\vphantom{$d$}$d_1$}] (d1) {}
      ++(0:1)  node [vertex,blue node]                                 (k)  {}
      +(270:1) node [vertex,blue node,label=270:{\vphantom{$d$}$c_2$}] (c2) {}
      ++(0:1)  node [vertex,blue node]                                 (l)  {}
      +(270:1) node [vertex,blue node,label=270:{\vphantom{$d$}$b_2$}] (b2) {}
      ++(0:1)  node [vertex,blue node]                                 (m)  {}
      +(270:1) node [vertex,blue node,label=270:{\vphantom{$d$}$d_2$}] (d2) {}
      ++(0:1)  node [vertex]                                           (n)  {};
      \begin{scope}[on background layer]
        \path [subtree] (e.center) -- +(150:1) -- +(210:1) -- cycle;
        \path [subtree] (n.center) -- +(30:1) -- +(330:1) -- cycle;
      \end{scope}
      \path [bold edge] (e) -- (f) (h) -- (i) (m) -- (n);
      \path [bold edge,red edge] (f) -- (g) -- (h);
      \path [bold edge,blue edge] (i) -- (j) -- (k) -- (l) -- (m);
      \path [thin edge,red edge] (a1) -- (f) (c1) -- (g) (a2) -- (h);
      \path [thin edge,blue edge] (b1) -- (i) (d1) -- (j) (c2) -- (k)
      (b2) -- (l) (d2) -- (m);
    \end{tikzpicture}}\\[2\bigskipamount]
    \subcaptionbox{$a_2, d_1$ between $b_1$ and $c_2$\label{fig:ldiq-case-4.3.3}}{\begin{tikzpicture}
      \path    node [vertex]                                           (e)  {}
      ++(0:1)  node [vertex,red node]                                  (f)  {}
      +(270:1) node [vertex,red node,label=270:{\vphantom{$d$}$a_1$}]  (a1) {}
      ++(0:1)  node [vertex,red node]                                  (g)  {}
      +(270:1) node [vertex,red node,label=270:{\vphantom{$d$}$c_1$}]  (c1) {}
      ++(0:1)  node [vertex,blue node]                                 (h)  {}
      +(270:1) node [vertex,blue node,label=270:{\vphantom{$d$}$b_1$}] (b1) {}
      ++(0:1)  node [vertex,blue node]                                 (i)  {}
      +(270:1) node [vertex,blue node,label=270:{\vphantom{$d$}$a_2$}] (a2) {}
      ++(0:1)  node [vertex,blue node]                                 (j)  {}
      +(270:1) node [vertex,blue node,label=270:{\vphantom{$d$}$d_1$}] (d1) {}
      ++(0:1)  node [vertex,blue node]                                 (k)  {}
      +(270:1) node [vertex,blue node,label=270:{\vphantom{$d$}$c_2$}] (c2) {}
      ++(0:1)  node [vertex,red node]                                  (l)  {}
      +(270:1) node [vertex,red node,label=270:{\vphantom{$d$}$b_2$}]  (b2) {}
      ++(0:1)  node [vertex,red node]                                  (m)  {}
      +(270:1) node [vertex,red node,label=270:{\vphantom{$d$}$d_2$}]  (d2) {}
      ++(0:1)  node [vertex]                                           (n)  {};
      \begin{scope}[on background layer]
        \path [subtree] (e.center) -- +(150:1) -- +(210:1) -- cycle;
        \path [subtree] (n.center) -- +(30:1) -- +(330:1) -- cycle;
      \end{scope}
      \path [bold edge] (e) -- (f) (g) -- (h) (k) -- (l) (m) -- (n);
      \path [bold edge,red edge] (f) -- (g) (l) -- (m);
      \path [bold edge,blue edge] (h) -- (i) -- (j) -- (k);
      \path [thin edge,red edge] (a1) -- (f) (c1) -- (g) (b2) -- (l) (d2) -- (m);
      \path [thin edge,blue edge] (b1) -- (h) (a2) -- (i) (d1) -- (j) (c2) -- (k);
    \end{tikzpicture}}
    \caption{The updated colouring in Case~4.3 of the proof of
    \cref{prop:leg-disjoint-quartets}.  Only $T_2(X')$ and the leaves of $q$
    are shown.  Bold solid edges are in $T_2(X')$.  Thin dashed edges are the
    new edges added to $T_2(X' \cup q_1 \cup q_2)$.}
  \end{figure}

  \paragraph{Case 4.3: A side with two \markj{adjacent} quartets.}

  The final case we consider is when there are two quartets $q_1, q_2 \in U$
  such that all leaves of $q_1$ and $q_2$ are \markj{adjacent} to the same side
  $P$ of $T_2(X')$.  Assume w.l.o.g., that $T_1|_{q_1} = a_1b_1|c_1d_1$,
  $T_1|_{q_2} = a_2b_2|c_2d_2$, $T_2|_{q_1} = a_1c_1|b_1d_1$, and $T_2|_{q_2} =
  a_2c_2|b_2d_2$.  Assume further that the leaves of $q_1$ occur in the order
  $a_1, c_1, b_1, d_1$ along $P$ and, following $P$ in the same direction, the
  leaves of $q_2$ occur in the order $a_2, c_2, b_2, d_2$ along $P$.  (If $c_1$
  occurs before $a_1$, then we may swap the roles of $a_1$ and $c_1$ in the
  argument that follows; similarly for the pairs $(b_1,d_1), (a_2,c_2),
  (b_2,d_2)$.)

  If both $c_1$ and $c_2$ occur before both $b_1$ and $b_2$ along~$P$, then let
  $\bar f'$ be the parsimonious extension of $\bar f$ to $T_2(X' \cup q_1 \cup
  q_2)$.  We add $q_1$ and $q_2$ to $Q'$ and set $\bar f = \bar f'$.  Assume
  that the colour of all internal vertices of $P$ is red.  We change the colour
  of $a_1, c_1, a_2, c_2$ to blue and also change the colour of all vertices on
  the paths between these four leaves to blue.  See \cref{fig:ldiq-case-4.3.1}.
  This increases $\Delta_{\bar f}(T_2(X'))$ by at most 2 and ensures that
  $\beta_{\bar f}(q_1) = \beta_{\bar f}(q_2) = 2$.  Thus, $\beta_{\bar f}(Q') -
  \Delta_{\bar f}(T_2(X'))$ increases by at least~2.

  If $c_1$ and $c_2$ do not both occur before $b_1$ and $b_2$, then the four
  leaves $b_1, c_1, b_2, c_2$ must occur in the order $c_1,b_1,c_2,b_2$ or
  $c_2,b_2,c_1,b_1$ along~$P$. Assume that the order is $c_1,b_1,c_2,b_2$ (the
  other case is symmetric).  Then observe that $a_1$ occurs before $c_1$ and
  $d_2$ occurs after~$b_2$.  \markj{Thus, these six leaves occur in the order
  $a_1,c_1, b_1, c_2, b_2, d_2$.} We distinguish the possible positions of the
  two leaves $a_2$ and $d_1$ and in each case update $\bar f$ so that
  $\beta_{\bar f}(Q')  - \Delta_{\bar f}(T_2(X'))$ increases by at least~1.  In
  each case, the starting point is the parsimonious extension $\bar f'$ of $\bar
  f$ to $T_2(X' \cup q_1 \cup q_2)$.  We assume that $\bar f'$ colours all
  vertices on $P$ and all leaves of $q_1$ and $q_2$ red.

  If $a_2$ occurs before $b_1$, then we change the colours of $b_1$, $d_1$,
  $c_2$, $b_2$, and $d_2$ and the colours of all vertices on the paths between
  them in $T_2$ to blue.  See \cref{fig:ldiq-case-4.3.2}.  This ensures that
  $\beta_{\bar f}(q_1) = 2$ and $\beta_{\bar f}(q_2) = 1$.  Thus, $\beta_{\bar
  f}(Q')$ increases by~$3$.  At the same time, we introduce at most two mutation
  edges into~$P$, so $\beta_{\bar f}(Q') - \Delta_{\bar f}(T_2(X'))$ increases
  by at least~1.

  The case when $d_1$ occurs after $c_2$ is analogous to the case when $a_2$
  occurs before~$b_1$. \markj{We colour $a_1,c_1,b_1,a_2,c_2$, and all vertices
  on the paths between them blue.} \nz{This ensures that $\beta_{\bar f}(q_1) =
  1$ and $\beta_{\bar f}(q_2) = 2$ and introduces at most two mutation
  edges into~$P$. Thus, $\beta_{\bar f}(Q') - \Delta_{\bar f}(T_2(X'))$
  increases by at least~1.}

  This leaves the case when both $a_2$ and $d_1$ occur between $b_1$ and~$c_2$.
  In this case, we change the colour of $b_1, d_1, a_2, c_2$, and of all
  vertices on the paths between them in $T_2$ to blue.  See
  \cref{fig:ldiq-case-4.3.3}.  This ensures that $\beta_{\bar f}(q_1) =
  \beta_{\bar f}(q_2) = 2$ and $\Delta_{\bar f}(T_2(X'))$ increases by~2.  Thus,
  $\beta_{\bar f}(Q')  - \Delta_{\bar f}(T_2(X'))$ increases by~2.

  \bigskip

  Once none of these cases is applicable, we obtain a subset $Q' \subseteq Q$
  and a colouring $\bar f$ of $T_2(X')$ such that $\beta_{\bar f}(Q')  -
  \Delta_{\bar f}(T_2(X')) \ge \frac{|Q'|}{3}$.  It remains to prove that $|Q'|
  \ge \frac{|Q|}{9}$.  Since $Q = Q' \cup U$, this follows if we can prove that
  $|U| \le 8|Q'|$.

  Consider any quartet $q \in U$ with $T_1|_q = ab|cd$.  As argued before, once
  none of Cases 1--3 applies, either all leaves of $q$ are \markj{adjacent} to
  one side of $T_2(X')$, or w.l.o.g., $a$ and $b$ are \markj{adjacent} to
  different sides of $T_2(X')$, because $q$ is incompatible.  We partition $U$
  into two subsets $U_1$ and $U_2$, containing the quartets in $U$ whose leaves
  are all \markj{adjacent} to the same side and those whose leaves are
  \markj{adjacent} to at least two sides, respectively.

  Now we charge the quartets in $U$ to the sides of $T_2(X')$.  We charge each
  quartet $q \in U_1$ to the side of $T_2(X')$ to which the leaves of $q$ are
  \markj{adjacent}.  We charge each quartet $q \in U_2$ to the \emph{two} sides
  of $T_2(X')$ to which $a$ and $b$ are \markj{adjacent}.

  Since Case~4.3 does not apply to the quartets in $U$, there is no side that is
  charged for more than one quartet in $U_1$.  Since Case~4.1 does not apply,
  there is no side that is charged for more than two quartets in $U_2$.  Since
  Case~4.2 does not apply, there is no side that is charged for a quartet in
  $U_1$ and for at least one quartet in $U_2$.  Thus, every side of $T_2(X')$
  that is charged for any quartet is charged for one quartet in $U_1$ or for at
  most two quartets in $U_2$.  Since every quartet in $U_1$ is charged to one
  side of $T_2(X')$, and every quartet in $U_2$ is charged to two sides of
  $T_2(X')$, the number of sides of $T_2(X')$ is thus at least $|U_1| + |U_2| =
  |U|$.  On the other hand, since $T_2(X')$ has $|X'| = 4|Q'|$ leaves, it has at
  most $2|X'| = 8|Q'|$ sides.  Thus, $|U| \le 8|Q'|$.  This finishes the proof
  that $|Q'| \ge \frac{|Q|}{9}$ and thus the proof of the proposition.
\end{proof}

\subsection{Finding Leg-Disjoint Incompatible Quartets}

\label{sec:finding-quartets}


It remains to find a set of leg-disjoint incompatible quartets of size at least
$\frac{\dtbr(T_1, T_2)}{2(\lg n + 1)}$, where $n = |X|$.  \markj{In combination
with \cref{prop:leg-disjoint-quartets}, this implies that $\dmp^t(T_1,T_2)\geq
\frac{\dtbr(T_1, T_2)}{54(\lg n + 1)}$, as claimed in \cref{thm:lower-bound}.}
To do this, we use an ILP formulation of the unrooted MAF problem by Van Wersch
et al.\ \cite{werschReflectionsKernelizingComputing2020}.  For a pair of trees
$(T_1, T_2)$ on $X$, let $Q$ be the set of incompatible quartets of $T_1$ and
$T_2$.  For a quartet $q \in Q$, let $\L(q)$ be the set of edges of $T_1$ that
belong to the legs of~$q$.  Van Wersch et al.\ proved that the following ILP
expresses the unrooted MAF problem, where $E_1$ is the set of edges of $T_1$ and
$x_e \in \{0, 1\}$ indicates whether we include $e$ in a set of edges we cut to
obtain an AF of $(T_1, T_2)$:

\begin{equation}
  \begin{gathered}
    \text{Minimize}\ \sum_{e \in E_1} x_e\\
    \begin{aligned}
      \text{s.t.}\ \sum_{e \in \L(q)} x_e &\ge 1 && \forall q \in Q\\
      x_e &\in \{0, 1\} && \forall e \in E_1.
    \end{aligned}
  \end{gathered}
  \label{eq:primal}
\end{equation}

The constraints express that we obtain an AF of $(T_1, T_2)$ by cutting a subset
of edges in $T_1$ that contains at least one edge in $\L(q)$ for every
incompatible quartet $q \in Q$.  For the remainder of this section, we say that
an edge set $E'$ \emph{hits} a quartet $q \in Q$ if $E'$ contains at least one
edge in $\L(q)$.  The objective function expresses the goal to cut as few edges
as possible, to obtain an MAF\@.  Recall that the number of edges cut to produce
a MAF of $T_1$ and $T_2$ is exactly $\dtbr(T_1, T_2)$, so the objective function
value of any feasible solution of \cref{eq:primal} is an upper bound on
$\dtbr(T_1, T_2)$.

Interestingly, the integral version of the dual of this LP corresponds
to choosing a subset of quartets from $Q$ that are pairwise leg-disjoint:

\begin{equation}
  \begin{gathered}
    \text{Maximize}\ \sum_{q \in Q} y_q\\
    \begin{aligned}
      \text{s.t.}\ \sum_{q \in Q: e \in \L(q)} y_q &\le 1 && \forall e \in E_1\\
      y_q &\in \{0, 1\} && \forall q \in Q.
    \end{aligned}
  \end{gathered}
  \label{eq:dual}
\end{equation}

Indeed, the variable $y_q$ for each quartet $q \in Q$ indicates whether it is
chosen. The constraints in \cref{eq:dual} ensure that no edge of $T_1$ is
included in the legs of more than one chosen quartet, that is, all quartets are
leg-disjoint.  This observation motivates our approach in the proof of the
following proposition.

\begin{prop}
  \label{prop:finding-leg-disjoint-quartets}
  For any pair of trees $T_1$ and $T_2$ on $X$, there exists a set $Q'$ of pairwise
  leg-disjoint incompatible quartets such that $|Q'|\geq
  \frac{\dtbr(T_1,T_2)}{2(\lg n + 1)}$, where $n = |X|$.
\end{prop}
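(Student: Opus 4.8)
The plan is to turn the LP-duality between \cref{eq:primal} and \cref{eq:dual} into a combinatorial primal--dual argument. Recall that a $0/1$ vector $x$ is feasible for \cref{eq:primal} exactly when the edge set $E' = \{e : x_e = 1\}$ hits the legs of every incompatible quartet, which by Van Wersch et al.\ means that cutting $E'$ yields an agreement forest; hence every feasible $E'$ satisfies $|E'| \ge \dtbr(T_1, T_2)$, since the minimum number of cuts over all agreement forests is exactly $\dtbr(T_1, T_2)$. Dually, a $0/1$ vector $y$ is feasible for \cref{eq:dual} exactly when the chosen quartets are pairwise leg-disjoint. Thus it suffices to construct a feasible cover $E'$ together with a feasible (leg-disjoint) packing $Q'$ satisfying $|E'| \le 2(\lg n + 1)\,|Q'|$: this immediately gives $\dtbr(T_1, T_2) \le |E'| \le 2(\lg n + 1)|Q'|$, hence $|Q'| \ge \dtbr(T_1, T_2)/(2(\lg n + 1))$, which is exactly the claim (and, via \cref{prop:leg-disjoint-quartets}, exactly what \cref{thm:lower-bound} needs).

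First I would run the following primal--dual process. Starting from $Q' = \emptyset$ and $E' = \emptyset$, I repeatedly pick an incompatible quartet $q$ that is not yet hit by $E'$ and whose legs in $T_1$ are still edge-disjoint from the legs of every quartet already in $Q'$; I add $q$ to $Q'$ and add to $E'$ a small set of edges that hit $\L(q)$. The process stops once $E'$ hits every incompatible quartet, at which point $E'$ is a feasible cover by construction, and $Q'$ is a leg-disjoint packing by the disjointness test imposed on each insertion. The entire difficulty is to keep the number of edges charged to $E'$ per quartet added to $Q'$ bounded by $2(\lg n + 1)$, rather than by the (possibly $\Omega(n)$) length of a leg.

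The key device for bounding this charge is a balanced decomposition of $T_1$. After rooting $T_1$, I would use a heavy-path (or recursive edge-separator) decomposition of depth $O(\lg n)$, whose defining property is that every leaf-to-leaf path of $T_1$ --- split at its lowest common ancestor into two descending paths --- meets only $O(\lg n)$ \emph{canonical} pieces and crosses only $O(\lg n)$ separator edges. Each leg of a quartet is such a path, so when I add $q$ to $Q'$ I only place into $E'$ the at most $\lg n + 1$ separator edges met by each of its two legs, for a total charge of at most $2(\lg n + 1)$ edges per quartet. This is precisely the mechanism that prevents a single long leg from forcing a large cover while contributing only one quartet to the packing: long legs are cheap to ``guard'' with $O(\lg n)$ canonical edges.

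The main obstacle is the reconciliation step at termination: I must show that when no further quartet can be added, $E'$ genuinely hits \emph{every} incompatible quartet, even though $E'$ contains only $O(\lg n)$ separator edges per packed quartet rather than entire legs. Concretely, a quartet $q$ that is rejected either is already hit by $E'$ or shares a leg edge with some $q' \in Q'$; the hard part is to arrange the decomposition-based cover so that in the latter case a canonical separator edge guarding that shared edge already lies in $E'$. Establishing this requires using the interplay between leg-disjointness in $T_1$ and the depth-$\lg n$ decomposition carefully, and it is the crux of the proof. A secondary point worth verifying is that the construction produces an \emph{integral} leg-disjoint packing directly, so that no fractional dual solution ever has to be rounded and no loss beyond the $2(\lg n + 1)$ factor is incurred.
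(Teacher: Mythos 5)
Your primal--dual framing is exactly right and matches the paper's: a $0/1$-feasible solution of \cref{eq:primal} is an edge set $E'$ hitting every incompatible quartet (hence $|E'| \ge \dtbr(T_1,T_2)$), a $0/1$-feasible solution of \cref{eq:dual} is a leg-disjoint packing $Q'$, and it suffices to construct both with $|E'| \le 2(\lg n + 1)\,|Q'|$. But the mechanism you propose for bounding the charge --- guarding each leg of a packed quartet by only the $O(\lg n)$ separator (light) edges it crosses in a static heavy-path or balanced-separator decomposition --- does not produce a feasible cover, and the ``reconciliation step at termination'' that you defer is precisely where the argument breaks, not a detail to verify. Concretely, two legs can intersect only in heavy edges: if a packed quartet's leg runs along a heavy-path segment, a later incompatible quartet's leg can share a sub-segment of that heavy path while entering and leaving it through light edges that lie on neither leg of the packed quartet. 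Such a quartet is neither hit by $E'$ nor addable to $Q'$ (it fails your leg-disjointness test), so the process stalls with unhit quartets and an infeasible cover. No choice of balanced decomposition repairs this: feasibility demands an edge of $\L(q)$ in $E'$ for \emph{every} incompatible $q$, and a long leg admits $\Omega(n)$ essentially independent positions at which later legs may overlap it, which a static $O(\lg n)$-edge guard cannot dominate.

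The paper closes exactly this gap with a dynamic extremal choice rather than a decomposition. It greedily picks the unhit quartet $q = ab|cd$ minimizing $\bigl|X_a^{E',q} \cup X_b^{E',q}\bigr|$ (leaves reachable without crossing $E'$ or the joint), breaking ties by the length of $\tilde P_{cd}$ in the current forest $F_1$; it then \emph{fully} guards the $cd$-leg, cutting one edge in $P_e$ for each $e \in \tilde P_{cd}$, while guarding the $ab$-leg with only the two joint edges $e_a, e_b$ --- there, leg-disjointness is enforced not by cut edges but by the minimality of $\bigl|X_a \cup X_b\bigr|$ (\cref{lem:quartets-leg-disjoint}). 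The logarithm is then supplied by a separate structural lemma (\cref{lem:few-edges}), not by decomposition depth: among the unhit quartets with legs $ab|\cdot$, there always exists one with $\bigl|\tilde P_{c'd'}\bigr| \le 2\lg n$, proved by choosing $c', d'$ as nearest leaves in the degree-$3$ forest (nearest-leaf distance is logarithmic in component size) together with an exchange argument showing $ab|c'd'$ is still incompatible and still unhit. Unless you can supply substitutes for these two ingredients --- a reason rejected quartets are already hit, and a guaranteed ``short'' quartet to pack --- your outline remains a plan with its crux, which you yourself flag, unproven.
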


\begin{proof}
  Our goal is to find feasible solutions $\hat x$ and $\hat y$ of the ILPs
  \labelcref{eq:primal,eq:dual} such that $\sum_{e \in E_1} \hat x_e \le 2(\lg n
  + 1) \cdot \sum_{q \in Q} \hat y_q$.  In other words, we want to find a subset
  $E' = \{ e \in E_1 \mid \hat x_e = 1 \}$ that hits all quartets in $Q$ and a
  subset $Q' = \{ q \in Q \mid \hat y_q = 1 \}$ of leg-disjoint quartets in $Q$
  such that $|E'| \le |Q'| \cdot 2(\lg n + 1)$.  As $|E'|$ gives an upper
  bound on $\dtbr(T_1,T_2)$, this implies that $|Q'|\geq
  \frac{\dtbr(T_1,T_2)}{2(\lg n + 1)}$, as required.

  To describe the choice of quartets in $Q'$ and edges in $E'$, we need a bit of
  notation.  Let $E'$ be some set of edges in $T_1$ that we have selected at
  some point in the algorithm, and consider a quartet $q \in Q$ that is not hit
  by $E'$.  Throughout this section, we refer to such a quartet $q$ with $T_1|_q
  = ab|cd$ as the quartet $ab|cd$, we use $P_{ab}$ to denote the leg with
  endpoints $a$ and $b$, and we use $P_{cd}$ to denote the leg with endpoints
  $c$ and~$d$. $u_{ab}$ and $u_{cd}$ are the joints of $q$ included in $P_{ab}$
  and~$P_{cd}$, respectively.  Let $X_a$ be the set of all leaves reachable from
  $a$ via paths in $T_1$ that do not include $u_{ab}$ nor any edges in~$E'$.  We
  define sets $X_b$, $X_c$, and $X_d$ analogously.  Let $e_a$ be the first edge
  on the path from $u_{ab}$ to $a$ in $T_1$, and let $e_b$ be the first edge on
  the path from $u_{ab}$ to $b$ in $T_1$.  These definitions are illustrated in
  \cref{fig:exampleEmbeddedQuartet}.  Note that the sets $X_a, X_b, X_c, X_d$
  depend on the choice of $E'$ as well as $q$; when we need to specify $E'$ or
  $q$ (for instance, when a leaf is part of two different quartets under
  consideration or we refer to the states of $E'$ before and after an update),
  we will denote these sets by $X_a^{E'}, X_b^{E'}, X_c^{E'}, X_d^{E'}$ or
  $X_a^q, X_b^q, X_c^q, X_d^q$, or  $X_a^{E',q}, X_b^{E',q}, X_c^{E',q},
  X_d^{E',q}$ when we need to specify both. We use $F_1$ to denote the forest
  obtained from $T_1$ by cutting the edges in $E'$, suppressing degree-$2$
  vertices, and deleting unlabelled vertices of degree less than~$2$.  When it
  is necessary to specify the set of edges $E'$ cut to obtain $F_1$, we refer to
  $F_1$ as $F_1^{E'}$.  Every edge $e \in F_1$ corresponds to a path between its
  endpoints in $T_1$.  We refer to this path as $P_e$.  For two vertices $a$ and
  $b$ in the same connected component of $F_1$, we use $\tilde P_{ab}$ to refer
  to the path from $a$ to $b$ in~$F_1$.  Note that this implies that $P_{ab} =
  \bigcup_{e \in \tilde P_{ab}} P_e$.

  \begin{figure}[t]
    \centering
    \begin{tikzpicture}
      \path             node [vertex,red node,label=185:$u_{ab}$]                     (uab) {}
      ++(15:1)          node [vertex,red node]                                        (e)   {}
      ++(345:1)         node [vertex,red node]                                        (f)   {}
      +(270:1)          node [vertex]                                                 (k)   {}
      ++(15:1)          node [vertex,red node]                                        (g)   {}
      +(90:1)           node [vertex]                                                 (l)   {}
      ++(345:1)         node [vertex,red node,label={[yshift=3pt]355:$u_{cd}$}]       (ucd) {}
      (e) ++(90:1)      node [vertex]                                                 (h)   {}
      +(120:1)          node [vertex]                                                 (i)   {}
      +(60:1)           node [vertex]                                                 (j)   {}
      (uab) ++(135:1)   node [vertex,red node]                                        (m)   {}
      +(60:1)           node [vertex]                                                 (p)   {}
      ++(165:1)         node [vertex,in subtree,red node]                             (n)   {}
      +(135:1)          node [vertex,in subtree,red node,label=135:$a$]               (a)   {}
      +(240:1)          node [vertex,in subtree]                                      (o)   {}
      (uab) ++(255:1.2) node [vertex,red node]                                        (q)   {}
      +(330:1)          node [vertex]                                                 (s)   {}
      ++(225:1)         node [vertex,in subtree,red node]                             (r)   {}
      +(255:1)          node [vertex,in subtree,red node,label={[xshift=2pt]265:$b$}] (b)   {}
      ++(165:1)         node [vertex,in subtree]                                      (t)   {}
      +(135:1)          node [vertex]                                                 (u)   {}
      +(195:1)          node [vertex,in subtree]                                      (v)   {}
      (ucd) ++(60:1.25) node [vertex,in subtree,red node]                             (aa)  {}
      ++(30:1)          node [vertex,in subtree,red node]                             (bb)  {}
      +(60:1)           node [vertex,in subtree,red node,label=70:$c$]                (c)   {}
      +(315:1)          node [vertex,in subtree]                                      (cc)  {}
      (aa) ++(120:1)    node [vertex,in subtree]                                      (dd)  {}
      +(150:1)          node [vertex,in subtree]                                      (ee)  {}
      +(90:1)           node [vertex,in subtree]                                      (ff)  {}
      (ucd) ++(300:1)   node [vertex,red node]                                        (gg)  {}
      +(225:1)          node [vertex]                                                 (jj)  {}
      ++(330:1)         node [vertex,in subtree,red node]                             (hh)  {}
      +(30:1)           node [vertex,in subtree]                                      (ll)  {}
      ++(300:1)         node [vertex,in subtree,red node]                             (ii)  {}
      +(240:1)          node [vertex,in subtree]                                      (kk)  {}
      +(330:1)          node [vertex,in subtree,red node,label={[yshift=2pt]355:$d$}] (d)   {};
      \path [bold edge,red edge]
      (uab) -- (e) (f) -- (g)
      (uab) to node [above right,yshift=-2pt,xshift=-2pt] {$e_a$} (m) -- (n) -- (a)
      (uab) to node [right,yshift=-2pt] {$e_b$} (q) -- (r) -- (b)
      (ucd) to node [above left,xshift=2pt,yshift=-2pt,pos=0.25] {$e_c$} (aa) -- (bb) -- (c)
      (ucd) to node [below left,xshift=2pt,yshift=2pt] {$e_d$} (gg) -- (hh) -- (ii) -- (d);
      \path [bold edge,red edge,densely dashed] (e) -- (f) (g) -- (ucd);
      \path [thin edge] (m) -- (p) (q) -- (s) (t) -- (u) (gg) -- (jj);
      \path [thin edge,solid] (n) -- (o) (r) -- (t) -- (v) (e) -- (h) (i) -- (h) -- (j)
      (f) -- (k) (g) -- (l) (bb) -- (cc) (aa) -- (dd) (ee) -- (dd) -- (ff)
      (ii) -- (kk) (hh) -- (ll);
      \begin{scope}[overlay]
        \path [name path=helpers]
        let \p1 = ($(a.center) - (n.center)$),
            \p2 = ($(o.center) - (a.center)$),
            \p3 = ($(n.center) - (o.center)$)
        in  (n) -- +(\y1,-\x1)
            (a) -- +(\y1,-\x1)
            (a) -- +(\y2,-\x2)
            (o) -- +(\y2,-\x2)
            (o) -- +(\y3,-\x3)
            (n) -- +(\y3,-\x3);
        \path [name path=a circle] (a.center) circle [radius=6mm];
        \path [name path=n circle] (n.center) circle [radius=6mm];
        \path [name path=o circle] (o.center) circle [radius=6mm];
        \path [name intersections={of=a circle and helpers}]
        coordinate (ar) at (intersection-1)
        coordinate (al) at (intersection-2);
        \path [name intersections={of=n circle and helpers}]
        coordinate (nr) at (intersection-1)
        coordinate (nb) at (intersection-2);
        \path [name intersections={of=o circle and helpers}]
        coordinate (ol) at (intersection-1)
        coordinate (ob) at (intersection-2);
      \end{scope}
      \begin{scope}[on background layer]
        \path [subtree]
        let \p1 = ($(ar) - (a.center)$),
            \p2 = ($(al) - (a.center)$),
            \p3 = ($(ol) - (o.center)$),
            \p4 = ($(ob) - (o.center)$),
            \p5 = ($(nb) - (n.center)$),
            \p6 = ($(nr) - (n.center)$),
            \n1 = {atan2(\y1,\x1)},
            \n2 = {atan2(\y2,\x2)+360},
            \n3 = {atan2(\y3,\x3)},
            \n4 = {atan2(\y4,\x4)},
            \n5 = {atan2(\y5,\x5)},
            \n6 = {atan2(\y6,\x6)}
        in  (ol) arc [start angle=\n3,end angle=\n4,radius=6mm] -- 
            (nb) arc [start angle=\n5,end angle=\n6,radius=6mm] --
            (ar) arc [start angle=\n1,end angle=\n2,radius=6mm] -- cycle;
      \end{scope}
      \begin{scope}[overlay]
        \path [name path=helpers]
        let \p1 = ($(v.center) - (t.center)$),
            \p2 = ($(t.center) - (r.center)$),
            \p3 = ($(b.center) - (v.center)$),
            \p4 = ($(r.center) - (b.center)$)
        in  (r) -- +(\y4,-\x4)
            (b) -- +(\y4,-\x4)
            (r) -- +(\y2,-\x2)
            (t) -- +(\y2,-\x2)
            (t) -- +(\y1,-\x1)
            (v) -- +(\y1,-\x1)
            (v) -- +(\y3,-\x3)
            (b) -- +(\y3,-\x3);
        \path [name path=b circle] (b.center) circle [radius=6mm];
        \path [name path=r circle] (r.center) circle [radius=6mm];
        \path [name path=t circle] (t.center) circle [radius=6mm];
        \path [name path=v circle] (v.center) circle [radius=6mm];
        \path [name intersections={of=b circle and helpers}]
        coordinate (bl) at (intersection-1)
        coordinate (br) at (intersection-2);
        \path [name intersections={of=r circle and helpers}]
        coordinate (rt) at (intersection-1)
        coordinate (rr) at (intersection-2);
        \path [name intersections={of=t circle and helpers}]
        coordinate (tr) at (intersection-1)
        coordinate (tl) at (intersection-2);
        \path [name intersections={of=v circle and helpers}]
        coordinate (vt) at (intersection-1)
        coordinate (vl) at (intersection-2);
      \end{scope}
      \begin{scope}[on background layer]
        \path [subtree]
        let \p1 = ($(br) - (b.center)$),
            \p2 = ($(bl) - (b.center)$),
            \p3 = ($(rr) - (r.center)$),
            \p4 = ($(rt) - (r.center)$),
            \p5 = ($(tr) - (t.center)$),
            \p6 = ($(tl) - (t.center)$),
            \p7 = ($(vt) - (v.center)$),
            \p8 = ($(vl) - (v.center)$),
            \n1 = {atan2(\y1,\x1)},
            \n2 = {atan2(\y2,\x2)},
            \n3 = {atan2(\y3,\x3)},
            \n4 = {atan2(\y4,\x4)},
            \n5 = {atan2(\y5,\x5)},
            \n6 = {atan2(\y6,\x6)},
            \n7 = {atan2(\y7,\x7)},
            \n8 = {atan2(\y8,\x8)+360}
        in  (bl) arc [start angle=\n2,end angle=\n1,radius=6mm] --
            (rr) arc [start angle=\n3,end angle=\n4,radius=6mm] --
            (tr) arc [start angle=\n5,end angle=\n6,radius=6mm] --
            (vt) arc [start angle=\n7,end angle=\n8,radius=6mm] -- cycle;
      \end{scope}
      \begin{scope}[overlay]
        \path [name path=helpers]
        let \p1 = ($(cc.center) - (aa.center)$),
            \p2 = ($(c.center) - (cc.center)$),
            \p3 = ($(ff.center) - (c.center)$),
            \p4 = ($(ee.center) - (ff.center)$),
            \p5 = ($(aa.center) - (ee.center)$)
        in  (aa) -- +(\y1,-\x1)
            (cc) -- +(\y1,-\x1)
            (cc) -- +(\y2,-\x2)
            (c)  -- +(\y2,-\x2)
            (c)  -- +(\y3,-\x3)
            (ff) -- +(\y3,-\x3)
            (ff) -- +(\y4,-\x4)
            (ee) -- +(\y4,-\x4)
            (ee) -- +(\y5,-\x5)
            (aa) -- +(\y5,-\x5);
        \path [name path=aa circle] (aa.center) circle [radius=6mm];
        \path [name path=cc circle] (cc.center) circle [radius=6mm];
        \path [name path=c circle]  (c.center)  circle [radius=6mm];
        \path [name path=ee circle] (ee.center) circle [radius=6mm];
        \path [name path=ff circle] (ff.center) circle [radius=6mm];
        \path [name intersections={of=aa circle and helpers}]
        coordinate (aab) at (intersection-1)
        coordinate (aal) at (intersection-2);
        \path [name intersections={of=cc circle and helpers}]
        coordinate (ccr) at (intersection-1)
        coordinate (ccb) at (intersection-2);
        \path [name intersections={of=c circle and helpers}]
        coordinate (cr) at (intersection-1)
        coordinate (ct) at (intersection-2);
        \path [name intersections={of=ff circle and helpers}]
        coordinate (ffr) at (intersection-1)
        coordinate (ffl) at (intersection-2);
        \path [name intersections={of=ee circle and helpers}]
        coordinate (eet) at (intersection-1)
        coordinate (eel) at (intersection-2);
      \end{scope}
      \begin{scope}[on background layer]
        \path [subtree]
        let \p1    = ($(aal) - (aa.center)$),
            \p2    = ($(aab) - (aa.center)$),
            \p3    = ($(ccb) - (cc.center)$),
            \p4    = ($(ccr) - (cc.center)$),
            \p5    = ($(cr)  - (c.center)$),
            \p6    = ($(ct)  - (c.center)$),
            \p7    = ($(ffr) - (ff.center)$),
            \p8    = ($(ffl) - (ff.center)$),
            \p9    = ($(eet) - (ee.center)$),
            \p{10} = ($(eel) - (ee.center)$),
            \n1    = {atan2(\y1,\x1)},
            \n2    = {atan2(\y2,\x2)},
            \n3    = {atan2(\y3,\x3)},
            \n4    = {atan2(\y4,\x4)},
            \n5    = {atan2(\y5,\x5)},
            \n6    = {atan2(\y6,\x6)},
            \n7    = {atan2(\y7,\x7)},
            \n8    = {atan2(\y8,\x8)},
            \n9    = {atan2(\y9,\x9)},
            \n{10} = {atan2(\y{10},\x{10})+360}
        in (aal) arc [start angle=\n1,end angle=\n2,radius=6mm] --
           (ccb) arc [start angle=\n3,end angle=\n4,radius=6mm] --
           (cr)  arc [start angle=\n5,end angle=\n6,radius=6mm] --
           (ffr) arc [start angle=\n7,end angle=\n8,radius=6mm] --
           (eet) arc [start angle=\n9,end angle=\n{10},radius=6mm] -- cycle;
      \end{scope}
      \begin{scope}[overlay]
        \path [name path=helpers]
        let \p1 = ($(kk.center) - (hh.center)$),
            \p2 = ($(d.center) - (kk.center)$),
            \p3 = ($(ll.center) - (d.center)$),
            \p4 = ($(hh.center) - (ll.center)$)
        in  (hh) -- +(\y1,-\x1)
            (kk) -- +(\y1,-\x1)
            (kk) -- +(\y2,-\x2)
            (d)  -- +(\y2,-\x2)
            (d)  -- +(\y3,-\x3)
            (ll) -- +(\y3,-\x3)
            (ll) -- +(\y4,-\x4)
            (hh) -- +(\y4,-\x4);
        \path [name path=hh circle] (hh.center) circle [radius=6mm];
        \path [name path=kk circle] (kk.center) circle [radius=6mm];
        \path [name path=d circle]  (d.center)  circle [radius=6mm];
        \path [name path=ll circle] (ll.center) circle [radius=6mm];
        \path [name intersections={of=hh circle and helpers}]
        coordinate (hhl) at (intersection-1)
        coordinate (hht) at (intersection-2);
        \path [name intersections={of=kk circle and helpers}]
        coordinate (kkl) at (intersection-1)
        coordinate (kkb) at (intersection-2);
        \path [name intersections={of=d circle and helpers}]
        coordinate (dr) at (intersection-1)
        coordinate (db) at (intersection-2);
        \path [name intersections={of=ll circle and helpers}]
        coordinate (llr) at (intersection-1)
        coordinate (llt) at (intersection-2);
      \end{scope}
      \begin{scope}[on background layer]
        \path [subtree]
        let \p1 = ($(kkl) - (kk.center)$),
            \p2 = ($(kkb) - (kk.center)$),
            \p3 = ($(db)  - (d.center)$),
            \p4 = ($(dr)  - (d.center)$),
            \p5 = ($(llr) - (ll.center)$),
            \p6 = ($(llt) - (ll.center)$),
            \p7 = ($(hht) - (hh.center)$),
            \p8 = ($(hhl) - (hh.center)$),
            \n1 = {atan2(\y1,\x1)},
            \n2 = {atan2(\y2,\x2)+360},
            \n3 = {atan2(\y3,\x3)},
            \n4 = {atan2(\y4,\x4)},
            \n5 = {atan2(\y5,\x5)},
            \n6 = {atan2(\y6,\x6)},
            \n7 = {atan2(\y7,\x7)},
            \n8 = {atan2(\y8,\x8)}
        in (kkl) arc [start angle=\n1,end angle=\n2,radius=6mm] --
           (db)  arc [start angle=\n3,end angle=\n4,radius=6mm] --
           (llr) arc [start angle=\n5,end angle=\n6,radius=6mm] --
           (hht) arc [start angle=\n7,end angle=\n8,radius=6mm] -- cycle;
      \end{scope}
    \end{tikzpicture}
    \caption{Example showing the vertices $u_{ab}, u_{cd}$, edges
    $e_a,e_b,e_c,e_d$, and sets $X_a,X_b,X_c,X_d$, for some quartet $q = ab|cd$,
    and a subset of edges $E'$ of $T_1$ that does not hit the legs of~$q$.
    Dashed edges are edges in~$E'$.  Bold red edges are those in~$T_1(q)$.}
    \label{fig:exampleEmbeddedQuartet}
  \end{figure}
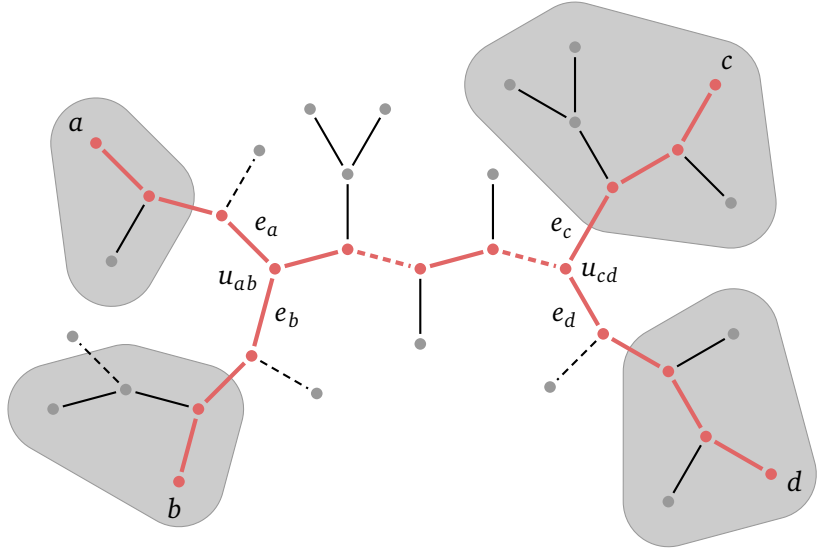

  To construct $E'$ and $Q'$, we use a simple greedy algorithm: We start by
  setting $E' = \emptyset$ and $Q' = \emptyset$.  We maintain the invariant that
  $|E'| \le |Q'| \cdot 2(\lg n + 1)$ and that $Q' \subseteq Q$ is a subset of
  leg-disjoint quartets.  Thus, once $E'$ hits all quartets in $Q$, we obtain
  the desired sets $Q'$ and $E'$.

  As long as there exists a quartet $q \in Q$ that is not being hit by $E'$ yet,
  we choose such a quartet $q$, add a subset of the edges in $\L(q)$ to $E'$,
  and add $q$ to $Q'$.  We choose the quartet $q = ab|cd$ that minimizes
  $\Bigl|X_a^{E',q} \cup X_b^{E',q}\Bigr|$, where we assume that
  $\Bigl|X_a^{E',q} \cup X_b^{E',q}\Bigr| \le \Bigl|X_c^{E',q} \cup
  X_d^{E',q}\Bigr|$. Among all such quartets, we prefer one that minimizes
  $\bigl|\tilde P_{cd}\bigr|$.  If ties remain, we choose an arbitrary quartet
  from the remaining quartets.  When adding $q$ to $Q'$, we also add the edges
  $e_a$ and $e_b$ to $E'$, and we add an arbitrary edge in $P_e$ to $E'$, for
  every edge $e \in \tilde P_{cd}$.  Since this ensures that $E'$ now hits at
  least one more quartet than before, namely~$q$, $E'$ will eventually hit all
  quartets and the algorithm terminates.  At that point, we obviously have that

  \begin{obs}
    \label{obs:hit-all-quartets}
    The set of edges $E'$ computed by the algorithm hits all quartets in $Q$.
  \end{obs}

  The next lemma shows that

  \begin{lem}
    \label{lem:quartets-leg-disjoint}
    The set of quartets $Q'$ computed by the algorithm is leg-disjoint.
  \end{lem}

  \begin{proof}
    Assume that there exist two quartets $q_1 = a_1b_1|c_1d_1$ and $q_2 =
    a_2b_2|c_2d_2$ in $Q'$ whose legs share an edge~$e$.  Since we add quartets
    to $Q'$ one at a time, we can assume that we add $q_1$ to $Q'$ before we
    add~$q_2$.  Let $E_1'$ be the set of edges in $E'$ at the beginning of the
    iteration that adds $q_1$ to $Q'$, and let $F_1 = F_1^{E_1'}$.  Let $E_2'$
    be the set of edges in $E'$ at the beginning of the iteration that adds
    $q_2$ to $Q'$.  Then $E_1' \subseteq E_2'$ and neither $E_1'$ nor $E_2'$
    hits~$q_2$.  Assume w.l.o.g.\ that $e$ belongs to the path $P_{a_2b_2}$.
    (We do not use the fact that \markj{$\Bigl|X_{a_2}^{E_1',q_2} \cup
    X_{b_2}^{E_1',q_2}\Bigr| \le \Bigl|X_{c_2}^{E_1',q_2} \cup
    X_{d_2}^{E_1',q_2}\Bigr|$,} nor will we consider any of the edges that are
    added to $E'$ as a result of adding $q_2$ to $Q'$, so the case when $e \in
    P_{c_2d_2}$ is symmetric.)
  
    First suppose that  $e \in P_{a_1b_1}$.  Then $a_1, b_1, a_2, b_2$ are all
    in the same connected component of $T_1 - E_1'$, and at least one of $a_2,
    b_2$ is in $X_{a_1}^{E_1',q_1}$ or $X_{b_1}^{E_1',q_1}$. Suppose
    w.l.o.g.\ that $a_2 \in X_{a_1}^{E_1',q_1}$.  If $b_2 \notin
    X_{a_1}^{E_1',q_1}$, then $P_{a_2b_2}$ includes the edge $e_{a_1}$, which
    belongs to $E_2'$, so $E_2'$ hits $q_2$, a contradiction.  If $b_2 \in
    X_{a_1}^{E_1',q_1}$, then $X_{a_2}^{E_1',q_2} \cup X_{b_2}^{E_1',q_2}
    \subseteq X_{a_1}^{E_1',q_1} \subset X_{a_1}^{E_1',q_1} \cup
    X_{b_1}^{E_1',q_1}$.  Thus, $q_1$ does not minimize $\Bigl|X_{a_1}^{E_1'}
    \cup X_{b_1}^{E_1'}\Bigr|$ among the quartets not hit by~$E_1'$ whether
    \markj{$\Bigl|X_{a_2}^{E'_1,q_2} \cup X_{b_2}^{E'_1,q_2}\Bigr| \le
    \Bigl|X_{c_2}^{E'_1,q_2} \cup X_{d_2}^{E'_1,q_2}\Bigr|$ or
    $\Bigl|X_{a_2}^{E'_1,q_2} \cup X_{b_2}^{E'_1,q_2}\Bigr| >
    \Bigl|X_{c_2}^{E'_1,q_2} \cup X_{d_2}^{E'_1,q_2}\Bigr|$.} 
    This contradicts the choice of~$q_1$.
  
    Now suppose that $e \in P_{c_1d_1}$.  This in turn implies that $e \in P_f$,
    for some edge $f \in \tilde P_{c_1d_1}$.  Thus, $P_{a_2b_2}$ and $P_f$
    overlap in $e$.  Let $x$ and $y$ be the endpoints of $P_f$.  Then the path
    from any internal vertex of $P_f$ to any leaf of $T_1$ must include $x$, $y$
    or an edge in $E_1'$ because otherwise, $P_f$ would not correspond to a
    single edge $f$ in $F_1$.  Since $P_{a_2b_2}$ is not hit by $E_1'$, this
    implies that $P_f$ is in fact a subpath of $P_{a_2b_2}$.  Since $E_2'$
    includes an edge in~$P_f$, it therefore hits $P_{a_2b_2}$, and thus $q_2$,
    again a contradiction.
  \end{proof}

  Since each iteration that adds a quartet $q = ab|cd$ to $Q'$ adds
  $\bigl|\tilde P_{cd}\bigr| + 2$ edges to $E'$, it suffices to prove the
  following lemma to prove the invariant that $|E'| \le |Q'| \cdot 2(\lg n +
  1)$:

  \begin{lem}
    \label{lem:few-edges}
    The quartet $q = ab|cd$ chosen in each iteration satisfies $\bigl|\tilde
    P_{cd}\bigr| \le 2\lg n$.
  \end{lem}

  \begin{proof}
    Recall that $q = ab|cd$ is chosen from among the quartets in $Q$ not
    hit by $E'$ such that $\bigl|X_{a}^{q} \cup X_{d}^{q}\bigr|$ is minimized,
    and among these, such that $\bigl|\tilde P_{cd}\bigr|$ is minimized. Thus,
    it suffices to show that there exist $c',d'$ such that $q' = ab|c'd'
    \in Q$, $q'$ is not hit by $E'$, and $\bigl|\tilde P_{c'd'}\bigr| \le 2 \lg
    n$.

    To prove this, we first choose a quartet $q\dprime = ab|c\dprime d\dprime
    \in Q$ not hit by $E'$ and such that $\Bigl|X_{c\dprime}^{q\dprime}
    \cup X_{d\dprime}^{q\dprime}\Bigr|$ is minimized.  Such a quartet
    exists because $q \in Q$ is not hit by $E'$.  Since $q\dprime \in Q$, we
    can assume that $T_2|_{q\dprime} = ac\dprime|bd\dprime$ (the case when
    $T_2|_{q\dprime} = ad\dprime|bc\dprime$ is symmetric).  This is shown in
    \cref{fig:incompatible-quartet}.  We choose $c' \in
    X_{c\dprime}^{q\dprime}$ such that the distance from $c'$ to $d\dprime$ in
    $F_1$ is minimized, and we choose $d' \in X_{d\dprime}^{q\dprime}$
    such that the distance from $d'$ to $c\dprime$ in $F_1$ is minimized.  We
    claim that $q' = ab|c'd' \in Q$, that $E'$ does not hit $q'$, and that
    $\bigl|\tilde P_{c'd'}\bigr| \le 2\lg n$.

    The fact that $E'$ does not hit $q'$ can be seen as follows: Observe that
    the edges that form the legs of $q'$ belong to the legs of $q\dprime$, to
    the path from $c\dprime$ to $c'$ or to the path from $d\dprime$ to $d'$.  By
    the choice of~$q\dprime$, $E'$~does not hit the legs of $q\dprime$.  By the
    definition of $X_{c\dprime}$ and $X_{d\dprime}$, $E'$ does not hit the paths
    from $c\dprime$ to $c'$ and from $d\dprime$ to $d'$ either.  Thus, $E'$ does
    not hit~$q'$.

    The bound on the length of the path $\tilde P_{c'd'}$ is also easy to prove:
    Since $c'$ is the leaf in $X_{c\dprime}^{q\dprime}$ closest to
    $d\dprime$ in $F_1$, $d'$ is the leaf in $X_{d\dprime}^{q\dprime}$
    closest to $c\dprime$ in $F_1$, all internal vertices of $F_1$ have degree
    $3$, and $\Bigl|X_{c\dprime}^{q\dprime}\Bigr| +
    \Bigl|X_{d\dprime}^{q\dprime}\Bigr| \le n$, we have $\bigl|\tilde
    P_{c'd'}\bigr| \le 2 + \lg \Bigl|X_{c\dprime}^{q\dprime}\Bigr| + \lg
    \Bigl|X_{d\dprime}^{q\dprime}\Bigr| \le 2\lg n$.

    It remains to prove that $q' \in Q$.  To this end, let $q_{c'} =
    ab|c'c\dprime $ and $q_{d'} = ab|d'd\dprime$.  Observe that $q_{c'} \notin
    Q$ and $q_{d'} \notin Q$.  Indeed, $E'$ does not hit the path $P_{ab}$ nor
    the path $P_{c'c\dprime}$.  Thus, if $q_{c'} \in Q$, then we have
    $\Bigl|X^{q\dprime}_{c\dprime} \cup X^{q\dprime}_{d\dprime}\Bigr| \le
    \Bigl|X^{q_{c'}}_{c\dprime} \cup X^{q_{c'}}_{c'}\Bigr|$, by the choice of
    $q\dprime$.  However, $X^{q_{c'}}_{c\dprime} \cup X^{q_{c'}}_{c'} \subseteq
    X^{q\dprime}_{c\dprime} \subset X^{q\dprime}_{c\dprime} \cup
    X^{q\dprime}_{d\dprime}$, so $\Bigl|X^{q_{c'}}_{c\dprime} \cup
    X^{q_{c'}}_{c'}\Bigr|< \Bigl|X^{q\dprime}_{c\dprime} \cup
    X^{q\dprime}_{d\dprime}\Bigr|$, a contradiction.  The argument that $q_{d'}
    \notin Q$ is similar.

    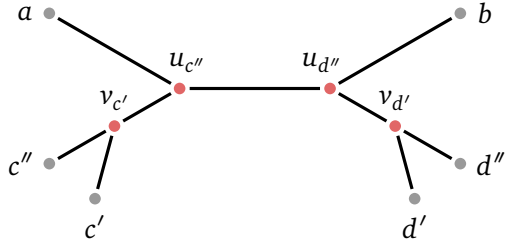
\begin{figure}[t]
      \centering
      \begin{tikzpicture}
        \path
        node [leaf,label={[xshift=3pt]above:$u_{c\dprime}$}] (uc) {}
        +(150:2) node [vertex,label=left:$a$] (a) {}
        +(210:1) node [leaf,label={above:$v_{c'}$}] (vc) {}
        +(210:2) node [vertex,label=left:$c\dprime$] (c) {}
        ++(0:2) node [leaf,label={[xshift=-3pt]above:$u_{d\dprime}$}] (ud) {}
        +(30:2) node [vertex,label=right:$b$] (b) {}
        +(-30:1) node [leaf,label={above:$v_{d'}$}] (vd) {}
        +(-30:2) node [vertex,label=right:$d\dprime$] (d) {}
        (vc) +(255:1) node [vertex,label=below:$c'$] (cc) {}
        (vd) +(285:1) node [vertex,label=below:$d'$] (dd) {};
        \draw [edge]
        (c) -- (vc) -- (cc)
        (d) -- (vd) -- (dd)
        (a) -- (uc) -- (vc)
        (b) -- (ud) -- (vd)
        (uc) -- (ud);
      \end{tikzpicture}
      \caption{Illustration of the proof of \cref{lem:few-edges}.  Only the tree
        $T_2$ is shown.}
      \label{fig:incompatible-quartet}
    \end{figure}

    Now, let $u_{c\dprime}$ be the degree-$3$ vertex in $T_2(\{a,b,c\dprime\})$,
    let $u_{d\dprime}$ be the degree-$3$ vertex in $T_2(\{a,b,d\dprime\})$, let
    $v_{c'}$ be the vertex in $T_2(\{a,b,c\dprime\})$ closest to $c'$, and let
    $v_{d'}$ be the vertex in $T_2(\{a,b,d\dprime\})$ closest to $d'$.  If
    $v_{c'}$ is not an internal vertex of the path from $u_{c\dprime}$ to
    $c\dprime$, then the paths from $a$ to $b$ and from $c'$ to $c\dprime$ in
    $T_2$ overlap, so $q_{c'} = ab|c'c\dprime \in Q$, a contradiction.  By an
    analogous argument, $v_{d'}$ must be an internal vertex of the path from
    $u_{d'}$ to $d\dprime$.  As illustrated in \cref{fig:incompatible-quartet},
    this implies that $T_2$ contains the quartet $ac'|bd'$, that is, $q' =
    ab|c'd' \in Q$.  This finishes the proof.
  \end{proof}

  To summarize: Our algorithm produces a set of conflicting quartets $Q'$
  and a set of edges $E'$ in~$T_1$. By~\cref{lem:quartets-leg-disjoint}, the
  quartets of $Q'$ are pairwise leg-disjoint, as required
  by~\cref{prop:finding-leg-disjoint-quartets}.  \cref{lem:few-edges} implies that we
  add at most $2 + 2\lg n$ edges to $E'$ for each quartet added to $Q'$, and
  thus that $|E'| \le |Q'| \cdot 2(\lg n + 1)$. Moreover,
  by~\cref{obs:hit-all-quartets}, the set $E'$ hits all quartets in $Q$, and thus
  $|E'|$ is an upper bound on  $\dtbr(T_1,T_2)$. This implies that $|Q'|\geq
  \frac{|E'|}{2(\lg n + 1)} \geq \frac{\dtbr(T_1,T_2)}{2(\lg n + 1)}$, which
  completes the proof of~\cref{prop:finding-leg-disjoint-quartets}.
\end{proof}

By combining
\cref{prop:leg-disjoint-quartets,prop:finding-leg-disjoint-quartets}, we obtain
that any two trees $T_1$ and $T_2$ on $X$ satisfy  $k \ge \frac{1}{27}\cdot
\frac{\dtbr(T_1,T_2)}{ 2(\lg n + 1)}$, where $n = |X|$ and $k = \dmp^t(T_1,
T_2)$ for any $t \in \trange$, from which it follows that $\dtbr(T_1, T_2) \le
54k(\lg n +   1)$.  This completes the proof of \cref{thm:lower-bound}.

\section{Conclusion}

\label{sec:conclusions}

The central tool developed in this paper is leg-disjoint conflicting quartets.
The relative flexibility of these conflicting quartets, compared to conflicting
quartets that must be pairwise disjoint in both trees, was the key to
establishing a lower bound on $\dmp^t$ in terms of the TBR distance, which
resulted in the near-linear kernel for $\dmp^t$ obtained in this paper.  It
appears promising to approach other problems, such as improved approximation
algorithms for TBR distance, from the angle of leg-disjoint incompatible
quartets.

The main open question is whether $\dmp^t$ admits a linear kernel, or whether
the current logarithmic gap between the linear kernel for $\dmp$ and our $\leo{O(k \lg
k)}$ kernel for $\dmp^t$ reflects a real difference in difficulty between the two
variants of parsimony distance. This also raises a number of related smaller
questions: Is the kernel obtained using cherry reduction and chain reduction in
this paper in fact a linear kernel, that is, is the logarithmic gap merely a
caveat of our analysis? Can we find a larger set of leg-disjoint incompatible
quartets, linear in the TBR distance, to prove that our kernel is indeed a
linear kernel, or is a different technique needed to establish the linear size
of our kernel? If cherry reduction and chain reduction are too weak to produce
a linear kernel for $\dmp^t$, what other techniques exist to produce a smaller
kernel?

\bibliographystyle{abbrv}
\bibliography{main}

\begin{thebibliography}{10}

\bibitem{allenSubtreeTransferOperations2001}
B.~Allen and M.~A. Steel.
\newblock Subtree {{Transfer Operations}} and {{Their Induced Metrics}} on
  {{Evolutionary Trees}}.
\newblock {\em Annals of Combinatorics}, 5:1--15, 2001.

\bibitem{bordewichComputationalComplexityRooted2005}
M.~Bordewich and C.~Semple.
\newblock On the {{Computational Complexity}} of the {{Rooted Subtree Prune}}
  and {{Regraft Distance}}.
\newblock {\em Annals of Combinatorics}, 8(4):409--423, Jan. 2005.

\bibitem{bordewichComputingHybridizationNumber2007}
M.~Bordewich and C.~Semple.
\newblock Computing the {{Hybridization Number}} of {{Two Phylogenetic Trees Is
  Fixed-Parameter Tractable}}.
\newblock {\em IEEE/ACM Transactions on Computational Biology and
  Bioinformatics}, 4(3):458--466, July 2007.

\bibitem{bruen2008parsimony}
T.~C. Bruen and D.~Bryant.
\newblock Parsimony via consensus.
\newblock {\em Systematic biology}, 57(2):251--256, 2008.

\bibitem{fischerMaximumParsimonyDistance2016}
M.~Fischer and S.~Kelk.
\newblock On the {{Maximum Parsimony Distance Between Phylogenetic Trees}}.
\newblock {\em Annals of Combinatorics}, 20(1):87--113, Mar. 2016.

\bibitem{fitchDefiningCourseEvolution1971}
W.~M. Fitch.
\newblock Toward {{Defining}} the {{Course}} of {{Evolution}}: {{Minimum
  Change}} for a {{Specific Tree Topology}}.
\newblock {\em Systematic Biology}, 20:406--416, 1971.

\bibitem{hartiganMinimumMutationFits1973}
J.~A. Hartigan.
\newblock Minimum {{Mutation Fits}} to a {{Given Tree}}.
\newblock {\em Biometrics}, 29:53, 1973.

\bibitem{jonesMaximumParsimonyDistance2021}
M.~Jones, S.~Kelk, and L.~Stougie.
\newblock Maximum parsimony distance on phylogenetic trees: {{A}} linear kernel
  and constant factor approximation algorithm.
\newblock {\em Journal of Computer and System Sciences}, 117:165--181, May
  2021.

\bibitem{kelkComplexityComputingMP2017}
S.~Kelk and M.~Fischer.
\newblock On the {{Complexity}} of {{Computing MP Distance Between Binary
  Phylogenetic Trees}}.
\newblock {\em Annals of Combinatorics}, 21(4):573--604, Dec. 2017.

\bibitem{kelkReductionRulesMaximum2016}
S.~Kelk, M.~Fischer, V.~Moulton, and T.~Wu.
\newblock Reduction rules for the maximum parsimony distance on phylogenetic
  trees.
\newblock {\em Theoretical Computer Science}, 646:1--15, 2016.

\bibitem{kelkTightKernelComputing2019}
S.~M. Kelk and S.~Linz.
\newblock A tight kernel for computing the tree bisection and reconnection
  distance between two phylogenetic trees.
\newblock {\em SIAM J. Discret. Math.}, 33:1556--1574, 2019.

\bibitem{kelkNoteConvexCharacters2017}
S.~M. Kelk and G.~Stamoulis.
\newblock A note on convex characters, {{Fibonacci}} numbers and
  exponential-time algorithms.
\newblock {\em Adv. Appl. Math.}, 84:34--46, 2017.

\bibitem{werschReflectionsKernelizingComputing2020}
R.~van Wersch, S.~Kelk, S.~Linz, and G.~Stamoulis.
\newblock Reflections on kernelizing and computing unrooted agreement forests.
\newblock {\em CoRR}, abs/2012.07354, 2020.

\end{thebibliography}

\end{document}